\newtheorem{theorem}{Theorem}[section]
\newtheorem{lemma}[theorem]{Lemma}
\newtheorem{corollary}[theorem]{Corollary}
\theoremstyle{definition}
\theoremstyle{definition}
\newtheorem{remark}[theorem]{Remark}
\numberwithin{equation}{section}
\renewcommand{\le}{\leqslant}
\renewcommand{\ge}{\geqslant}
\renewcommand{\leq}{\leqslant}
\renewcommand{\geq}{\geqslant}
\newcommand{\tmod}{\ \mathsf{mod}\ }
\newcommand{\real}{\mathbb{R}}
\newcommand{\ints}{\mathbb{Z}}
\newcommand{\natu}{\mathbb{N}}
\newcommand{\ftwo}{\mathbb{F}_2}
\newcommand{\bszero}{\boldsymbol{0}}
\newcommand{\bsone}{\boldsymbol{1}}
\newcommand{\rd}{\,\mathrm{d}}
\newcommand{\mrd}{\mathrm{d}}
\newcommand{\dunif}{\mathbb{U}}
\newcommand{\rqmc}{\mathrm{RQMC}}
\newcommand{\simiid}{\stackrel{\mathrm{iid}}\sim}
\newcommand{\e}{\mathbb{E}}
\newcommand{\var}{\mathrm{Var}}
\newcommand{\one}{\mathbf{1}}
\newcommand{\norm}[1]{\Vert #1 \Vert_1}
\newcommand{\Lnorm}{\norm{L}}
\newcommand{\Lknorm}{\norm{L_k}}
\newcommand{\kinfty}{{L_k(1)}}
\newcommand{\supC}{\overline{C}}
\newcommand{\giv}{\!\mid\!}
\newcommand{\tran}{\mathsf{T}}
\newcommand{\phe}{\phantom{=}}
\newcommand{\lset}{\mathcal{L}}
\newcommand{\walk}{\mathrm{wal}_k}
\newcommand{\med}{\mathrm{med}}
\newcommand{\daw}{\mathrm{Daw}}
\newcommand{\cp}{\mathcal{P}}
\begin{document}

\title{Super-polynomial accuracy of one dimensional randomized nets
using the median-of-means}


\author{Zexin Pan\\Stanford University
  \and
  Art B. Owen\\Stanford University}




 \date{July 2022}


\maketitle
\begin{abstract}
Let $f$ be analytic on $[0,1]$ with $|f^{(k)}(1/2)|\leq A\alpha^kk!$ for some constants $A$ and $\alpha<2$
and all $k\ge1$.
We show that the median estimate of $\mu=\int_0^1f(x)\rd x$ under random linear scrambling
with $n=2^m$ points converges at the rate $O(n^{-c\log(n)})$ for any
$c< 3\log(2)/\pi^2\approx 0.21$. We also get a super-polynomial convergence rate for
the sample median of $2k-1$ random linearly scrambled estimates, when $k/m$ is bounded away from zero.
When $f$ has a $p$'th derivative that satisfies a $\lambda$-H\"older condition
then the median-of-means has error $O( n^{-(p+\lambda)+\epsilon})$
for any $\epsilon>0$, if $k\to\infty$ as $m\to\infty$.
The proof techniques use methods from analytic combinatorics that have not
previously been applied to quasi-Monte Carlo methods, most notably an
asymptotic expression from Hardy and Ramanujan on the number of
partitions of a natural number.
\end{abstract}

\maketitle

\section{Introduction}

In this paper we introduce and study a median-of-means approach
to randomized quasi-Monte Carlo (RQMC) sampling.  Specifically, for
$f:[0,1]\to\real$ we let $\hat\mu_r$ for $r=1,\dots,2k-1$
be independent estimates of $\mu=\int_0^1f(x)\rd x$
computed using the random linear scrambling of \cite{mato:1998}
applied to a $(0,m,1)$-net in base $2$ and our estimate of $\mu$
is $\hat \mu^{(k)} = \med( \hat\mu_1,\dots,\hat\mu_{2k-1})$.
We find for some infinitely differentiable integrands,
that this median of means approach converges faster
than any polynomial rate in $n=2^m$.
By this we mean that for some $c>0$ the probability of an error
larger than $n^{-c\log(n)}$ approaches zero as
the number of sampled points $n=2^m\to\infty$.

A key ingredient in the proofs is the formula
by \cite{hard:rama:1918} for the number  $p(n)$
of ways to partition the natural number $n$
into a sum of natural numbers.  Their
formula for this is
$$p(n) \sim\frac1{n4\sqrt{3}}\exp\Bigl( \pi \Bigl(\frac{2n}3\Bigr)^{1/2}\Bigr).
$$
We believe that this use of analytic combinatorics in RQMC is new
and we expect further connections to develop.

There have been several recent results on super-polynomial convergence
for quasi-Monte Carlo (QMC).
\cite{suzu:2017}, working in a weighted space of infinitely differentiable
functions on $[0,1]^d$, proved the existence of digital nets with
worst case error $C(d)\exp( - c(d)\log(n)^2)$.  Under further
conditions on the weights defining the space, a dimension free
worst case error $C\exp(-c\log(n)^p)$ holds for some
$1<p<2$.
\cite{dick:goda:suzu:tosh:2017}
give a construction of a super-polynomially
convergent method.  At a cost of $O(nd\log(n)^2)$ they
use a component-by-component construction to get
dimension-independent super-polynomial convergence
using interlaced polynomial lattice rules.
These are higher order digital nets.  A higher order digital
net can attain an error of $\tilde O(n^{-\alpha})$
when the integrand's mixed partial derivatives of total order
up to the integer $\alpha\ge1$ are all in $L^2[0,1]^d$
\citep{dick:2011}.
Here $\tilde O$ means that logarithmic factors are not shown.
Under scrambling, \cite{dick:2011} shows that the
root mean squared error (RMSE) is $\tilde O(n^{-\alpha-1/2})$.
To obtain super-polynomial convergence
\cite{dick:goda:suzu:tosh:2017}
must let the order of their higher-order digital nets increase with $n$.
The median of means formulation allows one to use ordinary
scrambled Sobol' points though in some uses we must
take a median of a (slowly) growing number $k$ of replicates.
The LatNet builder tool of \cite{lecu:etal:2022} constructs
QMC and RQMC point sets using some random searches.
Those searches seek to optimize a figure of merit (FOM)
that quantifies worst case error over a class of integrands.
For the precise definitions of each FOM, see that paper.
Figure 1 there shows some examples where the median
FOM shows curvature on a log-log scale for dimension $d=6$.
This is consistent with super-polynomial accuracy, though they present
a median FOM instead of the FOM of a median estimate.

The algorithm we study here provides another approach.
We will see that when $f$ is smooth, most of the randomized
net estimates are very close to the true value, for large $m$.
The variance is dominated by
a relatively small number of bad outcomes.  By taking the
median of a number of independent estimates we can
reduce the impact of the few bad outcomes.  Each
RQMC estimate is a mean of function evaluations.
Then our combined estimates are a median-of-means.

Median-of-means algorithms have many uses in
theoretical computer science, though the means
used there have not usually been based on RQMC.
See for example, \cite{jerr:vali:vazi:1986} and \cite{lecu:lera:2020}.
\cite{kunsch2019solvable} present several uses of
the median of means in two stage numerical integration
algorithms and they give further references to the literature.
Since our preprint appeared, there has been further
work on median methods for QMC by \cite{goda:lecu:2022:tr}. They choose rank
one lattice generating vectors completely at random for integration problems in Korobov spaces and they choose polynomial lattice rules randomly for some weighted Sobolov spaces.  By taking the median of a number of such randomly
generated estimates they attain the best convergence rates
possible for the smoothness levels they study and they are able to avoid complicated parameter searches.
 \cite{hofstadler2022consistency} use median of means
 to get some strong laws of large numbers for integration
 methods. \cite{gobe:lera:meti:2022:tr} use median of
 means to get robust RQMC estimates.

An outline of this paper is as follows.
Section~\ref{sec:background} provides some notation
and definitions of the scrambling we use and the
resulting estimates.
One key quantity is a scrambling matrix $M$
with $m$ columns and entries in $\{0,1\}$.
The accuracy of RQMC is limited by phenomena
where for some non-empty $L\subset \natu$, the
rows of $M$ for $\ell\in L$ sum to $\bszero$ in $\ftwo^m$.
Section~\ref{sec:bottleneck} explains this bottleneck
to convergence and Theorem~\ref{thm:errordecomposition} there
writes the RQMC error as a sum of random variables,
one for each problematic subset $L$.
This section is less technical than the later ones with our proofs.
Section~\ref{sec:numerical} has a one dimensional numerical example. We see superlinear convergence
for the median of $11$ RQMC replicates, up to a point. The RMSE reaches an asymptote for a Sobol' sequence
computed to $32$ bits. Switching to a $64$ bit computation, the superlinearity continues to some
higher sample sizes.  There is also a six dimensional example, where the standard deviation of median estimates drops faster than that of mean estimates.
Section~\ref{sec:thepopulationmedian}
studies the population median of scrambled
nets for dimension $d=1$.  This is the median of the distribution
of the RQMC estimate.
Theorem~\ref{thm:convergencerate} establishes
a super-polynomial rate for that quantity
for certain infinitely differentiable functions.
A critical step there is to bound the number
of nonempty subsets $L\subset\natu$ that
have a small value of $\sum_{\ell\in L}\ell$. We do
this using combinatorial results including
the one by \cite{hard:rama:1918}.
The comprehensive reference is \cite{flaj:sedg:2009}.
Theorem~\ref{thm:samplemedian} provides super-polynomial
convergence for the median of $2k-1$ independently
generated RQMC estimates.
Section~\ref{sec:finitediff} considers the case where the $p$'th
derivative of $f$ satisfies a $\lambda$-H\"older condition
for $0<\lambda\le 1$.
Theorem~\ref{thm:Cpconvergencerate} there bounds the
probability that the error is much more than $n^{-p-\lambda}$
with corollaries showing super-polynomial convergence for the population median
and the median of $2k-1$ independent estimates.

We close this section with a few contextual remarks.
In the one dimensional setting, there are already
very accurate integration rules for extremely
smooth integrands \citep{davrab}.  The RQMC method here has
an advantage in being an equally weighted average
of the $n$ function values, instead of
having large weights of both positive and negative signs.
The one dimensional case will take on greater interest if
the findings and proofs in this article can be generalized to
$d\ge1$. The multidimensional example in Section~\ref{sec:numerical}
is therefore encouraging
as are the empirical results in \cite{lecu:etal:2022}.

One of the original motivations for RQMC was to get
error estimates.  It later emerged that randomizing QMC
can also increase accuracy \citep{smoovar}.  Error estimation for a median
of independently sampled means is more complicated than
for a mean of such means.  We can readily get a nonparametric
confidence interval for the population median of the $\hat\mu_r$,
using the binomial distribution
because the true median $\theta$ satisfies
$\Pr( \hat\mu_r<\theta)=1/2$.
However, the quantity of most direct interest is $\e(\hat\mu_r)$
not $\med(\hat\mu_r)$.

We had initially considered the case where instead of
a random linear scramble we had taken
a completely random generator matrix
with all entries independent and identically (IID)  $\dunif\{0,1\}$ random variables.
A similar result holds: the median estimate converges
with super-polynomial accuracy for certain infinitely differentiable
$f$, though of course the bad outcomes can be even
worse.  For instance, there is a $2^{-m^2}$ probability
that the upper $m\times m$ submatrix of the generator matrix
is all zeros. Then all $n=2^m$ RQMC points would lie in
the interval $[0,1/n]$ and the resulting error would generally fail to vanish
as $n\to\infty$.

\section{Notation and background}\label{sec:background}

We study the random linear scrambling of \cite{mato:1998}
including a digital shift, in one dimension. Our focus
is on base $2$ apart from a few remarks later.
For an integrand $f:[0,1]\to\real$ we will
estimate $\mu=\int_0^1f(x)\rd x$ assumed to exist
by $\hat\mu=(1/n)\sum_{i=0}^{n-1}f(x_i)$ for carefully
chosen points $x_i\in[0,1]$.

We use $\natu$ for the set of positive integers.
For $m\in\natu$, we let $[m]$ denote the set $\{1,\ldots,m\}$
and for $n\in\natu$ we let $\ints_n$ denote
the set $\{0,1,\dots,n-1\}$.
We investigate a scrambled digital net of $n=2^m$
points $x_i\in[0,1)$ for $i=0,1,\dots,n-1$.

We will make frequent use of
sets $L\subset\natu$ of finite cardinality.
We write $|L|$ for their cardinality
as well as $\Vert L\Vert_1=\sum_{\ell\in L}\ell$
and some additional notation about these
sets $L$ will be introduced as needed.
For a matrix $M$ we use $M(L,:)$ to
denote the submatrix whose row indices are in $L$
and to extract a single row we
write $M(\ell,:)$ instead of $M(\{\ell\},:)$.

The indicator function of the event $E$ is
sometimes written $\bsone\{E\}$.  This quantity
takes the value $1$ when $E$ holds and $0$ otherwise.

We assume throughout that $C\in\{0,1\}^{m\times m}$
is a nonrandom matrix that is of full rank $m$
over $\ftwo$, that is, it has full rank in arithmetic modulo 2.
This matrix $C$ defines the `unscrambled' version
of our QMC points which will be a $(0,m,1)$-net in base $2$.
For instance $C$ could be the $m\times m$ identity matrix
as it would be for the van der Corput points.

For $i\in\ints_{2^m}$ we let $\vec{i} = (i_1,i_2,\dots,i_m)^\tran$
where $i = i_1+2i_2+4i_3+\cdots+2^{m-1}i_m$.
For $a=\sum_{k=1}^m a_k2^{-k}\in[0,1)$ we let
$\vec{a} = (a_1,a_2,\dots,a_m)^\tran$.
These two definitions intersect only for $\{0\}$
where they both yield $\bszero$.
The representation for $a\in[0,1)$ can be taken
to any finite number $E\ge m$ of bits, that we denote by $\vec{a}[E]$
when we need to specify the precision.
When $a$ has two base $2$ representations, we work with
the one that has finitely many nonzero bits.
The points of the unscrambled net are given by
$a_i\in  \{k/2^m\mid k\in\ints_{2^m}\}\subset  [0,1)$ that satisfy
$$\vec{a}_i=C\vec{i}\quad\text{for $i\in \ints_{2^m}$}$$
so that
$$a_i = \sum_{k=1}^m 2^{-k}a_{ik}
\quad\text{for bits}\quad a_{ik}=\sum_{j=1}^kC_{kj}i_j\tmod 2.
$$
In our presentation below we will omit noting that
bitwise arithmetic is done modulo two, when that is
clear from context.

To scramble the points, we introduce a random
matrix $M\in\{0,1\}^{E\times m}$
for $E\ge m$.
The upper triangular elements
of $M$ are all $0$, the diagonal elements of $M$
are all $1$, and the elements below the diagonal
are IID $\dunif\{0,1\}$.

We also introduce a random digital shift
$D=\sum_{k=1}^\infty 2^{-k}D_k$ with bits ${D}_k$ that are independent $\dunif\{0,1\}$ variables
independent of $M$. Note that $\Pr(0\le D<1)=1$.
The random digital shift serves to make the estimates
$\hat\mu$ unbiased estimates of $\mu$ and
our proofs require that property.

For $i\in\ints_{2^m}$, linearly scrambled points $x_i$
to precision $E$ without a digital shift
are defined by
\begin{equation}\label{eqn:xequalMCiwithoutD}
    \vec{x}_i=\vec{x}_i[E]=M\vec{a}_i
=M C\vec{i}
\end{equation}
for bits
$$a_{ik}=\sum_{j=1}^k C_{kj}i_j
\quad\text{and}\quad
x_{ik}=\sum_{j=1}^k M_{kj}a_{ij}.$$
When we add the random digital shift,
we randomize infinitely many bits.
We define scrambling of precision $E$ to mean
that $x_i$ has bits
\begin{equation}\label{eqn:xequalMCiplusD}
x_{ik} =x_{ik}[E]=
\begin{cases}
\sum_{j=1}^k M_{kj}a_{ij} + D_k,& k\le E\\
D_k, & k>E.
\end{cases}
\end{equation}
We will use the term `random linear scrambling' to refer
to linear scrambling that includes the digital shift.
This usage is common.  Another usage
calls that affine scrambling
with linear scrambling excluding the digital shift.


Let $f:[0,1]\to\real$. We need to specify the precision of our
estimates and to do this, we define
$$
\hat\mu_E = \hat\mu_E(f)=\frac1n\sum_{i=0}^{n-1} f(x_i)
$$
where the bits of $x_i$ are given by~\eqref{eqn:xequalMCiplusD}.
When $f$ is continuous on $[0,1]$, define
$$\hat{\mu}_\infty=\lim_{E\to\infty}\hat{\mu}_E.$$
Later when we replicate these quantities, the replicates
will be denoted $\hat\mu_{E,r}$ and $\hat\mu_{\infty,r}$.
We let $\omega_f(t)$ denote the modulus of continuity of $f$ over $[0,1]$.
Later,
$\omega_f(1)$ will be a convenient shorthand for $\sup_{0\le x\le1}f(x)-\inf_{0\le x\le 1}f(x)$.

\begin{lemma}\label{lem:precisionerror}
For any $M\in\{0,1\}^{\infty\times m}$ and $D\in[0,1)$
$$|\hat\mu_\infty-\hat\mu_E|\leq \omega_f\Bigl(\frac{1}{2^E}\Bigr)$$
where $\hat\mu_E$ is constructed using the first $E\ge m$ rows of $M$.
\end{lemma}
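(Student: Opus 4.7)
The plan is to compare the coordinates $x_i$ at precisions $E$ and $E'$ for an arbitrary $E'\geq E$ and then let $E'\to\infty$. The key observation from~\eqref{eqn:xequalMCiplusD} is that increasing the precision only alters the bits $x_{ik}$ for $k>E$: the first $E$ bits of $x_i[E]$ and $x_i[E']$ are identical, since they are determined by the first $E$ rows of $M$ (plus the shift), which are used in both cases. For $k>E$ the bits $x_{ik}[E]=D_k$ and $x_{ik}[E']$ may differ, but each bit contributes at most $2^{-k}$ to the value of $x_i$, so
\[
|x_i[E']-x_i[E]|\le \sum_{k=E+1}^{\infty}2^{-k}=2^{-E}.
\]

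Next I would invoke the definition of the modulus of continuity: if $|y-z|\le 2^{-E}$ then $|f(y)-f(z)|\le \omega_f(2^{-E})$. Applying this to $y=x_i[E']$ and $z=x_i[E]$ gives $|f(x_i[E'])-f(x_i[E])|\le \omega_f(2^{-E})$ for every $i$, and averaging yields
\[
|\hat\mu_{E'}-\hat\mu_{E}|\le \frac{1}{n}\sum_{i=0}^{n-1}|f(x_i[E'])-f(x_i[E])|\le \omega_f\bigl(2^{-E}\bigr).
\]
Since the right hand side is independent of $E'$, letting $E'\to\infty$ (which is legitimate because the sequence $\hat\mu_{E'}$ is Cauchy by the same inequality, so the limit $\hat\mu_\infty$ exists) gives the claimed bound.

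There is not much of an obstacle here; the only point requiring slight care is to verify that the first $E$ bits are genuinely unchanged when one increases the precision, i.e.\ that the construction~\eqref{eqn:xequalMCiplusD} does not silently modify earlier rows. This follows directly from reading off the two cases of that formula and noting that for $k\le E$ the expression $\sum_{j=1}^k M_{kj}a_{ij}+D_k$ is the same whether the precision is $E$ or $E'\ge E$.
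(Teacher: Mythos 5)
Your argument is correct and is essentially the paper's own proof: both rest on observing that the first $E$ bits of $x_i$ are unchanged when the precision is increased, so $|x_i[E]-x_i[\infty]|\le 2^{-E}$, and then applying the modulus of continuity and averaging. The only cosmetic difference is that you pass through a finite precision $E'$ and take a limit, whereas the paper compares directly with the infinite-precision points.
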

\begin{proof}
Let $x_i[E]$ be $x_i$ under scrambling with precision $E$ and $x_i[\infty]$ be $x_i$ under scrambling in the infinite precision limit.
For any given $M$ and $D$ in random linear scrambling, $x_i[E]$ has the same first $E$ bits as $x_i[\infty]$, so
$$\bigl|x_i[E]-x_i[\infty]\bigr|\leq \sum_{k=E+1}^\infty\frac{1}{2^k}|x_{ik}[E]-x_{ik}[\infty]|\leq \frac{1}{2^E},$$
where $k$ indexes the bits of $x_i[E]$ and $x_i[\infty]$.
Hence
$$|\mu_\infty-\mu_E|\leq \frac{1}{n}\sum_{i=0}^{n-1} |f(x_i[E])-f(x_i[\infty])|\leq \omega_f\Bigl(\frac{1}{2^E}\Bigr).
\qedhere$$

\end{proof}

The main object of our study is the median of $2k-1$ independently sampled replicates of a randomized QMC algorithm on $m$ points. We may take $k$ to be a function of $m$.
We write $k=\Omega(m)$ to mean that $\lim\inf_{m\to\infty}k(m)/m>0$
and similarly $k=\Omega(m^2)$ means that $\liminf_{m\to\infty} k(m)/m^2>0$.  In practice $k$ would be non-decreasing in $m$
though our results do not require this.

\section{A bottleneck in convergence}\label{sec:bottleneck}

It is well known that the variance of $\hat\mu$ under
nested uniform scrambling attains $O(n^{-3})$ convergence when $d=1$
and $f'\in C[0,1]$, a great improvement upon the $O(n^{-1})$ rate of naive Monte Carlo.
Corollary 3.8 of \cite{wiar:lemi:dong:2021} shows that random linear scrambling with a digital shift has the same variance
as nested uniform scrambling for $(0,m,1)$-nets.  Increased smoothness does not improve this
rate outside of trivial settings with zero variance.
Here we give a simple argument to illustrate that limitation.  Understanding such bounds
leads us to an expression for the integration error
below, on which we base our study of medians.

If $n=2^m$ and $M(m+1,:)$ happens to be $\bszero$, then by the relationship $x_{i,m+1}=\sum_{j=1}^{m+1} M_{m+1,j}a_{ij}+D_{m+1}$, we immediately see that $x_{i,m+1}=D_{m+1}$ for all $i$.
Geometrically, this means for each interval $[i/n,(i+1)/n)$, the samples are either all in the left half interval (if $D_{m+1}=0$) or all in the right half interval (if $D_{m+1}=1$). If we assume for simplicity that $D_{m+1}=1$ and the scrambling has precision $m+1$, then each sample is actually uniform on the right half of the interval it lands in and we can approximate the error
by its expectation given that $M(m+1,:)=\bszero$ and $D_{m+1}=1$ as:
\begin{align*}
    \hat{\mu}_{m+1}-\mu&\approx\sum_{i=0}^{n-1}
    \biggl(2\int_{\frac{i+0.5}{n}}^{\frac{i+1}{n}}f(x)\rd x-\int_{\frac{i}{n}}^{\frac{i+1}{n}}f(x)\rd x\biggr) \\
   & \approx \sum_{i=0}^{n-1}\int_{\frac{i+0.5}{n}}^{\frac{i+1}{n}}f\Bigl(\frac{i+0.5}{n}\Bigr)+f'\Bigl(\frac{i+0.5}{n}\Bigr)\Bigl(x-\frac{i+0.5}{n}\Bigr)\rd x\\
   &\phe -\sum_{i=0}^{n-1}\int_{\frac{i}{n}}^{\frac{i+0.5}{n}}f\Bigl(\frac{i+0.5}{n}\Bigr)+f'\Bigl(\frac{i+0.5}{n}\Bigr)\Bigl(x-\frac{i+0.5}{n}\Bigr)\rd x\\
   &=\frac{1}{8n^2}\sum_{i=0}^{n-1} f'\Bigl(\frac{i+0.5}{n}\Bigr).
\end{align*}

If instead $D_{m+1}=0$, then
all the $x_i$ fall in the left half interval and the expected error is like that above, but with the opposite sign. Hence the conditional expectation of $|\hat{\mu}_{\rqmc}-\mu|$ cannot 
be of lower order than $n^{-1}$
when $M(m+1,:)=\bszero$. Because each entry of $M(m+1,:)$ is independently 0 or 1 with equal probability, $\Pr(M(m+1,:)=\bszero)=2^{-m}$ and those rare
outcomes alone make $\var(\hat{\mu}_{\rqmc})$ at least of order $2^{-m}(n^{-1})^2=n^{-3}$.
Theorem~\ref{thm:errordecomposition} below makes the above reasoning rigorous.

The main takeaway is that the rare event $M(m+1,:)=\bszero$ makes a major contribution to the variance. Curious readers may ask what happens if we explicitly avoid the event $M(m+1,:)=\bszero$. This is indeed what is done in the affine striped matrix (ASM) scrambling from \cite{altscram}. For  base 2, the matrix $M$ of ASM scrambling is nonrandom and described by $M_{kj}=1$ for $k\geq j$ and $M_{kj}=0$ for $k<j$. Therefore $M(m+1,:)=\bsone$ and ASM scrambling is able to attain the $\var(\hat\mu)=O(n^{-4})$ convergence rate when $f''$ is bounded on $[0,1)$ \citep[Proposition 3.7]{altscram}.

A similar question arises: can ASM scrambling converge faster than $O(n^{-4})$ under stronger smoothness assumptions? The answer is again no. Assume for simplicity that $D_{m+1}=D_{m+2}=0$ and that the scrambling has precision $E=m+2$. Because $M(m+1,:)=M(m+2,:)$,
$$x_{i,m+1}=\sum_{j=1}^{m+1} M_{m+1,j}a_{ij}=\sum_{j=1}^{m+1} M_{m+2,j}a_{ij}=x_{i,m+2}.$$
Now within  each interval $[i/n,(i+1)/n)$, the sampling is either uniform in the leftmost quarter $[i/n,(i+0.25)/n)$ or uniform in the rightmost quarter $[(i+0.75)/n,(i+1)/n)$.
Suppose without loss of generality that the sampling
for interval $i$ is in the rightmost quarter.
Then as in the analysis of random linear scrambling, we can approximate the integration error over $[i/n,(i+1)/n)$ by
\begin{align*}
    &4\int_{\frac{i+0.75}{n}}^{\frac{i+1}{n}}f(x)\rd x-\int_{\frac{i}{n}}^{\frac{i+1}{n}}f(x)\rd x\\
    &\approx 4\int_{\frac{i+0.75}{n}}^{\frac{i+1}{n}}f\Bigl(\frac{i+0.5}{n}\Bigr)+f'\Bigl(\frac{i+0.5}{n}\Bigr)\Bigl(x-\frac{i+0.5}{n}\Bigr)+\frac{1}{2}f''\Bigl(\frac{i+0.5}{n}\Bigr)\Bigl(x-\frac{i+0.5}{n}\Bigr)^2\rd x\\
   &\phe-\int_{\frac{i}{n}}^{\frac{i+1}{n}}f\Bigl(\frac{i+0.5}{n}\Bigr)+f'\Bigl(\frac{i+0.5}{n}\Bigr)\Bigl(x-\frac{i+0.5}{n}\Bigr)+\frac{1}{2}f''\Bigl(\frac{i+0.5}{n}\Bigr)\Bigl(x-\frac{i+0.5}{n}\Bigr)^2\rd x\\
   &=\frac{3}{8n^2}f'\Bigl(\frac{i+0.5}{n}\Bigr)+\frac{1}{32n^3}f''\Bigl(\frac{i+0.5}{n}\Bigr).
\end{align*}
When sampling for observation $i$ is in the leftmost quarter
the approximate error as above is
$$-\frac{3}{8n^2}f'\Bigl(\frac{i+0.5}{n}\Bigr)+\frac{1}{32n^3}f''\Bigl(\frac{i+0.5}{n}\Bigr).$$

The $f'$ terms each contribute an error of $O(n^{-2})$.
The sign of the $f'$ terms depends on the nonrandom $a_i$.
Carefully chosen $a_i$ could possibly bring cancellation among the $f'$ terms, leaving a total error $o(n^{-2})$ from the $f'$
terms. However, no such cancellation is possible for the $f''$ terms.  Therefore, if we did manage to cancel the
$f'$ terms we would still have an error
\begin{align*}
    \hat{\mu}_{m+2}-\mu
   & \approx \frac{1}{32n^3}\sum_{i=0}^{n-1} f''\Bigl(\frac{i+0.5}{n}\Bigr).
\end{align*}
This implies that $|\hat{\mu}_{\rqmc}-\mu|$ is at least of order $n^{-2}$,
and so $\var(\hat{\mu}_{\rqmc})$ cannot converge faster than $O(n^{-4})$. Notice that in this case $M$ is nonrandom, so we do not need to multiply
that squared error by an event probability like  $2^{-m}$
as we did in the previous example.

One may summarize from the above heuristic reasoning that whenever a set $L$ of rows of $M$
satisfies $\sum_{\ell\in L}M(\ell,:)=\bszero$, there is an associated error of order $2^{-\sum_{\ell\in L}\ell}$. This is indeed true by Theorem~\ref{thm:errordecomposition} below.
Before stating that theorem we introduce some notation.
Let
$$
\lset = \{ L\subset\natu\mid 0<|L|<\infty\}.$$
Each $L\in \lset$ identifies a set of row indices for $M$. Each of these finite non-empty subsets of natural numbers potentially
contributes an error that scales like $2^{-\Lnorm}$
where $\Lnorm = \sum_{\ell\in L}\ell$.
We also use $\norm{D(L)}=\sum_{\ell\in L }D_\ell$.
This quantity will appear as the exponent of $-1$ where
only its value modulo two matters.

\begin{theorem}\label{thm:errordecomposition}
Let $f$ be analytic on $[0,1]$ with $|f^{(k)}(1/2)|\leq A\alpha^kk!$ for some constant $A$, some $\alpha<2$ and all $k\in\natu$.
If $C\in\{0,1\}^{m\times m}$ is nonsingular, then
\begin{align}\label{eqn:errordecomposition}
\hat{\mu}_{\infty}-\mu&=\sum_{L\in \lset}\bsone\Bigl\{\sum_{\ell\in L}M(\ell,:)=\bszero\Bigr\} \, 
S_L(D) \,2^{-\Lnorm} B_L
\intertext{where }
S_L(D) & = \prod_{\ell\in L}(-1)^{D_\ell}\label{def:slofd}
\intertext{and scalars $B_L$ from Appendix \ref{app:proof:thm:errordecomposition} satisfy}
    |B_L|&\leq 6A \bigl(|L|\bigr)!\Bigl(\frac{\alpha/2}{1-\alpha/2}\Bigr)^{|L|}.
\label{eqn:BLbound}
\end{align}
\end{theorem}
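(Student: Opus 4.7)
The plan is to reindex the integer frequencies $k\in\natu$ by their bit-positions $L\in\lset$ and expand $f$ in the one-dimensional Walsh basis. Each $k$ corresponds uniquely to a nonempty finite $L(k)$ via $k=\sum_{\ell\in L(k)}2^{\ell-1}$, and the Walsh function factors as $\mathrm{wal}_{k(L)}(x)=\prod_{\ell\in L}(-1)^{x_\ell}$ where $x_\ell$ is the $\ell$-th binary digit of $x$. Writing $f=\hat f(0)+\sum_{L\in\lset}\hat f(k(L))\mathrm{wal}_{k(L)}$ (convergent at every dyadic rational, hence at every scrambled sample) and using $\hat f(0)=\mu$ reduces the error to
\begin{equation*}
\hat\mu_\infty-\mu=\sum_{L\in\lset}\hat f(k(L))\cdot\frac{1}{n}\sum_{i=0}^{n-1}\mathrm{wal}_{k(L)}(x_i).
\end{equation*}

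I would next evaluate the orbit average of each Walsh function. Plugging \eqref{eqn:xequalMCiplusD} into $\mathrm{wal}_{k(L)}(x_i)$ and separating the digital-shift contribution yields, with $v_L:=\sum_{\ell\in L}M(\ell,:)^\tran \pmod 2$ and $\vec a_i=C\vec i$, the identity $\mathrm{wal}_{k(L)}(x_i)=S_L(D)\,(-1)^{\langle C^\tran v_L,\vec i\rangle}$. Averaging over $i\in\ints_n$ via the character-sum identity $(1/n)\sum_{\vec i\in\ftwo^m}(-1)^{\langle w,\vec i\rangle}=\one\{w=\bszero\}$, together with the invertibility of $C$, produces $S_L(D)\,\one\{\sum_{\ell\in L}M(\ell,:)=\bszero\}$. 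Defining $B_L:=2^{\Lnorm}\hat f(k(L))$ then reproduces \eqref{eqn:errordecomposition} exactly; the interchange of the Walsh sum with the $E\to\infty$ limit defining $\hat\mu_\infty$ will be justified a posteriori by the uniform bound on $B_L$ below.

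To prove \eqref{eqn:BLbound}, I would Taylor-expand $f(x)=\sum_{j\ge 0}c_j(x-1/2)^j$ with $|c_j|\le A\alpha^j$; the $c_0$-term contributes nothing because $\mathrm{wal}_{k(L)}$ is orthogonal to constants for $L\neq\emptyset$. Let $p=|L|$ and $I_{L,j}:=\int_0^1(x-1/2)^j\mathrm{wal}_{k(L)}(x)\,\mrd x$. Performing $p$ successive integrations by parts, one per $\ell\in L$, and using that each Rademacher factor $r_\ell$ admits a piecewise-linear antiderivative of sup-norm at most $2^{-\ell}$ vanishing at $0$ and $1$, shows that all boundary terms drop, that $I_{L,j}=0$ for $j<p$, and that $|I_{L,j}|\le 2^{-\Lnorm}\cdot K(j,p)$ for $j\ge p$ where $K(j,p)$ is a combinatorial factor of the form $p!\binom{j}{p}$ times a power of $1/2$. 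Summing against $|c_j|\le A\alpha^j$ yields a geometric-binomial series in $j$ that converges precisely because $\alpha<2$ and evaluates to a constant multiple of $p!\,(\alpha/(2-\alpha))^p$. After absorbing the $\alpha$-dependent prefactors into the stated numerical constant $6$, this establishes \eqref{eqn:BLbound}.

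The main obstacle is the $p$-fold integration by parts used to bound $I_{L,j}$: one must verify that every Rademacher factor $r_\ell$ in $\mathrm{wal}_{k(L)}$ contributes its own $2^{-\ell}$ to the iterated antiderivative so that the prefactor is the sharp $2^{-\Lnorm}$ (and not merely $2^{-\max L}$ from a crude sup-norm bound on $\mathrm{wal}_{k(L)}$), and that the nested dyadic boundary terms telescope or cancel. The analyticity hypothesis $\alpha<2$ enters only at the summation stage, producing the finite factor $\alpha/(2-\alpha)$; everything else, including the interchange of the infinite Walsh sum with the precision limit, is routine once $|B_L|$ is uniformly controlled.
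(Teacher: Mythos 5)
Your decomposition of the error is exactly the paper's: Walsh expansion of $f$, the character-sum identity over $\{\vec a_i\}=\{0,1\}^m$ (using nonsingularity of $C$), and the definition $B_L=2^{\Lnorm}\hat f(k(L))$ all match equation~\eqref{eqn:proofoftheorem1} and the surrounding argument in Appendix~\ref{app:proof:thm:errordecomposition}. The coefficient bound is where you diverge, and where your sketch has a real soft spot. You propose ``$p$ successive integrations by parts, one per $\ell\in L$,'' with each Rademacher factor contributing its own antiderivative of sup-norm $2^{-\ell}$. As literally described this does not iterate: after the first integration by parts the integrand is $(x-1/2)^{j-1}$ times $\int_0^x \mathrm{wal}_{k(L)}(t)\rd t$, and that antiderivative is a triangle wave of amplitude $2^{-\max L}$ modulating the lower-order Walsh function $\mathrm{wal}_{k(L\setminus\{\max L\})}$ --- it is not a polynomial times a Walsh product, so the ``next'' Rademacher factor cannot simply be integrated in turn. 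The resolution, which is the actual content of the paper's Lemma~\ref{lem:chibound}, is to re-expand that triangle wave in the Walsh basis; this is the recursion~\eqref{eqn:recursion} (Dick--Pillichshammer's (14.5)), which trades one integration by parts for an infinite family of higher-frequency moment integrals with geometrically decaying weights $2^{-c}$, and the bound $2^{-\Vert L_k\Vert_{1,u}}$ then follows by induction on $u$. You correctly identify this as the main obstacle, but calling the cancellation of boundary terms and the per-step extraction of $2^{-\ell}$ ``routine'' understates it: this induction is the technical heart of the theorem.

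Two smaller points. First, the paper does not Taylor-expand around $1/2$ on all of $[0,1]$; it splits the integral at $1/2$, rescales each half to $[0,1]$, and expands at the \emph{left endpoint} of the rescaled interval, precisely so that the standard moment integrals $\chi_{r,k}=\int_0^1 x^r\mathrm{wal}_k(x)\rd x$ and the recursion above apply verbatim. Your direct expansion in $(x-1/2)^j$ can be made to work but forces you to rederive the recursion for centered monomials. Second, your combinatorial factor $p!\binom{j}{p}=j!/(j-p)!$ is one power of $j$ heavier than the paper's $r!/(r-u+1)!$; summing it against $\alpha^j$ produces an extra $(1-\alpha/2)^{-1}$ that cannot be absorbed into the universal constant $6$ in~\eqref{eqn:BLbound}, though it still yields a bound of the same qualitative form $|L|!\,(\alpha/(2-\alpha))^{|L|}$ sufficient for everything downstream.
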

\begin{proof}
See Appendix \ref{app:proof:thm:errordecomposition}.
\end{proof}

\begin{remark}\label{rmk:fassumption}
Notice that $|f^{(k)}(1/2)|\leq A\alpha^kk!$ for all $k\in\natu$ 
is not really a more stringent assumption than $f$ being analytic on $[0,1]$. To be analytic
on a closed interval requires $f$ to be analytic on some open interval containing it. Then for the Taylor expansion of $f$ centered at $1/2$ to have a radius of convergence larger than $1/2$, it is necessary that $|f^{(k)}(1/2)|\leq A\alpha^kk!$ for some constant $A$ and $\alpha<2$.
\end{remark}

We see that for each $L\in\lset$, the corresponding term in~\eqref{eqn:errordecomposition} contains a factor depending on $M$ times a factor depending on $D$.  It helps that $M$ and $D$ are independent random quantities.
\section{Numerical examples}\label{sec:numerical}

The function $f(x)=x\exp(x)$ has integral
$\mu=1$ over $[0,1]$.
We selected this $f$ because it is
infinitely differentiable as our theory
requires, and
it is not a polynomial and is not symmetric
or antisymmetric.  Those are factors that
might make a function artificially easy to
integrate by a specially tuned numerical method.
There is also no special feature in the function
at values like $1/2$ or more generally integers
divided by a power of $2$ that might confer an advantage for Sobol' points which are generated
in base $2$.

We sampled this function with random linear
scrambling for $0\le m\le 15$.  For this we used the Sobol function in the QMCPy
software of \cite{choi:etal:2021}.
We took the
median of $k=11$ RQMC integral estimates $R=250$
times.

Figure~\ref{fig:smooth}
shows how the RMSE of the median of $11$ RQMC estimates
decreases with $n$ as open circles connected by dashed lines. It appears to decrease
at a superpolynomial rate until it reaches a limit of about $10^{-9}$.  The Sobol' points in QMCPy default to $32$ bits for the linear scramble with the digital shift carried out more bits. Our theory is for infinitely many bits. We redid the computations using $64$ bits for the linear scramble, resulting in the solid points connected by solid lines. With $64$ bits the apparent super-polynomial convergence holds through the entire range of
sample sizes in the figure.

The figure also shows
the RMSE of a single RQMC estimate of which there
were $250\times 11=2750$.  There is a reference
curve at the $n^{-3/2}$ rate interpolating
the value for $n=1$.  A dashed line below that by a factor of $\sqrt{11}$ corresponds to accuracy using an average of $11$ RQMC estimates that could have been done at the same cost as the median of $11$ RQMC estimates.  In the next sections
we prove that the median RQMC estimate
converges at a superpolynomial rate.
We also show that the sample median
of $k$ RQMC estimates attains such
a rate when $k$ grows slowly with $m$.

At tiny sample sizes like $1$, $2$ and $4$ we
see that a mean of $11$ RQMC estimates was more accurate
than a median of $11$ RQMC estimates.   By $n\ge 16$, we see the median doing better than the dotted reference line applicable to the mean of $11$ RQMC estimates.
\begin{figure}[t!]
    \centering
    \includegraphics[width=.9\hsize]{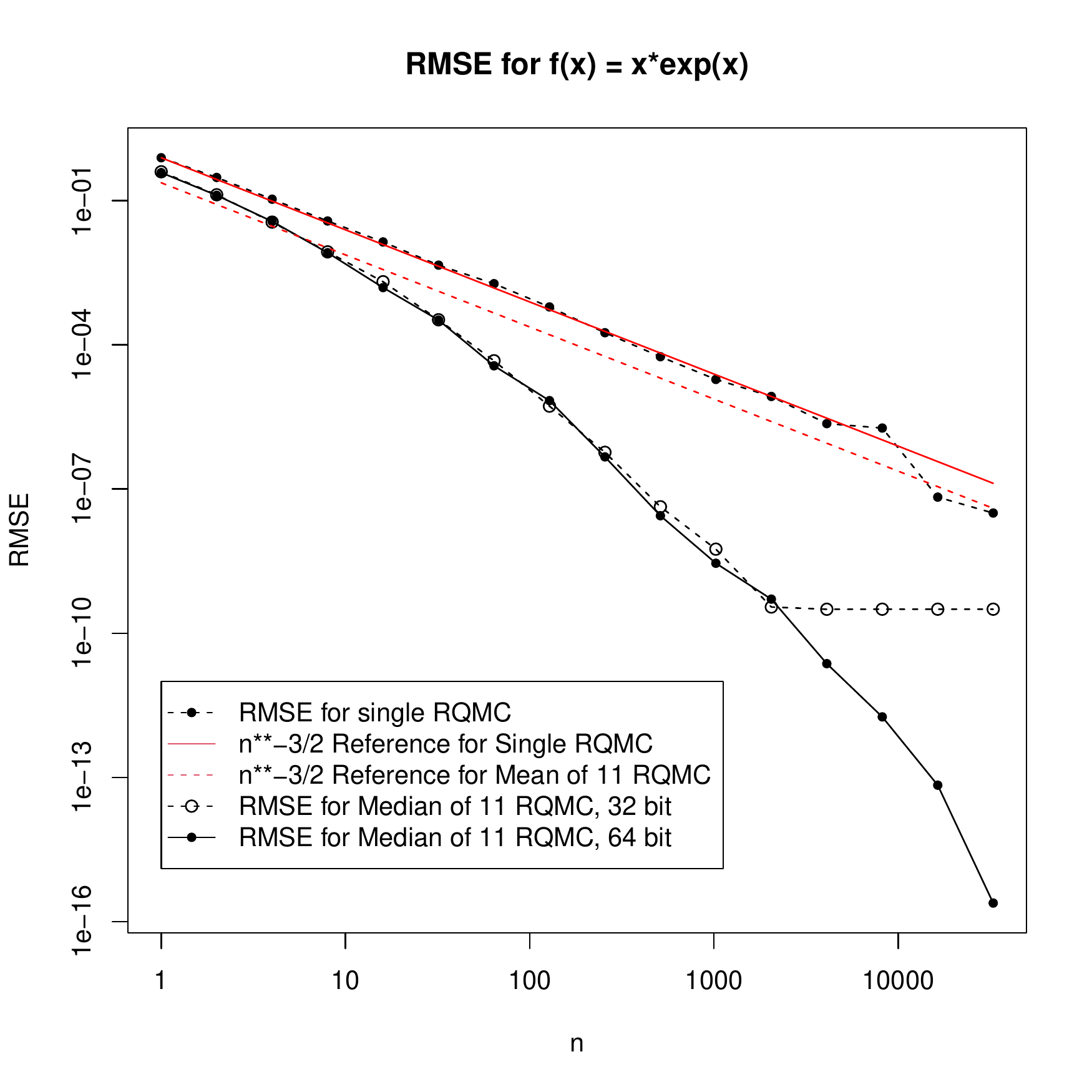}
    \caption{\label{fig:smooth}
    The dashed line with open points shows the RMSE of 250 integral estimates, each of which is the
    median of 11 RQMC estimates. Those computations were done with $M=32$-bit
    Sobol' points.  The solid line with solid points repeats that calculation using 64 bits instead of 32.
     The dashed line with solid points
    connects RMSEs of 2750 RQMC estimates without taking
    a median. The solid reference line is proportional
    to $n^{-3/2}$, running through the plain RQMC value for $n=1$.
    The dashed line is lower by a factor of $\sqrt{11}$ to estimate the RMSE that a mean of $11$ estimates would have.}
\end{figure}

We also investigated a six dimensional function that computes a midpoint voltage
for an output transformerless (OTL)
push-pull circuit.  The function is given
at \cite{surj:bing:2013} which includes a link
to describe the electronics background as well as some code. The results are shown in Figure~\ref{fig:otl}.
We used scrambled Sobol' points from QMCPy. Because the true mean is not known, we plot the standard deviation instead of the RMSE.  While the curve shows an apparent better rate in this multivariate problem it does not account for the bias induced by taking a median instead of a mean. That issue is outside the scope of the present article.

We note in passing that graphical rendering applications of QMC while not having much smoothness can also benefit from using a large number $E$ of bits. See
\cite{kell:2013} for a discussion of QMC for rendering.

\begin{figure}[t!]
    \centering
    \includegraphics[width=.9\hsize]{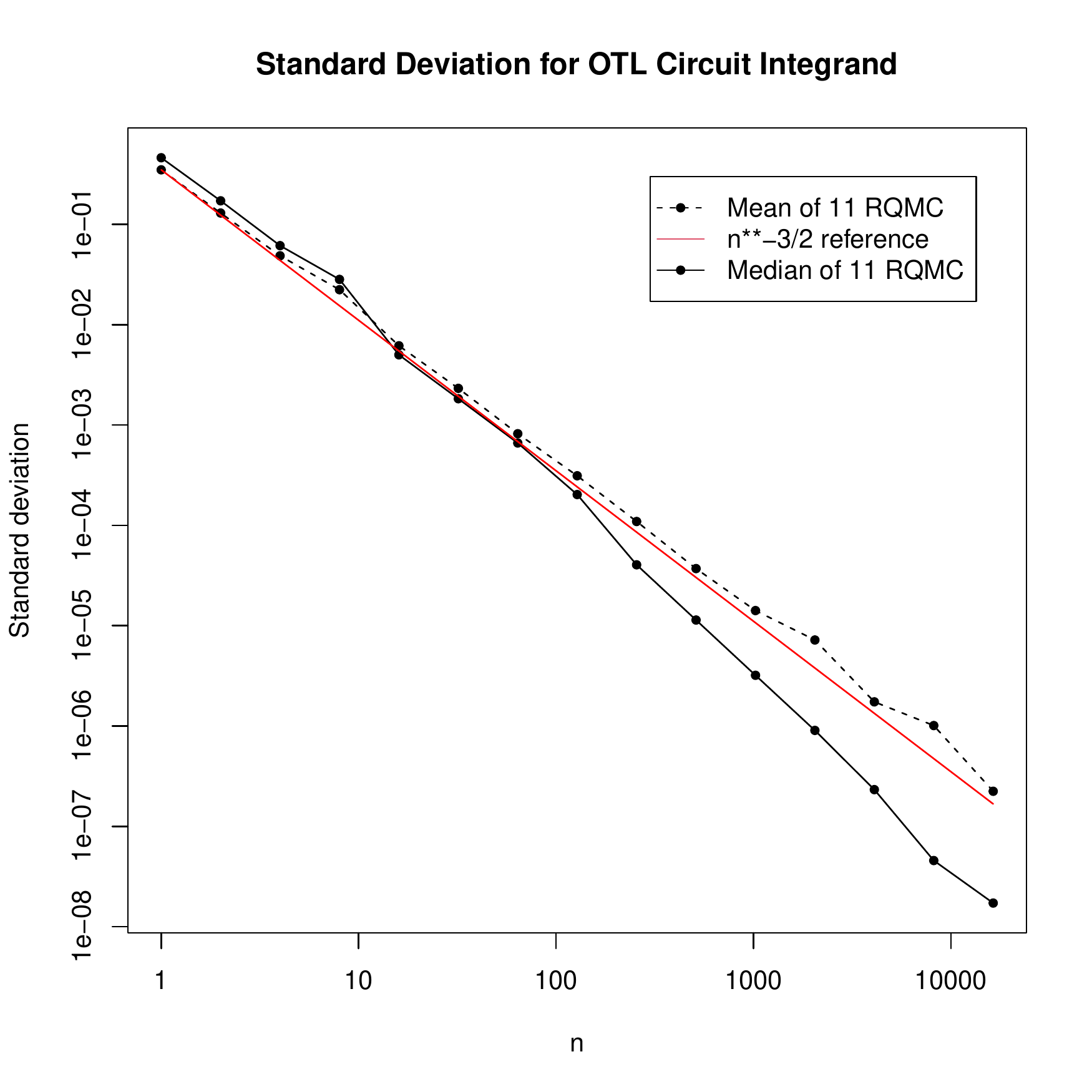}
    \caption{\label{fig:otl}
    The solid line with solid points shows the standard deviation among 100 replicates that each take the median of 11 independent RQMC estimates. The dashed line with solid points has the standard deviation of 1100 replicates divided by $\sqrt{11}$ to reflect the accuracy of a mean of 11 RQMC estimates. The solid reference line is proportional to $n^{-3/2}$ and passes through the point for $n=1$ and the mean of 11 RQMC
    estimates.
        }
\end{figure}
\section{Convergence rate of the median}\label{sec:thepopulationmedian}

As we see in Theorem~\ref{thm:errordecomposition},
sets $L$ with $\sum_{\ell\in L}M(\ell,:)=\bszero$
contribute to the RQMC error and the upper bound
on that contribution contains the factor $2^{-\Lnorm}$, so that sets $L$
with small $\Lnorm$ are of great concern.
In the examples in Section~\ref{sec:bottleneck} we saw that this can be the major source of error in both scrambled nets and ASM sampling.
Is there a way to avoid such bad events? One approach is to redesign the scrambling to avoid $\sum_{\ell\in L}M(\ell,:)=\bszero$ for certain $L$. See for instance the higher order digital nets of \cite{dick:2011} and polynomial lattice rules in \cite{goda:dick:2015}.  Another approach, which is the main focus of this paper, is to take the median instead of the mean of several QMC simulations. Below we show that the median of random linear scrambling with infinite precision converges to $\mu$ at a super-polynomial rate when $f$ satisfies the condition in Theorem~\ref{thm:errordecomposition}.

In random linear scrambling, because $M([m],:)$ is nonsingular and $M(\ell,:)$
for $\ell>m$ has each entry independently $\dunif\{0,1\}$,
\begin{align}\label{eqn:probitszero}
\Pr\biggl(\,\sum_{\ell\in L}M(\ell,:)=\bszero\biggr)=
\begin{cases}
 0, &L\subseteq [m]\\
2^{-m}, & L \nsubseteq [m].
\end{cases}
\end{align}
As a result, the event $\sum_{\ell\in L}M(\ell,:)=\bszero$ is unlikely
to happen for a set $L$ with small $\Lnorm$.   We use the next lemma
to control the number of $L\in \lset$ with small $\Lnorm$.

\begin{lemma}\label{lem:combinatorics} Let $\lambda=3(\log(2))^2/\pi^2\approx 0.146$. Then
\begin{equation}\label{eqn:combinatoricslimit}
\lim_{m\to\infty} \sqrt{m}{2^{-m}} \bigl|\{L\in \lset \mid \Lnorm \leq \lambda m^2\}\bigr|=\frac{3^{1/4}}{2\pi\lambda^{1/4}}.
\end{equation}
Moreover, for $1\leq m\leq 512$,
\begin{align}\label{eqn:combinatorics}
\bigl|\{L\in \lset \mid \Lnorm \leq \lambda m^2\}\bigr|< \frac{0.4\times 2^{m}}{\sqrt{m}}.
\end{align}
\end{lemma}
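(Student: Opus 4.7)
The approach rests on identifying $|\{L\in\lset \mid \Lnorm=n\}|$ with $q(n)$, the number of partitions of $n$ into distinct positive parts, since any finite nonempty $L\subset\natu$ with $\sum_{\ell\in L}\ell=n$ is precisely such a partition. Consequently
$$|\{L\in\lset \mid \Lnorm\leq \lambda m^2\}|=\sum_{n=1}^{\lfloor\lambda m^2\rfloor} q(n),$$
and the task reduces to estimating this truncated sum. The appropriate tool is the distinct-part analogue of the Hardy--Ramanujan formula already quoted in the introduction, namely
$$q(n)\sim \frac{1}{4\cdot 3^{1/4}\,n^{3/4}}\exp\bigl(\pi\sqrt{n/3}\bigr),$$
which can be obtained via Meinardus's theorem applied to the generating function $\prod_{k\ge 1}(1+x^k)$ or by a direct saddle-point analysis.

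The specific $\lambda=3(\log 2)^2/\pi^2$ is calibrated so that $\pi\sqrt{\lambda/3}=\log 2$, whence $\exp(\pi\sqrt{n/3})\bigr|_{n=\lambda m^2}=2^m$. To evaluate the sum I would approximate it by the integral $\int_0^{\lambda m^2}q(t)\,\mathrm{d}t$, with the discretization error absorbed easily because $q$ is eventually monotone, and then apply Laplace's method at the upper endpoint $t=\lambda m^2$. The derivative of $\pi\sqrt{t/3}$ there equals $(\log 2)/(2\lambda m)$, so the effective decay length of the exponential in $t$ is proportional to $m$. This yields
$$\sum_{n=1}^{\lfloor\lambda m^2\rfloor} q(n)\sim q(\lambda m^2)\cdot \frac{2\sqrt{3\lambda m^2}}{\pi}=\frac{3^{1/4}}{2\pi \lambda^{1/4}}\cdot\frac{2^m}{\sqrt{m}},$$
which is exactly \eqref{eqn:combinatoricslimit}. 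A quick sanity check: the limiting constant simplifies to $1/(2\sqrt{\pi\log 2})\approx 0.339$.

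The finite-range inequality \eqref{eqn:combinatorics} is the main obstacle, because the slack between the limit $0.339$ and the claimed constant $0.4$ is modest and cannot be absorbed by an asymptotic statement alone. My plan is a hybrid: for $m$ up to $512$, perform a direct numerical verification by computing $q(n)$ for $n\leq \lfloor\lambda\cdot 512^2\rfloor$ from the generating function $\prod_{k\ge 1}(1+x^k)$ (or from Euler's identity with partitions into odd parts), forming partial sums, and checking the inequality for each $m$. To make the verification robust one can complement it with an explicit non-asymptotic upper bound of Chernoff type, $\sum_{n\leq N}q(n)\leq x^{-N}\prod_{k\ge 1}(1+x^k)$ for any $x\in(0,1)$, optimized at $x=\exp(-\pi/\sqrt{12N})$ and combined with tracked saddle-point error terms. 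The principal difficulty is precisely this honest bookkeeping of constants through Laplace's method, since asymptotic analysis alone furnishes the limit but not the explicit threshold $0.4$.
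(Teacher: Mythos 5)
Your proposal is correct and follows essentially the same route as the paper: identify $|\{L:\Lnorm=n\}|$ with the number of partitions of $n$ into distinct parts, invoke the $q(n)\sim \tfrac{1}{4\cdot 3^{1/4}n^{3/4}}\exp(\pi\sqrt{n/3})$ asymptotic, evaluate the partial sum by an endpoint Laplace/integral approximation (the paper phrases this via the Dawson function, $\daw(z)\sim 1/(2z)$, which is the same first-order computation), and settle the finite-range constant $0.4$ by direct numerical summation of $q(n)$ up to $\lfloor\lambda\cdot 512^2\rfloor\approx 38{,}284$. Your limiting constant and its simplification to $1/(2\sqrt{\pi\log 2})\approx 0.339$ both check out.
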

\begin{proof}
See Appendix \ref{app:proof:lem:combinatorics}.
\end{proof}

The limit in~\eqref{eqn:combinatoricslimit} holds with $m\to\infty$ through real values. Our primary use of it is for integers $m\ge1$ but we will also use it for non-integers.

\begin{remark}
The sequence in equation~\eqref{eqn:combinatoricslimit} is in fact monotonically decreasing for $20\leq m\leq 512$, so one can reasonably guess that the bound in equation~\eqref{eqn:combinatorics} applies to $m>512$ as well, although we do not have a proof for this.
\end{remark}


\begin{lemma}\label{lem:concentration}
In random linear scrambling, there exists a constant $\supC$ such that for all $m\geq 1$ and any $0\leq \epsilon<1$,
$$\Pr\biggl(\min\Bigl\{\Lnorm\bigm| L\in \lset, \sum_{\ell \in L} M(\ell,:)=\bszero\Bigr\}\leq \lambda(1-\epsilon) m^2\biggr)
<  \frac{\supC}{(1-\epsilon)^{{1}/{4}}\sqrt{m}}2^{-{\epsilon m}/{2}}.$$
When $m\leq 512$, we can choose $\supC$ to be $0.4$.
\end{lemma}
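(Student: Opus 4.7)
The plan is to use a union bound over all $L\in\lset$ whose $\Lnorm$ lies below the threshold $\lambda(1-\epsilon)m^2$, then count these $L$ using Lemma~\ref{lem:combinatorics}. By~\eqref{eqn:probitszero}, any $L\subseteq[m]$ contributes zero probability since $M([m],:)$ is lower triangular with unit diagonal and hence non-singular over $\ftwo$; every other $L$ contributes exactly $2^{-m}$. Thus
\[
\Pr\biggl(\min\Bigl\{\Lnorm\bigm| L\in\lset,\ \textstyle\sum_{\ell\in L}M(\ell,:)=\bszero\Bigr\}\leq \lambda(1-\epsilon)m^2\biggr)
\leq 2^{-m}\,\bigl|\{L\in\lset \mid \Lnorm\leq\lambda(1-\epsilon)m^2\}\bigr|.
\]

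The key trick is a change of variable $m'=m\sqrt{1-\epsilon}$, so that $\lambda(m')^2=\lambda(1-\epsilon)m^2$ and the set we need to count becomes $\{L\in\lset\mid\Lnorm\leq\lambda(m')^2\}$, exactly the form bounded by Lemma~\ref{lem:combinatorics}. When $m\leq 512$ we have $m'\leq 512$ and~\eqref{eqn:combinatorics} gives a count below $0.4\cdot 2^{m'}/\sqrt{m'} = 0.4\cdot 2^{m\sqrt{1-\epsilon}}/(\sqrt{m}\,(1-\epsilon)^{1/4})$. For $m>512$, the asymptotic~\eqref{eqn:combinatoricslimit} (valid as $m'\to\infty$ through real values, per the remark following the lemma) implies $\sqrt{m'}\,2^{-m'}|\{L:\Lnorm\leq\lambda(m')^2\}|$ is uniformly bounded on $[1,\infty)$ by some finite $\supC$, which we may take at least as large as $0.4$ to cover both regimes with a single constant.

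Combining the union bound with the count yields
\[
\Pr(\cdot)\leq \frac{\supC}{(1-\epsilon)^{1/4}\sqrt{m}}\,2^{m(\sqrt{1-\epsilon}-1)}.
\]
The final step is the elementary inequality $\sqrt{1-\epsilon}\leq 1-\epsilon/2$ for $\epsilon\in[0,1)$ (square both sides: $1-\epsilon\leq 1-\epsilon+\epsilon^2/4$), which converts the exponent to $-\epsilon m/2$ and delivers the claimed bound.

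The main technical obstacle is the transfer of Lemma~\ref{lem:combinatorics} from integer arguments to the non-integer rescaled parameter $m'=m\sqrt{1-\epsilon}$. Since the counting function $|\{L:\Lnorm\leq t\}|$ is a piecewise-constant step function in $t$ while $2^{m'}/\sqrt{m'}$ varies continuously, passing from the integer statement of~\eqref{eqn:combinatorics} to its real-valued analogue requires verifying that the underlying asymptotic bound on the quantity $\sqrt{m'}\,2^{-m'}|\{L:\Lnorm\leq\lambda(m')^2\}|$ holds uniformly; the remark following Lemma~\ref{lem:combinatorics} about continuous $m\to\infty$ is exactly what is needed here. A minor cosmetic issue is confirming that $0.4$ is indeed a valid uniform choice for the regime $m\leq 512$; this is immediate from the inclusion $m'\leq m\leq 512$.
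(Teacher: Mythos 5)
Your proposal is correct and follows essentially the same route as the paper's proof: a union bound over all $L$ with $\Lnorm\leq\lambda(1-\epsilon)m^2$, the probability bound $2^{-m}$ from~\eqref{eqn:probitszero}, the count from Lemma~\ref{lem:combinatorics} applied with $m$ replaced by $m\sqrt{1-\epsilon}$ (using the real-valued version of the limit noted after that lemma), and the inequality $1-\sqrt{1-\epsilon}\geq\epsilon/2$ to finish. The only cosmetic difference is that you discuss the integer-to-real transfer at greater length than the paper does; the substance is identical.
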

\begin{proof}
Equation~\eqref{eqn:combinatoricslimit} implies that there exists a constant $\supC$ such that $\sqrt{m}2^{-m} \bigl|\{L\in \lset \mid \Lnorm \leq \lambda m^2\}\bigr|\leq \supC$ for all $m\geq 1$. We apply this inequality with $m$ replaced by $m\sqrt{1-\epsilon}$ and apply the union bound to all events $\sum_{\ell\in L}M(\ell,:)=\bszero$ with $\Lnorm \leq \lambda(1-\epsilon) m^2$. We then get
\begin{align*}
    &\phe\,\Pr\bigl(\,\exists L\in \lset, \Lnorm \leq  \lambda (1-\epsilon)m^2,\sum_{\ell\in L}M(\ell,:)=\bszero\bigr)\\
        &\leq \sum_{\{L\in \lset\,\mid\,\Lnorm\leq  \lambda(1-\epsilon) m^2\} } \Pr\biggl(\,\sum_{\ell\in L}M(\ell,:)=\bszero\biggr)\\
    &\overset{\mathrm{(i)}}{< } \frac{\supC 2^{m\sqrt{1-\epsilon}}}{(1-\epsilon)^{1/4}\sqrt{m}} 2^{-m}
    \overset{\mathrm{(ii)}}{\leq } \frac{\supC }{(1-\epsilon)^{1/4}\sqrt{m}} 2^{-{\epsilon m}/{2}},
\end{align*}
where (i) follows from equation~\eqref{eqn:probitszero} and (ii) follows from $1-\sqrt{1-\epsilon}\geq \epsilon/2$. When $m\leq 512$, equation~\eqref{eqn:combinatorics} shows we can choose $\supC=0.4$.
\end{proof}
\begin{remark}\label{rmk:concentration}
We will mostly use the above lemma for $\epsilon=0$, in which case
$$\Pr\Bigl(\min\bigl\{\Lnorm\bigm| L\in \lset, \sum_{\ell \in L} M(\ell,:)=\bszero\bigr\}\leq \lambda m^2\Bigr)
< \frac{0.4}{\sqrt{m}}$$
when $1\leq m\leq 512$ and is $O(m^{-1/2})$ as $m\to\infty$.
\end{remark}

We are going to apply Chebyshev's inequality to bound the probability that $\hat{\mu}_{\infty}$ is far from $\mu$.
For that, we first prove that the random sign terms
$S_L(D)=(-1)^{\norm{D(L)}}$
in Theorem~\ref{thm:errordecomposition} are pairwise independent Rademacher (i.e., $\dunif\{-1,1\}$) random variables.

\begin{lemma}\label{lem:uncorrelated}
For $L\in\lset$, let $S_L(D)$
be as in equation~\eqref{def:slofd}.
Then for $L,L'\in\lset$
$\Pr(S_L(D)=1)=
\Pr(S_L(D)=-1)=1/2$ and
for $L\ne L'$
$$\Pr( S_L(D)=1,S_{L'}(D)=1)=\frac14.$$
\end{lemma}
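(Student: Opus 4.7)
The plan is to rewrite $S_L(D)=(-1)^{\norm{D(L)}}$ with the exponent viewed modulo $2$, and then use the linearity of the map $D\mapsto \norm{D(L)}\tmod 2$ over $\ftwo$ to reduce the claim to a statement about linear independence of indicator vectors. Throughout, I identify each $L\in\lset$ with its indicator vector $\one_L\in\ftwo^{\infty}$ (finitely supported because $L$ is finite), and regard $D=(D_1,D_2,\dots)$ as a random element of $\ftwo^{\infty}$ whose coordinates are IID $\dunif\{0,1\}$.

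First I would handle the marginal statement. Write $Y_L=\sum_{\ell\in L}D_\ell\tmod 2=\langle \one_L,D\rangle$, so that $S_L(D)=(-1)^{Y_L}$. Since $L\ne\emptyset$, one can pick any $\ell_0\in L$ and condition on $\{D_\ell:\ell\in L\setminus\{\ell_0\}\}$; then $Y_L=D_{\ell_0}\oplus c$ for some constant $c\in\ftwo$, and $D_{\ell_0}$ is Bernoulli$(1/2)$, so $Y_L\sim\dunif\{0,1\}$ and $\Pr(S_L(D)=\pm1)=1/2$.

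Next I would establish the joint statement for $L\ne L'$. Consider the $\ftwo$-linear map $\Phi:\ftwo^{\infty}\to\ftwo^2$ given by $\Phi(D)=(\langle \one_L,D\rangle,\langle \one_{L'},D\rangle)$. The key point is that $\one_L$ and $\one_{L'}$ are linearly independent in $\ftwo^{\infty}$: neither is the zero vector (as $L,L'\in\lset$ are non-empty), and $\one_L=\one_{L'}$ would force $L=L'$. Hence $\Phi$ is surjective onto $\ftwo^2$. Because the components of $D$ are IID $\dunif\{0,1\}$ and only finitely many coordinates of $\one_L,\one_{L'}$ are nonzero, the pushforward of the law of $D$ under $\Phi$ is uniform on $\ftwo^2$; this can be seen by restricting to the (finite) support $L\cup L'$, where $D$ restricted to those coordinates is uniform on $\ftwo^{|L\cup L'|}$ and the restricted map is still surjective. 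Consequently $(Y_L,Y_{L'})\sim\dunif\{0,1\}^2$, which yields $\Pr(S_L(D)=1,S_{L'}(D)=1)=\Pr(Y_L=0,Y_{L'}=0)=1/4$, as required.

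There is no real obstacle here; the only thing to be careful about is the infinite-dimensional setup for $D$, which I would handle by projecting onto the finite coordinate set $L\cup L'$ as above so that everything reduces to a standard uniform-distribution argument on $\ftwo^{|L\cup L'|}$. Note that this same argument in fact shows pairwise independence (all four joint probabilities equal $1/4$), which is what the later Chebyshev-type variance computations will need.
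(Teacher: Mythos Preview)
Your argument is correct, but it proceeds by a different route than the paper. The paper works directly with moments of Rademacher variables: it first shows $\e(S_L(D))=\prod_{\ell\in L}\e((-1)^{D_\ell})=0$, and then uses the identity $S_L(D)S_{L'}(D)=S_{L\triangle L'}(D)$ (with $\triangle$ the symmetric difference) to get $\e(S_L S_{L'})=0$; a short symmetry-and-subtraction argument then pins down the joint probability at $1/4$. Your approach instead packages the problem as surjectivity of the $\ftwo$-linear map $D\mapsto(\langle\one_L,D\rangle,\langle\one_{L'},D\rangle)$ and pushes forward the uniform measure. Both are equally rigorous; the moment route is slightly shorter and the symmetric-difference identity is exactly tailored to the pairwise variance computation that follows, while your linear-algebraic view has the advantage of extending at once to mutual independence of any finite family $\{S_{L_i}\}$ whose indicator vectors are linearly independent over $\ftwo$.
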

\begin{proof}
The entries of $D$ are $D_\ell\simiid \dunif\{0,1\}$, so
$(-1)^{D_\ell}\simiid\dunif\{-1,1\}$ and then
$$\e(S_L(D))=\prod_{\ell\in L}\e((-1)^{D_\ell})=0.$$
This combined with $S_L(D)\in\{-1,1\}$
implies that $\Pr(S_L(D)=1)=1/2$.
If $L\ne L'$,
then letting $\triangle$ denote the symmetric difference of sets,
\begin{align}\label{eq:meanrademzero}
\e\bigl( S_L(D) S_{L'}(D)\bigr)
&=\prod_{\ell\in L\triangle L'}\e\bigl((-1)^{D_\ell}\bigr)=0.
\end{align}
Now let $\Pr( S_L(D)=1,S_{L'}(D)=1)=1/4+\delta$.
From the symmetry of Rademacher random variables, we get
$\Pr( S_L(D)=1,S_{L'}(D)=-1)=1/4-\delta$
and $\Pr( S_L(D)=-1,S_{L'}(D)=1)=1/4-\delta$.
Then by subtraction we have $\Pr( S_L(D)=-1, S_{L'}(D)=-1)=1/4+\delta$.
From~\eqref{eq:meanrademzero} we get $\delta=0$
so $\Pr( S_L(D)=1,S_{L'}(D)=1)=1/4$
meaning that $S_L(D)$ and $S_{L'}(D)$ are independent.
\end{proof}

Now we are ready to prove the main theorem concerning the super-polynomial convergence rate of the median of random linear scrambling. Then we will have one corollary for $m\le 512$ and another for $m\to\infty$.

\begin{theorem}\label{thm:convergencerate}
Let the integrand $f$ satisfy the conditions of
Theorem~\ref{thm:errordecomposition}
with constants $A$ and $\alpha$ given there.
Let $\lambda$ be as in Lemma~\ref{lem:combinatorics} and
set $\theta={\alpha}/(e(2-\alpha))$.
Then for any $\eta>0$ and $m\geq 3$,
the random linear scrambling estimate $\hat\mu_\infty$
satisfies
\begin{align}\label{eq:convergencrate}
\Pr\Big(|\hat{\mu}_{\infty}-\mu|>\frac{ A}{\sqrt{\eta}}2^{-\lambda m^2}\sqrt{C_\theta (\theta\sqrt{2\lambda} m)^{2\sqrt{2\lambda} m}+64}\Big)< \eta+\frac{\supC}{\sqrt{m}}
\end{align}
where $C_\theta$  is a positive number depending only on $\theta$,
defined in equation (\ref{eqn:Ctheta}) and $\supC$ is the constant from Lemma~\ref{lem:concentration}. If $m\leq 512$, we can replace $\supC$ by $0.4$. If also
 $m\geq \max((\sqrt{2\lambda}\theta)^{-1}, 3\log(\theta m)+3)$,
then we can replace $C_\theta$ by $3770\max(1,\theta^{-1})$.
\end{theorem}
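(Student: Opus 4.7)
The plan is to combine the three supporting ingredients from this section, namely the error decomposition in Theorem~\ref{thm:errordecomposition}, the concentration bound in Lemma~\ref{lem:concentration}, and the pairwise orthogonality of the signs $S_L(D)$ established in Lemma~\ref{lem:uncorrelated}. The overall strategy is a two-stage argument: first discard the rare event in which some small-$\Lnorm$ set $L$ produces a row-cancellation, then apply a conditional Chebyshev/Markov bound in which $M$ is regarded as fixed and $D$ supplies the randomness.

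First I would let $E_m$ be the event that every $L\in\lset$ with $\sum_{\ell\in L}M(\ell,:)=\bszero$ satisfies $\Lnorm>\lambda m^2$. Lemma~\ref{lem:concentration} at $\epsilon=0$ gives $\Pr(E_m^c)<\supC/\sqrt{m}$ (with $\supC=0.4$ when $m\le 512$), which will contribute exactly the second term on the right of~\eqref{eq:convergencrate}. On $E_m$, the decomposition of Theorem~\ref{thm:errordecomposition} collapses to the restricted sum
$$\hat{\mu}_{\infty}-\mu=\sum_{\substack{L\in\lset\\ \Lnorm>\lambda m^2}}\bsone\Bigl\{\sum_{\ell\in L}M(\ell,:)=\bszero\Bigr\}\,S_L(D)\,2^{-\Lnorm}B_L.$$

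Next I would apply Markov's inequality in the form $\Pr(|\hat\mu_\infty-\mu|>t,E_m)\le t^{-2}\,\e[(\hat\mu_\infty-\mu)^2\bsone_{E_m}]$. Conditioning on $M$, the only remaining randomness is in the signs $S_L(D)$, which by Lemma~\ref{lem:uncorrelated} are pairwise orthogonal and mean zero, so cross terms vanish and the conditional second moment reduces to a sum of squared coefficients. Taking expectation over $M$ and using the probability computation~\eqref{eqn:probitszero} (which gives $\Pr(\sum_{\ell\in L}M(\ell,:)=\bszero)\le 2^{-m}$ for $L\not\subseteq[m]$ and $0$ otherwise) yields
$$\e\bigl[(\hat\mu_\infty-\mu)^2\bsone_{E_m}\bigr]\le 2^{-m}\sum_{\substack{L\in\lset\\ \Lnorm>\lambda m^2}}4^{-\Lnorm}B_L^2.$$
Choosing $t$ so that the right side equals $\eta t^2/A^2$ (up to the hidden bracket) will recover the $\eta$ contribution on the right of~\eqref{eq:convergencrate}.

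The main obstacle, and where all the work lies, is to bound the above weighted sum sharply enough to produce the specific form $2^{-\lambda m^2}\sqrt{C_\theta(\theta\sqrt{2\lambda}\,m)^{2\sqrt{2\lambda}\,m}+64}$. Using~\eqref{eqn:BLbound}, $B_L^2\le 36A^2((|L|)!)^2(e\theta)^{2|L|}$, so I would group the sum by $j=|L|$ and split at the critical value $j^\star=\sqrt{2\lambda}\,m$, which is exactly the solution of $j(j+1)/2\approx\lambda m^2$, i.e.\ the threshold beyond which the constraint $\Lnorm>\lambda m^2$ is automatic. For $j\le j^\star$ the constraint must be enforced by hand via a generating-function estimate of the type $\sum_{L:|L|=j,\Lnorm>\lambda m^2}4^{-\Lnorm}\le 2^{-\lambda m^2}\sum_{L:|L|=j}2^{-\Lnorm}$, while for $j>j^\star$ every $L$ with $|L|=j$ contributes and one uses $\sum_{L:|L|=j}4^{-\Lnorm}\le 4^{-j(j+1)/2}\cdot(\text{partition count})$. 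Stirling's formula applied to $(j!)^2(e\theta)^{2j}$ converts the dominant contribution near $j=j^\star$ into the displayed factor $(\theta\sqrt{2\lambda}\,m)^{2\sqrt{2\lambda}\,m}$, the constant $C_\theta$ absorbs the subexponential $j$-sum and all $\theta$-dependent prefactors, and the $64$ accounts for the bounded contribution of the small-$j$ regime. The refined estimate with $C_\theta$ replaced by $3770\max(1,\theta^{-1})$ in the stated regime $m\ge\max((\sqrt{2\lambda}\theta)^{-1},3\log(\theta m)+3)$ should follow by redoing the Stirling estimates with explicit constants and checking that the tails away from $j^\star$ are dominated once $m$ is large enough that $\theta j^\star\gtrsim 1$ and $\log(\theta m)$ is negligible relative to $m$.
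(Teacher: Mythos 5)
Your overall architecture matches the paper's proof: your event $E_m$ is the paper's event $H$, the $\supC/\sqrt{m}$ term comes from Lemma~\ref{lem:concentration} at $\epsilon=0$ exactly as you say, and the conditional Chebyshev step using the pairwise independence of the $S_L(D)$ (Lemma~\ref{lem:uncorrelated}) together with $\Pr\bigl(\sum_{\ell\in L}M(\ell,:)=\bszero\bigr)\le 2^{-m}$ correctly reduces everything to showing $2^{-m}\sum_{\Lnorm>\lambda m^2}4^{-\Lnorm}B_L^2\le A^2\,4^{-\lambda m^2}\bigl(C_\theta(\theta\sqrt{2\lambda}\,m)^{2\sqrt{2\lambda}\,m}+64\bigr)$. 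The gap is in your small-$|L|$ estimate. The step $\sum_{|L|=j,\,\Lnorm>\lambda m^2}4^{-\Lnorm}\le 2^{-\lambda m^2}\sum_{|L|=j}2^{-\Lnorm}$ extracts only one of the two needed powers of $2^{-\lambda m^2}$ and then discards the constraint: already for $j=1$ the right-hand side is $2^{-\lambda m^2}\sum_{\ell\ge1}2^{-\ell}=2^{-\lambda m^2}$, so your second-moment bound is at least of order $2^{-m}2^{-\lambda m^2}$, which exceeds the required $4^{-\lambda m^2}$ by a factor $2^{\lambda m^2-m}\to\infty$ (for all $m>1/\lambda\approx 6.9$). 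Chebyshev then only yields an error threshold of order $2^{-(\lambda m^2+m)/2}$, i.e.\ the rate constant $\lambda/2$ in place of $\lambda$ --- still super-polynomial, but not the stated bound.

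The repair is what the paper actually does: keep the full weight $4^{-N}$, group by $N=\Lnorm\ge\lceil\lambda m^2\rceil$, and bound the number of $L\in\lset$ with $\Lnorm=N$ by the distinct-part partition estimate $\frac{\pi}{2\sqrt{3N}}\exp\bigl(\pi\sqrt{N/3}\bigr)$ (Bidar's corollary, the companion to Hardy--Ramanujan). The sum then closes only because of the exact identity $2^{-m}\exp\bigl(\pi\sqrt{\lambda m^2/3}\bigr)=1$ --- this is precisely why $\lambda=3(\log 2)^2/\pi^2$ --- combined with the linearization $\pi\sqrt{N/3}\le\pi\sqrt{\lambda m^2/3}+\pi(N-\lambda m^2)/(2\sqrt{3\lambda m^2})$, which leaves a geometric series in $N-\lambda m^2$; that computation produces the $64$ (from the constant branch of the log-convexity bound on $(|L|)!(e\theta)^{|L|}$) and the constant $C_\theta$ of~\eqref{eqn:Ctheta} (from the large branch). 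Your remaining ingredients are on target: converting $|L|$ into $\sqrt{2N}$ via $\Lnorm\ge|L|(|L|+1)/2$, log-convexity of the factorial, and Stirling applied near $N=\lambda m^2$ are exactly how the paper turns~\eqref{eqn:BLbound} into the displayed factor $(\theta\sqrt{2\lambda}\,m)^{2\sqrt{2\lambda}\,m}$.
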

\begin{proof}
First we condition on $M$ and apply Chebyshev's inequality. For $c>0$
\begin{align*}
   \Pr\bigl(|\hat{\mu}_{\infty}-\mu|>c\giv M\bigr)&\leq \frac{\var(\hat{\mu}_{\infty}-\mu\giv M)}{c^2}.
\end{align*}
By Theorem~\ref{thm:errordecomposition} and Lemma~\ref{lem:uncorrelated},
\begin{align*}
    \var(\hat{\mu}_{\infty}-\mu\giv M)&=
\var\biggl(\,\sum_{L\in \lset}\bsone\biggl\{\sum_{\ell\in L}M(\ell,:)=\bszero\biggr\}
B_{L} S_L(D)2^{-\Lnorm}\giv M\biggr)\\
    &=\sum_{L\in \lset}\bsone\biggl\{\,\sum_{\ell\in L}M(\ell,:)=\bszero\biggr\}
B^2_{L} 4^{-\Lnorm}.
\end{align*}
Let $H$ be the event $\big\{\min\{\Lnorm\mid L\in \lset, \sum_{\ell\in L}M(\ell,:)=\bszero\}> \lambda m^2\big\}$.
By Lemma~\ref{lem:concentration}, $\Pr(H^c)\leq \supC/\sqrt{m}$.  
Conditioning on $H$,
we see that
\begin{align}\label{eqn:sumoverN}
   &\phe\, \e\biggl(\,\sum_{L\in \lset}\bsone\biggl\{\,\sum_{\ell\in L}M(\ell,:)=\bszero\biggr\}
B^2_{L} 4^{-\Lnorm}\!\bigm|\! H\biggr) \nonumber \\
   &=\sum_{L\in \lset}\Pr\biggl(\,\sum_{\ell\in L}M(\ell,:)=\bszero\giv H\biggr) B^2_{L} 4^{-\Lnorm} \nonumber \\
   &=\sum_{N=\lceil \lambda m^2\rceil}^\infty \frac{1}{4^N} \sum_{L\in \lset,\,\Lnorm=N} \Pr\biggl(\,\sum_{\ell\in L}M(\ell,:)=\bszero\!\bigm|\! H\biggr) B^2_{L}.
\end{align}
Now
\begin{equation}\label{eqn:PH}
    \Pr\biggl(\,\sum_{\ell\in L}M(\ell,:)=\bszero\!\bigm|\! H\biggr)\leq
\frac{\Pr(\sum_{\ell\in L}M(\ell,:)=\bszero)}{\Pr(H)}\le\frac{2^{-m}}{\Pr(H)}.
\end{equation}
Furthermore,
$$\Lnorm\geq \sum_{\ell=1}^{|L|}\ell=\frac{|L|(|L|+1)}{2}.$$
So $|L|< \sqrt{2N}$ when $\Lnorm=N$. Let $\lfloor \sqrt{2N}\rfloor$ be the largest integer no larger than $\sqrt{2N}$. Then according to~\eqref{eqn:BLbound}
\begin{align}\label{eqn:blbound}
B_L^2 &\le (6A)^2\bigg((|L|)!
\Bigl(\frac{\alpha/2}{1-\alpha/2}\Bigr)^{|L|}\bigg)^2\notag\\
&<(6A)^2 \bigg((\lfloor \sqrt{2N}\rfloor)!
\Bigl(\frac{\alpha/2}{1-\alpha/2}\Bigr)^{\lfloor \sqrt{2N}\rfloor}\bigg)^2+(6A)^2
\end{align}
where the last inequality uses the fact that factorial (or rather the Gamma function) is logarithmically convex, which implies the maximum is attained at either $|L|=0$ or $|L|=\lfloor \sqrt{2N}\rfloor$.
By Stirling's approximation
$$
(\lfloor \sqrt{2N}\rfloor)!< \sqrt{2\pi \lfloor \sqrt{2N}\rfloor}\Bigl(\frac{\lfloor \sqrt{2N}\rfloor}{e}\Bigr)^{\lfloor \sqrt{2N}\rfloor}e^{{1}/{12}}.$$
Applying the above to the bound for $B_L$ in equation \eqref{eqn:BLbound}
we get
\begin{align}\label{eqn:blbound}
B_L^2 &< (6A)^2  2\pi \lfloor \sqrt{2N}\rfloor\Bigl(\frac{\lfloor \sqrt{2N}\rfloor}{e}\Bigr)^{2\lfloor \sqrt{2N}\rfloor}e^{{1}/{6}}
\Bigl(\frac{\alpha/2}{1-\alpha/2}\Bigr)^{2\lfloor \sqrt{2N}\rfloor}+(6A)^2\notag\\
&<2\pi e^{{1}/{6}}(6A)^2 \max(1,\theta^{-1}) \sqrt{2N}\Bigl(\theta\sqrt{2N}\Bigr)^{2 \sqrt{2N}}+(6A)^2.
\end{align}

Next, by Corollary 2 of \cite{bida:2012}
\begin{equation}\label{eqn:Bidar}
    \bigl|\{L\in \lset\mid\Lnorm=N\}\bigr|\leq \frac{\pi \exp\Bigl(\pi\sqrt{\frac{N}{
    3}}\Bigr)}{2\sqrt{3N}}.
\end{equation}
This problem that Bidar studies is different from
that of Hardy and Ramanujan, because the elements
of $L$ must be distinct while Hardy and Ramanujan's
formula involves sums of not necessarily distinct
numbers.
Combining equations~\eqref{eqn:PH},~\eqref{eqn:blbound}, and~\eqref{eqn:Bidar}
\begin{align}
    &\phantom{\leq}\,\sum_{L\in \lset,\Lnorm=N} \Pr\biggl(\,\sum_{\ell\in L}M(\ell,:)=\bszero\giv H\biggr) B^2_{L}\notag\\
    &\leq \frac{2\sqrt{2}\pi^2 e^{1/6}(6A)^2\max(1,\theta^{-1})}{2\sqrt{3}\Pr(H)2^m}(\theta\sqrt{2N})^{2\sqrt{2N}}\exp\Bigl(\pi\sqrt{\frac{N}{3}}\,\Bigr)\notag\\
    &\phe\ +\frac{\pi (6A)^2 }{2\sqrt{3N}\Pr(H)2^m}\exp\Bigl(\pi\sqrt{\frac{N}{
    3}}\Bigr)\notag\\
    &= \frac{\sqrt{2}\pi^2 e^{1/6}(6A)^2\max(1,\theta^{-1})}{\sqrt{3}\Pr(H)2^m}p(N)+\frac{\pi (6A)^2 }{2\sqrt{3N}\Pr(H)2^m}\exp\Bigl(\pi\sqrt{\frac{N}{
    3}}\Bigr),\label{eq:twobigterms}
\end{align}
for
$$
p(N)=(\theta\sqrt{2N})^{2\sqrt{2N}}\exp\Bigl(\pi\sqrt{\frac{N}{3}}\,\Bigr).
$$

Now define
\begin{equation}\label{eqn:Ctheta}
    C_\theta=\sup_{m\geq 1}\frac{ 12\sqrt{6}\pi^2 e^{1/6}\max(1,\theta^{-1})}{p( \lambda m^2)4^{-\lambda m^2}}\sum_{N=\lceil \lambda m^2\rceil}^\infty \frac{p(N)}{4^N}.
\end{equation}
To see that $C_\theta$ is indeed finite, notice that $(\theta\sqrt{2N})^{2\sqrt{2N}}=\exp(2\sqrt{2N}\log(\theta\sqrt{2N}))$, so $p(N)$ grows at a sub-exponential rate in $N$. More explicitly, for some $\rho>1$ we want to find conditions on $m$ so that $p(N+1)/p(N)<\rho$ for $N\geq \lceil \lambda m^2\rceil$. It is enough to have  ${\mrd}\log(p(N))/{\mrd N}<\log(\rho)$ for $N\geq \lambda m^2$, where we let $p$ take positive real valued arguments. To further simplify the calculation, we assume $m\geq (\sqrt{2\lambda}\theta)^{-1}$ so that $\theta\sqrt{2N}\geq 1$ for $N\geq \lambda m^2$. Then ${\mrd}\log(p(N))/{\mrd N}$ is decreasing in $N$ and we only need to verify that ${\mrd}\log(p(N))/{\mrd N}<\log(\rho)$ at $N=\lambda m^2$. A lengthy but straightforward calculation shows that this holds when
$$ \log(\rho)m > \log(\theta m)\sqrt{\frac{2}{\lambda}}+\frac{2+\log(2\lambda)}{\sqrt{2\lambda}}+\frac{\pi}{6}\sqrt{\frac{3}{\lambda}}.$$
To present the above inequality in a simpler form, we choose $\rho=4/1.1$ and approximate the inequality numerically with a sufficient
condition that $m\ge  3\log(\theta m)+3$. 

In summary, when $m\ge \max((\sqrt{2\lambda}\theta)^{-1}, 3\log(\theta m)+3)$, then $p(N+1)/p(N)<4/1.1$ for $N\geq  \lambda m^2$ and 
\begin{align}\label{eq:get3770}
12\sqrt{6}\pi^2 e^{1/6}
\sum_{N=\lceil \lambda m^2\rceil}^\infty \frac{p(N)/p(\lambda m^2)}{4^{N-\lambda m^2}}&\leq 12\sqrt{6}\pi^2 e^{1/6}\sum_{N=\lceil \lambda m^2\rceil}^\infty 1.1^{\lambda m^2-N}\notag\\
&\leq 3770.
\end{align}
We see that $C_\theta$ is finite and then
\begin{align*}
    &\phe\ \sum_{N=\lceil \lambda m^2\rceil}^\infty \frac{1}{4^N}\frac{\sqrt{2}\pi^2 e^{1/6}(6A)^2\max(1,\theta^{-1})}{\sqrt{3}\Pr(H)2^m}p(N)\\
    &\leq  \frac{C_\theta A^2}{\Pr(H)} \frac{p(\lambda m^2)4^{-\lambda m^2}}{2^m}\\
    &=\frac{C_\theta A^2}{\Pr(H)} 4^{-\lambda m^2}(\theta\sqrt{2\lambda m^2})^{2\sqrt{2\lambda m^2}}\frac{1}{2^m}\exp\Bigl(\pi\sqrt{\frac{\lambda m^2}{3}}\Bigr)\\
    &= \frac{C_\theta A^2}{\Pr(H)} 4^{-\lambda m^2}(\theta\sqrt{2\lambda} m)^{2\sqrt{2\lambda} m}
\end{align*}
where the last equality follows from $\lambda=3(\log(2))^2/\pi^2$.
This bounds the first term in~\eqref{eq:twobigterms} when summed over $N$ as in~\eqref{eqn:sumoverN}.

For the second term, we use the inequality $\sqrt{x+a}\leq \sqrt{x}+{a}/({2\sqrt{x}})$ for $a\geq 0$ with $x=\lambda m^2/3$ and $a = (N-\lambda m^2)/3$
to get
\begin{align}\label{eq:sqrtslopebound}\pi\sqrt{\frac{N}3}
\le\pi\sqrt{\frac{\lambda m^2}3}+\pi\frac{N-\lambda m^2 }{2\sqrt{3\lambda m^2}}.
\end{align}
Then using~\eqref{eq:sqrtslopebound} and
$2^m=\exp(\pi\sqrt{\lambda m^2/3})$ and
the assumption that $m\ge3$,
\begin{align*}
& \sum_{N=\lceil \lambda m^2\rceil}^\infty \frac{1}{4^N} \frac{\pi (6A)^2 }{2\sqrt{3N}\Pr(H)2^m}\exp\Bigl(\pi\sqrt{\frac{N}{
    3}}\Bigr)\\
    &\leq \frac{\pi (6A)^2 }{2\sqrt{3\lambda m^2}\Pr(H)2^m}\exp\Bigl(\pi\sqrt{\frac{\lambda m^2}{3}}\Bigr) 4^{-\lambda m^2} \sum_{N=\lceil \lambda m^2\rceil}^\infty \exp\Bigl(
    \Bigl(\frac{\pi}{2\sqrt{3\lambda m^2}}-\log(4)\Bigr)(N-\lambda m^2)\Bigr)\\
    & \leq \frac{A^2 }{\Pr(H)}4^{-\lambda m^2}\frac{36\pi }{2\sqrt{27\lambda }} \sum_{N=\lceil \lambda m^2\rceil}^\infty \exp\Bigl(
    \Bigl(\frac{\pi}{2\sqrt{27\lambda}}-\log(4)\Bigr)(N-\lambda m^2)\Bigr)\\
    &\leq \frac{64 A^2}{\Pr(H)} 4^{-\lambda m^2}.
\end{align*}


Using the bounds for both terms
\begin{align}\label{eqn:ConditionalExpectedVariance}
  \nonumber  \e\bigl(\var(\hat{\mu}_{\infty}-\mu\mid M)\giv H\bigr)
    &\leq \sum_{N=\lceil \lambda m^2\rceil}^\infty \frac{1}{4^N} \sum_{L\in \lset,\Lnorm=N} \Pr\biggl(\,\sum_{\ell\in L}M(\ell,:)=\bszero\!\bigm|\! H\biggr) B^2_{L}\\ 
    &\leq \frac{C_\theta A^2}{\Pr(H)} 4^{-\lambda m^2}(\theta\sqrt{2\lambda} m)^{2\sqrt{2\lambda} m}+\frac{64 A^2}{\Pr(H)} 4^{-\lambda m^2}.
\end{align}

Finally,
\begin{align}\label{eqn:Chebshev}
    \Pr(|\hat{\mu}_{\infty}-\mu|>c)&\leq \Pr(|\hat{\mu}_{\infty}-\mu|>c\giv H)\Pr(H)+\Pr(H^c)\nonumber \\
    &\leq \frac{1}{c^2}\e(\var(\hat{\mu}_{\infty}-\mu\giv M)\giv H)\Pr(H)+\Pr(H^c).
\end{align}
The bound~\eqref{eq:convergencrate} follows by choosing
$$c=\frac{ A}{\sqrt{\eta}}2^{-\lambda m^2}\sqrt{C_\theta (\theta\sqrt{2\lambda} m)^{2\sqrt{2\lambda} m}+64}$$ and noting that $\Pr(H^c)< \supC/\sqrt{m}$
by Lemma~\ref{lem:concentration}.
That we can take $\supC=0.4$ for $m\le 512$
follows by Lemma~\ref{lem:concentration}.
That we can take $C_\theta=3770\max(1,\theta^{-1})$
under the given conditions follows by~\eqref{eq:get3770}.
\end{proof}

We can interpret Theorem~\ref{thm:convergencerate}
as placing some control on the probability
that the error $|\hat\mu_\infty-\mu|$ is appreciably
larger than $2^{-\lambda m^2}=n^{-\lambda \log_2(n)}$.
That probability cannot be larger than
$\eta + O(1/\sqrt{m})$ for any $\eta>0$.
The next corollaries show that this provides
some control on the distribution of $|\hat\mu_\infty-\mu|$.
The median of that distribution must converge
rapidly to zero.  Then further below
we translate this property into a property
of the sample median.

\begin{corollary}\label{cor:median}
Under the conditions of Theorem~\ref{thm:convergencerate}
let $\med(\hat\mu_\infty)$ be the median of the distribution
of $\hat\mu_\infty$. Then for $3\leq m\leq 512$
$$|\med(\hat{\mu}_{\infty})-\mu|\leq 2 A 2^{-\lambda m^2}\sqrt{C_\theta (\theta\sqrt{2\lambda} m)^{2\sqrt{2\lambda} m}+64}.$$
\end{corollary}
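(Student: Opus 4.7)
The plan is to derive the corollary by choosing the free parameter $\eta$ in Theorem~\ref{thm:convergencerate} so that the resulting tail bound falls strictly below $1/2$, which then forces the population median into the corresponding interval around $\mu$.

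Concretely, I would first apply Theorem~\ref{thm:convergencerate} with $\eta = 1/4$. This choice makes $A/\sqrt{\eta} = 2A$, which matches the factor of $2$ in front of $A$ in the statement. Writing $c = 2A\, 2^{-\lambda m^2}\sqrt{C_\theta (\theta\sqrt{2\lambda}\, m)^{2\sqrt{2\lambda}\, m}+64}$, the theorem yields
\begin{equation*}
\Pr\bigl(|\hat{\mu}_{\infty}-\mu| > c\bigr) \;<\; \tfrac{1}{4} + \tfrac{\supC}{\sqrt{m}}.
\end{equation*}
Because $3 \le m \le 512$, Theorem~\ref{thm:convergencerate} allows us to take $\supC = 0.4$, and then $0.4/\sqrt{m} \le 0.4/\sqrt{3} < 0.24$, giving a total tail probability strictly below $1/2$.

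Next I would invoke the standard fact that if $\Pr(|X - \mu| > c) < 1/2$, then every median of $X$ lies within $c$ of $\mu$. The quick argument: if $\med(\hat\mu_\infty) > \mu + c$, then the definition of median forces $\Pr(\hat\mu_\infty \ge \med(\hat\mu_\infty)) \ge 1/2$, hence $\Pr(\hat\mu_\infty > \mu + c) \ge 1/2$, contradicting the tail bound just established. The symmetric argument rules out $\med(\hat\mu_\infty) < \mu - c$, so $|\med(\hat\mu_\infty) - \mu| \le c$, which is exactly the claim.

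There is no real obstacle here — the only substantive choice is selecting $\eta = 1/4$, which is essentially forced by wanting the cleanest constant in front of $A$ while still leaving enough slack in the tail bound to absorb the $\supC/\sqrt{m}$ term for all $m \ge 3$. The restriction $m \le 512$ enters only through the explicit value $\supC = 0.4$ supplied by Lemma~\ref{lem:concentration}; if one had a uniform bound on $\supC$ for all $m$, the same argument would extend without modification.
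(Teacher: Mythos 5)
Your proposal is correct and follows essentially the same route as the paper: apply Theorem~\ref{thm:convergencerate} with a fixed $\eta$, use $\supC=0.4$ and $m\ge 3$ to push the tail probability strictly below $1/2$, and conclude that the median must lie in the interval. The only cosmetic difference is that the paper takes $\eta=\tfrac12-0.4/\sqrt{3}>\tfrac14$ (so that $A/\sqrt{\eta}\le 2A$ and the total is exactly $1/2$), whereas you take $\eta=1/4$ and get slack $1/4+0.4/\sqrt{3}<1/2$; both choices are valid.
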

\begin{proof}
Choose $\eta={1}/{2}-{0.4}/\sqrt{3}$ and apply Theorem~\ref{thm:convergencerate}, we see that 
$$\Pr\Big(|\hat{\mu}_{\infty}-\mu|>2A2^{-\lambda m^2}\sqrt{C_\theta (\theta\sqrt{2\lambda} m)^{2\sqrt{2\lambda} m}+64}\Big)< \eta+\frac{0.4}{\sqrt{3}}=\frac{1}{2}$$
where we have used $\eta={1}/{2}-{0.4}/\sqrt{3}>1/4$ to replace $1/\sqrt{\eta}$ by 2, $m\leq 512$ to replace $\overline{C}$ by $0.4$ and $m\geq 3$ to bound $0.4/\sqrt{m}$ by $0.4/\sqrt{3}$. This conclusion follows once we notice the above probability must exceed $1/2$ if $\med(\hat\mu_\infty)$ falls outside that interval.
\end{proof}

\begin{corollary}\label{cor:superpolynomialrate} For $f$ analytic on $[0,1]$,
$$|\med(\hat{\mu}_{\infty})-\mu|=o(2^{-(\lambda-\epsilon)m^2})$$
for any $\epsilon>0$.
\end{corollary}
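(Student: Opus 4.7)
The plan is to extend the reasoning of Corollary \ref{cor:median} beyond the range $m\le 512$ and then track the asymptotic behavior of the bound as $m\to\infty$. First, by Remark \ref{rmk:fassumption}, analyticity of $f$ on the closed interval $[0,1]$ implies existence of constants $A$ and $\alpha<2$ such that the hypotheses of Theorem \ref{thm:errordecomposition} (and hence of Theorem \ref{thm:convergencerate}) are satisfied. So Theorem \ref{thm:convergencerate} applies for every $m\ge 3$.

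Next, I would choose a fixed $\eta\in(0,1/2)$, say $\eta=1/4$. Since $\supC/\sqrt{m}\to 0$, there exists $m_0$ such that for all $m\ge m_0$ we have $\eta+\supC/\sqrt{m}<1/2$. For such $m$, Theorem \ref{thm:convergencerate} gives
\begin{equation*}
\Pr\Bigl(|\hat{\mu}_{\infty}-\mu|>c_m\Bigr)<\tfrac12,\qquad c_m:=\frac{A}{\sqrt{\eta}}\,2^{-\lambda m^2}\sqrt{C_\theta(\theta\sqrt{2\lambda}\,m)^{2\sqrt{2\lambda}\,m}+64}.
\end{equation*}
Exactly as in the proof of Corollary \ref{cor:median}, this forces $|\med(\hat{\mu}_{\infty})-\mu|\le c_m$: if the population median were outside the interval $[\mu-c_m,\mu+c_m]$, then the above probability would have to be at least $1/2$. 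So for all $m\ge m_0$,
\begin{equation*}
|\med(\hat{\mu}_{\infty})-\mu|\le c_m.
\end{equation*}

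Now I would verify the asymptotic decay. Multiplying by $2^{(\lambda-\epsilon)m^2}$, we must show
\begin{equation*}
2^{(\lambda-\epsilon)m^2}\,c_m = \frac{A}{\sqrt{\eta}}\,2^{-\epsilon m^2}\sqrt{C_\theta(\theta\sqrt{2\lambda}\,m)^{2\sqrt{2\lambda}\,m}+64}\;\longrightarrow\;0.
\end{equation*}
Writing $(\theta\sqrt{2\lambda}\,m)^{2\sqrt{2\lambda}\,m}=\exp\bigl(2\sqrt{2\lambda}\,m\log(\theta\sqrt{2\lambda}\,m)\bigr)$, the quantity inside the square root grows like $\exp(O(m\log m))$, while the prefactor $2^{-\epsilon m^2}=\exp(-\epsilon m^2\log 2)$ decays like $\exp(-\Theta(m^2))$. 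Since $m\log m=o(m^2)$, the product tends to $0$, which gives the claimed $o(2^{-(\lambda-\epsilon)m^2})$ rate.

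The only substantive issue is making sure the range of $m$ is correctly handled: Corollary \ref{cor:median} already dispatches the small-$m$ range with its explicit constants, and the argument above handles all sufficiently large $m$ uniformly with any fixed $\eta\in(0,1/2)$; the threshold $m_0$ depends on $\supC$ but not on $\epsilon$. No other step is delicate, so I anticipate no genuine obstacle beyond bookkeeping.
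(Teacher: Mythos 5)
Your proposal is correct and follows essentially the same route as the paper: both arguments reduce to the fact that the error threshold guaranteed by Theorem~\ref{thm:convergencerate} is $2^{-\lambda m^2}$ times a factor $\exp(O(m\log m))$, which is swallowed by $2^{\epsilon m^2}$. The only cosmetic difference is that the paper re-runs the Chebyshev step of equation~\eqref{eqn:Chebshev} with $c=2^{-(\lambda-\epsilon)m^2}$ and shows the exceedance probability is $o(1)$, whereas you apply the theorem as stated with a fixed $\eta<1/2$ and then verify $c_m=o(2^{-(\lambda-\epsilon)m^2})$; both are valid.
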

\begin{proof}
Remark~\ref{rmk:fassumption} shows such $f$ satisfies the assumption of Theorem~\ref{thm:convergencerate} for some $A$ and $\alpha$, so equation~\eqref{eqn:ConditionalExpectedVariance} shows $\e\bigl(\var(\hat{\mu}_{\infty}-\mu\mid M)\giv H\bigr)=4^{-\lambda m^2+O(m\log (m))}$. 
Let $c=2^{-(\lambda-\epsilon)m^2}$. As in  equation~\eqref{eqn:Chebshev}
\begin{align*}
    \Pr(|\hat{\mu}_{\infty}-\mu|>c)&\leq  \frac{1}{c^2}\e(\var(\hat{\mu}_{\infty}-\mu\giv M)\giv H)\Pr(H)+\Pr(H^c)\\
    &=\frac{4^{-\lambda m^2+O(m\log (m))}}{4^{-(\lambda-\epsilon)m^2}}+O\Bigl(\frac{1}{\sqrt{m}}\Bigr)=o(1)
\end{align*}
where we have used Lemma~\ref{lem:concentration} to bound $\Pr(H^c)$.
The same argument in Corollary~\ref{cor:median} shows $|\med(\hat{\mu}_{\infty})-\mu|<c$ for large enough $m$.
\end{proof}

\begin{remark}\label{rmk:convergencerate}
Because
$$2^{-\lambda m^2}=2^{-\log_2(n)^2{3(\log(2))^2}/{\pi^2}}=
n^{-(3\log(2)/\pi^2)\log(n)}\approx n^{-0.21 \log(n)},$$
Corollary~\ref{cor:superpolynomialrate} shows that the median of $\hat{\mu}_\infty$ converges to $\mu$ faster than any polynomial rate.
\end{remark}

In practice, one can only use finite-precision scrambling and estimate the population median of $\hat\mu$ by the median of a finite number of replicated samples. Below we present results for the sample median.
\begin{theorem}\label{thm:samplemedian}
Suppose the scrambling has precision $E$ and let $\hat{\mu}^{(k)}_{E}$ be the sample median of $2k-1$ independently generated values of $\hat{\mu}_E$. Under the conditions of Theorem~\ref{thm:convergencerate}, for any $\eta>0$ and $m\geq 3$
\begin{align*}
&\phe\, \Pr\Big(|\hat{\mu}^{(k)}_{E}-\mu|>\frac{ A}{\sqrt{\eta}}2^{-\lambda m^2}\sqrt{C_\theta (\theta\sqrt{2\lambda} m)^{2\sqrt{2\lambda} m}+64}+\omega_f\Bigl(\frac{1}{2^{E}}\Bigr)\Big)\\
&<
{2k-1\choose k}\Bigl(\eta+\frac{\supC}{\sqrt{m}}\Bigr)^{k}.
\end{align*}
\end{theorem}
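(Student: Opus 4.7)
The plan is to reduce the theorem to Theorem~\ref{thm:convergencerate} via a three-step argument: pass from the finite-precision replicates $\hat\mu_{E,r}$ to their infinite-precision limits $\hat\mu_{\infty,r}$ using Lemma~\ref{lem:precisionerror}, express the tail event for the sample median as a counting event on the individual replicates, and apply a union bound that exploits the independence of the $2k-1$ replicates.

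To carry this out, I would abbreviate the threshold on the right of~\eqref{eq:convergencrate} by $t = (A/\sqrt{\eta})\,2^{-\lambda m^2}\sqrt{C_\theta (\theta\sqrt{2\lambda} m)^{2\sqrt{2\lambda} m}+64}$. For each replicate $r=1,\dots,2k-1$, Lemma~\ref{lem:precisionerror} yields $|\hat\mu_{\infty,r}-\hat\mu_{E,r}|\leq \omega_f(1/2^E)$, so by the triangle inequality
$$\bigl\{|\hat\mu_{E,r}-\mu|>t+\omega_f(1/2^E)\bigr\}\subseteq \bigl\{|\hat\mu_{\infty,r}-\mu|>t\bigr\}.$$
Theorem~\ref{thm:convergencerate} then bounds the probability of each of the latter events by $\eta+\supC/\sqrt{m}$.

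Next I would observe that whenever the sample median of $2k-1$ real numbers deviates from $\mu$ by more than $t+\omega_f(1/2^E)$, at least $k$ of those numbers must lie on the same side of $\mu$ past that threshold (either all above $\mu+t+\omega_f(1/2^E)$, or all below $\mu-t-\omega_f(1/2^E)$), simply because the median of $2k-1$ values is the $k$-th order statistic. In particular, at least $k$ of the replicates $\hat\mu_{E,r}$ satisfy $|\hat\mu_{E,r}-\mu|>t+\omega_f(1/2^E)$, which by the previous inclusion forces at least $k$ of the $\hat\mu_{\infty,r}$ to satisfy $|\hat\mu_{\infty,r}-\mu|>t$. Because the scrambles generating the replicates are independent, these events are mutually independent, so a union bound over the $\binom{2k-1}{k}$ subsets of indices of size $k$ gives
$$\Pr\bigl(|\hat\mu^{(k)}_E-\mu|>t+\omega_f(1/2^E)\bigr)\leq \binom{2k-1}{k}\Bigl(\eta+\frac{\supC}{\sqrt{m}}\Bigr)^k,$$
which is the claim.

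There is no substantive obstacle here: the entire argument is a routine consequence of Lemma~\ref{lem:precisionerror} and Theorem~\ref{thm:convergencerate}. The one mildly delicate step is the translation of the two-sided tail event for the sample median into a counting event, but this is handled by the elementary order-statistic remark above.
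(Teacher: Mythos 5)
Your proposal is correct and follows essentially the same route as the paper: Lemma~\ref{lem:precisionerror} to bridge precision $E$ and infinite precision, the order-statistic observation that a deviant median forces at least $k$ deviant replicates, a union bound over the $\binom{2k-1}{k}$ size-$k$ subsets, and Theorem~\ref{thm:convergencerate} applied to each replicate. The only (immaterial) difference is the order of operations — the paper first bounds $|\hat\mu^{(k)}_E-\hat\mu^{(k)}_\infty|\le\omega_f(2^{-E})$ and counts deviant infinite-precision replicates, whereas you count deviant finite-precision replicates and convert each one individually.
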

\begin{proof}
Let $\hat\mu_{E,r}$ for $r=1,\dots,2k-1$ be independently sampled
estimates of $\mu$ using linear scrambling of precision $E$
with $n=2^m$.
For each of these, let $\hat\mu_{\infty,r}$ be the
corresponding infinite precision sample value
obtained from the same scrambling matrix $M$
and digital shift $D$ that $\hat\mu_{E,r}$ uses.
The median of the $\hat\mu_{\infty,r}$ is
denoted by $\hat\mu^{(k)}_\infty$.
By Lemma~\ref{lem:precisionerror}, $|\hat{\mu}_{E,r}-\hat\mu_{\infty,r}|
\le\omega_f(2^{-E})$ for $1\le r\le 2k-1$
and so  $|\hat{\mu}^{(k)}_{E}-\hat{\mu}^{(k)}_{\infty}|\leq \omega_f(2^{-E})$.

In order to have
$$
|\hat\mu_\infty^{(k)}-\mu|> \rho :=
\frac{A}{\sqrt{\eta}}2^{-\lambda m^2}\sqrt{C_\theta (\theta\sqrt{2\lambda} m)^{2\sqrt{2\lambda} m}+64}$$
there must be at least $k$ of the $\hat{\mu}_{\infty,r}$
with $|\hat\mu_{\infty,r}-\mu|>\rho$.
By applying Theorem~\ref{thm:convergencerate} to 
each $\hat\mu_{\infty,r}$, we find that
$$
\Pr( |\hat\mu_{\infty,r}-\mu|>\rho)
\le
\eta+\frac{\supC}{\sqrt{m}}.$$
The result follows using the union bound on all ${2k-1\choose k}$ possible
sets of $k$ estimates $\hat\mu_{\infty,r}$ with errors above $\rho$
along with $|\hat{\mu}^{(k)}_{E}-\mu|\leq |\hat{\mu}^{(k)}_{E}-\hat{\mu}^{(k)}_{\infty}|+|\hat{\mu}^{(k)}_{\infty}-\mu|$.
\end{proof}

\begin{corollary}\label{cor:concentration}
Under the conditions of Theorem~\ref{thm:samplemedian}
suppose that $8\leq m\leq 512$ and $E\geq \lceil \lambda m^2\rceil$. Then
$$\Pr\Big(|\hat{\mu}^{(k)}_{E}-\mu|
>\Bigl(5 A\sqrt{C_\theta (\theta\sqrt{2\lambda} m)^{2\sqrt{2\lambda} m}+64}+\Vert f'\Vert_\infty\Bigr)2^{-\lambda m^2}\Big)
<\Bigl(\frac{3}{4}\Bigr)^k$$
where $\Vert f'\Vert_\infty =\sup_{0\le x\le 1}|f'(x)|$.
\end{corollary}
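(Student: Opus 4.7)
The plan is to obtain the corollary as a direct specialization of Theorem~\ref{thm:samplemedian} with a carefully chosen $\eta$, together with two routine bookkeeping bounds: one that converts $\omega_f(2^{-E})$ into a multiple of $2^{-\lambda m^2}$, and one that controls the binomial coefficient. Specifically, to produce the leading constant $5A$ in the stated threshold, I would take $\eta = 1/25$ so that $A/\sqrt{\eta}=5A$ exactly. Since $m\leq 512$, Lemma~\ref{lem:concentration} lets me set $\supC=0.4$, so Theorem~\ref{thm:samplemedian} bounds the tail probability by $\binom{2k-1}{k}\bigl(1/25 + 0.4/\sqrt{m}\bigr)^{k}$.

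Next I would replace the $\omega_f(2^{-E})$ term inside the event by $\Vert f'\Vert_\infty 2^{-\lambda m^2}$. Because $f$ is analytic on $[0,1]$, it is continuously differentiable and $\omega_f(t)\le \Vert f'\Vert_\infty t$. The hypothesis $E\ge \lceil \lambda m^2\rceil$ then gives $\omega_f(2^{-E})\le \Vert f'\Vert_\infty 2^{-E}\le \Vert f'\Vert_\infty 2^{-\lambda m^2}$, so the event in Theorem~\ref{thm:samplemedian} is implied by (hence has probability at least as large as) the event in the corollary.

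Finally, I would bound the binomial probability. Using $\binom{2k-1}{k}=\tfrac{1}{2}\binom{2k}{k}\le 4^{k}/2$ and $m\ge 8$, I get
\begin{align*}
\binom{2k-1}{k}\Bigl(\tfrac{1}{25}+\tfrac{0.4}{\sqrt{m}}\Bigr)^{k}
&\le \tfrac{1}{2}\Bigl(\tfrac{4}{25}+\tfrac{1.6}{\sqrt{8}}\Bigr)^{k}
< \tfrac{1}{2}\,(0.726)^{k} < \bigl(\tfrac{3}{4}\bigr)^{k},
\end{align*}
where the last inequality holds because $0.726/0.75<1$ and $\tfrac{1}{2}\cdot 1<1$. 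Combining these three ingredients yields the corollary.

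The work here is genuinely arithmetic rather than conceptual: the only mild obstacle is checking that the three numerical choices interact correctly, i.e., that with $\eta=1/25$ and $m\ge 8$ the base $\tfrac{1}{25}+0.4/\sqrt{m}$ is small enough that $4\bigl(\tfrac{1}{25}+0.4/\sqrt{m}\bigr)$ lies strictly below $3/2$ (so that after absorbing the $1/2$ from $\binom{2k-1}{k}\le 4^{k}/2$, the $k$-th power stays below $(3/4)^{k}$). Everything else is an immediate rewriting of Theorem~\ref{thm:samplemedian}.
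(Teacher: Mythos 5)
Your proof is correct and follows essentially the same route as the paper: choose $\eta=1/25$ in Theorem~\ref{thm:samplemedian}, use $\supC=0.4$ and $m\ge 8$ so that $4\bigl(\eta+0.4/\sqrt{m}\bigr)\le 3/4$, bound $\binom{2k-1}{k}$ by $4^k$ (the paper does this by induction, you via $\binom{2k-1}{k}=\tfrac12\binom{2k}{k}$), and absorb $\omega_f(2^{-E})\le\Vert f'\Vert_\infty 2^{-\lambda m^2}$. One small caveat: your closing remark that it suffices to check $4(1/25+0.4/\sqrt{m})<3/2$ is not adequate as stated (the factor $\tfrac12$ cannot bring $(3/2)^k$ below $(3/4)^k$ once $k\ge 2$), but this is immaterial because your displayed computation verifies the condition that actually matters, namely $4(1/25+0.4/\sqrt{8})\approx 0.726<3/4$.
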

\begin{proof}

We begin by noting that ${2k-1\choose k} < 4^k$ holds
for integers $k\ge1$. It holds for $k=1$ and
to complete an induction argument we
find for $k\ge1$ that
$$
{2k+1 \choose k+1}\Bigm/
{2k-1 \choose k} = \frac{(2k+1)(2k)}{k(k+1)}<4.
$$
We choose $\eta=1/25$ in Theorem~\ref{thm:samplemedian}. Then for $512\ge m\ge 8$
$${2k-1\choose k}\Bigl(\eta+\frac{0.4}{\sqrt{m}}\Bigr)^{k}\leq 4^k\Bigl(\frac{3}{16}\Bigr)^k=\Bigl(\frac{3}{4}\Bigr)^k$$
where $m\ge8$ was used to make
$\eta +0.4/\sqrt{m}\le 3/16$.
Now the conclusion follows because $2^{-E}\leq 2^{- \lambda m^2}$ and $\omega_f(t)\leq\Vert f'\Vert_\infty t$.
\end{proof}

\begin{corollary}
Assume that $E\geq \lceil \lambda m^2\rceil$ and that
$k$ is nondecreasing in $m$. Then
$$\e(|\hat{\mu}^{(k)}_{E}-\mu|^2)\leq
4^{-(1-\frac{4\lambda m}{k+4\lambda m}) \lambda m^2+O(m\log(m))}.$$
In particular, when $k=\Omega(m)$, the MSE of $\hat{\mu}^{(k)}_{E}$ converges to $\mu$ at a super-polynomial rate.
If {further}  $k=\Omega( m^2)$, then
$$\e(|\hat{\mu}^{(k)}_{E}-\mu|^2)\leq 4^{-\lambda m^2+O(m\log(m))}.$$
\end{corollary}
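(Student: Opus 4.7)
The plan is to refine the argument behind Theorem~\ref{thm:samplemedian} by applying Lemma~\ref{lem:concentration} at a variable parameter $\epsilon\in(0,1)$ tuned to $k$, and then to integrate the resulting tail bound to obtain the MSE. First I would let
$H_\epsilon=\bigl\{\min\{\Lnorm: L\in\lset,\ \sum_{\ell\in L}M(\ell,:)=\bszero\}>\lambda(1-\epsilon)m^2\bigr\}$,
which by Lemma~\ref{lem:concentration} satisfies $\Pr(H_\epsilon^c)\le \supC\,(1-\epsilon)^{-1/4}m^{-1/2}\,2^{-\epsilon m/2}$. Re-running the proof of Theorem~\ref{thm:convergencerate} with the lower limit of the sum over $N$ shifted from $\lceil\lambda m^2\rceil$ to $\lceil\lambda(1-\epsilon)m^2\rceil$, and noting that the sub-exponential factor $(\theta\sqrt{2\lambda(1-\epsilon)}m)^{2\sqrt{2\lambda(1-\epsilon)}m}$ is still $4^{O(m\log m)}$ uniformly in $\epsilon\in[0,1]$, yields $V_m^\epsilon := \e\bigl(\var(\hat\mu_\infty-\mu\giv M)\giv H_\epsilon\bigr) \le 4^{-\lambda(1-\epsilon)m^2+O(m\log m)}$. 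Chebyshev conditional on $H_\epsilon$, the bound on $\Pr(H_\epsilon^c)$, and the union bound across $2k-1$ independent replicates then give $\Pr(|\hat\mu_\infty^{(k)}-\mu|>c) \le \binom{2k-1}{k}\bigl(V_m^\epsilon/c^2+\Pr(H_\epsilon^c)\bigr)^k$.

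Next I would use $\e(X^2)\le c^2+\omega_f(1)^2\Pr(X>c)$, valid because $|\hat\mu_\infty^{(k)}-\mu|\le\omega_f(1)$, and set $c^2 = V_m^\epsilon/\Pr(H_\epsilon^c)$ so that the two summands inside the parentheses are equal. The choice $\epsilon = 4\lambda m/(k+4\lambda m)$ makes $(1-\epsilon)\lambda m^2 = \lambda m^2 k/(k+4\lambda m)$, exactly the exponent appearing in the claim. A direct calculation then gives $\log_4 c^2 = -\lambda m^2 k/(k+4\lambda m)+\lambda m^2/(k+4\lambda m)+O(m\log m)$, and $\lambda m^2/(k+4\lambda m)\le m/4$ is absorbed into the $O(m\log m)$ slack. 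The tail contribution $\omega_f(1)^2\binom{2k-1}{k}(2\Pr(H_\epsilon^c))^k\le\omega_f(1)^2(8\Pr(H_\epsilon^c))^k$ has $\log_4$ equal to $-\lambda m^2 k/(k+4\lambda m)-(k/2)\log_4 m+O(k)$, which is again at most $-\lambda m^2 k/(k+4\lambda m)+O(m\log m)$ once $m$ is large enough that $\log_4 m$ exceeds the hidden constant in the $O(k)$ term. The precision shift from $\hat\mu_\infty^{(k)}$ to $\hat\mu_E^{(k)}$ contributes only $O(\omega_f(2^{-E})^2)=O(4^{-\lambda m^2})$ by Lemma~\ref{lem:precisionerror} and the hypothesis $E\ge\lceil\lambda m^2\rceil$, which is absorbed.

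The two special cases are immediate. When $k=\Omega(m)$, $k/(k+4\lambda m)$ stays bounded away from zero, so the exponent is $-\Theta(m^2)+O(m\log m)$, corresponding to the super-polynomial rate $n^{-\Omega(\log n)}$ with $n=2^m$. When $k=\Omega(m^2)$, $4\lambda m/(k+4\lambda m)=O(1/m)$ and hence $\lambda m^2\cdot 4\lambda m/(k+4\lambda m)=O(m)\le O(m\log m)$, so the exponent becomes $-\lambda m^2+O(m\log m)$. The main obstacle is the bookkeeping in the first step: verifying that every constant and every $O(m\log m)$ term in the generalized Theorem~\ref{thm:convergencerate} is uniform in $\epsilon$ (equivalently in $k$). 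This reduces to checking that both $(\theta\sqrt{2\lambda(1-\epsilon)}m)^{2\sqrt{2\lambda(1-\epsilon)}m}=4^{O(m\log m)}$ and the analog of $C_\theta$ remain bounded with constants independent of $\epsilon\in[0,1-\delta]$ for any fixed $\delta>0$, which follows from monotonicity in $\epsilon$ of the corresponding expressions.
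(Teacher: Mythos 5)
Your proposal is correct and follows essentially the same route as the paper: both introduce the $\epsilon$-dependent event $H_\epsilon$, rerun the Chebyshev and union-bound arguments of Theorems~\ref{thm:convergencerate} and~\ref{thm:samplemedian} with the threshold $c$ balanced against the upper bound on $\Pr(H_\epsilon^c)$, and then choose $\epsilon=4\lambda m/(k+4\lambda m)$ to equate the two exponential rates before absorbing the leftover $\epsilon m/4$ term into $O(m\log m)$. The only immaterial differences are that you treat the $k=\Omega(m^2)$ case by bounding the correction $\lambda m^2\cdot 4\lambda m/(k+4\lambda m)=O(m)$ directly, whereas the paper switches to $\epsilon=0$ there, and that you flag the uniformity-in-$\epsilon$ bookkeeping more explicitly than the paper does.
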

\begin{proof} We first introduce a parameter $0\leq \epsilon<1$ and change the event $H$ we used in the proof of Theorem~\ref{thm:convergencerate} to be $$H=\big\{\min\{\Lnorm\mid L\in \lset, \sum_{\ell\in L}M(\ell,:)=\bszero\}> \lambda (1-\epsilon)m^2\big\}.$$ Then as in equation~\eqref{eqn:Chebshev}, we can choose 
$$c=\frac{ A}{\sqrt{\eta}}2^{-\lambda (1-\epsilon) m^2}\sqrt{C_\theta (\theta\sqrt{2\lambda} m)^{2\sqrt{2\lambda} m}+64}$$
and conclude that $\Pr(|\hat{\mu}_{\infty}-\mu|>c)<\eta+\Pr(H^c)$. By Lemma~\ref{lem:concentration}, 
$$\Pr(H^c)<\frac{\supC}{(1-\epsilon)^{{1}/{4}}\sqrt{m}}2^{-{\epsilon m}/{2}},$$
so we can choose $\eta$ such that
$$\eta+\Pr(H^c)=\frac{2\supC}{(1-\epsilon)^{{1}/{4}}\sqrt{m}}2^{-{\epsilon m}/{2}}.$$
With this choice, $c=2^{-\lambda (1-\epsilon) m^2+O(m\log m)}$ and a similar reasoning as in Theorem~\ref{thm:samplemedian} shows
\begin{align}\label{eqn:MSE}
  \e(|\hat{\mu}^{(k)}_{E}-\mu|^2)&\leq \Pr\Big(|\hat{\mu}^{(k)}_{E}-\mu|>c+\omega_f\Bigl(\frac{1}{2^{E}}\Bigr)\Big)\omega_f(1)^2\\
  &\phe\ +\Pr\Big(|\hat{\mu}^{(k)}_{E}-\mu|\leq c+\omega_f\Bigl(\frac{1}{2^{E}}\Bigr)\Big)\Big(c+\omega_f\Bigl(\frac{1}{2^{E}}\Bigr)\Big)^2 \nonumber \\
   &\leq  \Bigl(\frac{8\supC}{(1-\epsilon)^{1/4}\sqrt{m}}\Bigr)^k2^{-\frac{k\epsilon m}{2}}\omega_f(1)^2+4^{-\lambda(1-\epsilon) m^2+O(m\log(m))}
\end{align}
where we have used $\omega_f\bigl(\frac{1}{2^{E}}\bigr)=O(2^{-\lambda m^2})$ when $E\geq \lceil \lambda m^2\rceil$ and ${2k-1\choose k} < 4^k$ as in Corollary~\ref{cor:concentration}.

If we choose $\epsilon={4\lambda m}/({k+4\lambda m})$ so that $2^{-k\epsilon m/2}=4^{-\lambda(1-\epsilon) m^2}$, then
$$\frac{4C}{(1-\epsilon)^{1/4}\sqrt{m}}=\Bigl(\frac{k+4\lambda m}{k}\Bigr)^{1/4}\frac{4C}{\sqrt{m}}\leq (1+4\lambda m)^{1/4}\frac{4C}{\sqrt{m}}=o(1).$$
So the second term in equation~\eqref{eqn:MSE} dominates. When $k=\Omega( m^2)$, we choose $\epsilon=0$ and notice that in this case $(4C/\sqrt{m})^k=o(4^{-\lambda m^2})$.
\end{proof}

Our proof strategy generalizes to digital nets with prime base $b\geq 3$, as we now sketch.
Let $\mathbb{F}^*_b$ be the nonzero elements in field $\mathbb{F}_b$,
let $y_L=(y_{L,1},\dots,y_{L,|L|})$ be a length $|L|$ vector with entries in $\mathbb{F}^*_b$ and let $\mathbb{F}^{*|L|}_b$ be the set of all such length $|L|$ vectors.
One can show that
$$\hat{\mu}_{\infty}-\mu=\sum_{L\in \lset}\sum_{y_L\in \mathbb{F}^{*|L|}_b }\bsone\biggl\{\sum_{\ell\in L}y_{L,\ell}M(\ell,:)=\bszero\biggr\} B_{L,y_L} \zeta_{L,y_L}(D) b^{-\Lnorm}$$
where $\zeta_{L,y_L}(D)$ is a complex number of modulus 1 (actually an integer power of $\exp(2\pi\sqrt{-1}/b)$) and $B_{L,y_L}$ is a constant obeying a bound similar to that in Theorem~\ref{thm:errordecomposition}.
After applying the union bound as in Lemma~\ref{lem:concentration}, one can prove that with high probability all $b^{-\Lnorm}$ are small and the convergence of the median is super-polynomial. However, the constant $c$ in $|\med(\hat{\mu}_{\infty})-\mu|=O(n^{-c\log(n)})$ must be smaller. To see this, notice that each $L$ is associated with $(b-1)^{|L|}$ distinct $y_L$, so
$$|\{y_L\in \mathbb{F}^{*|L|}_b ,L\in \lset \mid \Lnorm = N\}|=\sum_{L\in \lset}\bsone_{\Lnorm = N} (b-1)^{|L|}$$
which is obviously smallest and simplest for $b=2$.
Let $q(N)=\sum_{L\in \lset}\bsone_{\Lnorm = N} $ be the number of $L$ with $\Lnorm=N$.
We know that $$q(N)\sim \frac{C}{N^{3/4}}\exp\Bigl(\pi\sqrt{\frac{N}{3}}\Bigr)$$
for some constant $C$ from VIII.24 of \cite{flaj:sedg:2009}.
Applying the arithmetic versus geometric mean inequality
$$\frac{1}{q(N)}\sum_{L\in \lset}\bsone_{\Lnorm = N} (b-1)^{|L|}\geq (b-1)^{\frac{1}{q(N)}\sum_{L\in \lset}\bsone_{\Lnorm = N}|L|}.$$
Now $q(N)^{-1}\sum_{L\in \lset}\bsone_{\Lnorm = N}|L|$ is the average length of $L$ with $\Lnorm=N$,
and that equals $(2\sqrt{3}\log(2)/\pi)\sqrt{N}+o(\sqrt{N})$
by VII.28 of \cite{flaj:sedg:2009}.
So roughly speaking, $|\{y_L\in \mathbb{F}^{*|L|}_b ,L\in \lset \mid \Lnorm \leq N\}|$ is lower bounded by
$\exp(\kappa_b\sqrt{N})$ for
$$\kappa_b=\pi\sqrt{\frac{1}{3}}+\frac{2\sqrt{3}\log(2)\log(b-1)}{\pi}.$$
By setting $\exp(\kappa_b\sqrt{N})=b^m$, we see that the union bound can at best guarantee no $L$ with
$\Lnorm\leq (\log(b)/\kappa_b)^2 m^2 $ would satisfy $\sum_{\ell\in L}y_{L,\ell}M(\ell,:)=\bszero$ for any $y_L$
and the error bound for $|\hat{\mu}_{\infty}-\mu|$ is at best $O(b^{-(\log(b)/\kappa_b)^2 m^2})=O(n^{-(\log(b)/\kappa_b^2)\log(n)})$.
One can prove that $\log(b)/\kappa_b^2$ is larger for $b=2$
than for any integer $b>2$. To prove this, let $h(b)=\log(b)/\kappa_b^2$ be a function of $b\in[2,\infty)$. The derivative $h'$ is negative for $b>b_*$ with a modest value $b_*$ and we can check all integers in the interval $[2,b_*]$ finding that $b=2$ is the maximizer.
Also $\log(2)/\kappa_2^2=3\log(2)/\pi^2$ is the rate constant that
Remark~\ref{rmk:convergencerate} shows
is attained for the base $2$ case up to an arbitrarily small $\epsilon$.

To be clear, the above heuristic reasoning does not prove the asymptotic convergence rate of the median of random linear scrambling with base $b\geq 3$ is slower. It only suggests the proof strategy used in our paper cannot produce a better upper bound than the base 2 case. Results may change if one can reason more cleverly and replace the union bound with a tighter bound.

\section{Convergence rate under finite differentiability}\label{sec:finitediff}

Although our method is designed for smooth target functions, in applications one may not know the exact smoothness but still want to apply this method. In this section, we show that the median converges at almost the optimal rate for the class of $p$ times differentiable functions $C^p([0,1])$ whose $p$'th derivatives are $\lambda$-Hölder continuous.
The $\lambda$ in H\"older continuity should not be confused with the $\lambda$ from Lemma~\ref{lem:combinatorics}.

First we state the counterpart to Theorem~\ref{thm:convergencerate} in this setting.
By a partition of $[0,1]$ we mean a sequence
of increasing numbers $0=x_0<x_1<\dots<x_N=1$ where $N\in\natu$.
For $0<\lambda\leq 1$ and a function $f$ over $[0,1]$, we use
the $\lambda$-variation measure
$$V_\lambda(f)=\sup_{\cp}\sum_{i=1}^N |x_i-x_{i-1}|\frac{|f(x_i)-f(x_{i-1})|}{|x_i-x_{i-1}|^{\lambda}}$$
from \citet[Chapter 14.4]{dick:pill:2010},
in which $\cp$ is the set of all partitions of $[0,1]$.
Notice that when $\lambda=1$, $V_1(f)$ is the total variation of $f$. If $f$ is $\lambda$-Hölder continuous, namely $|f(x_i)-f(x_{i-1})|\leq C |x_i-x_{i-1}|^{\lambda}$ for some constant $C$, then $V_\lambda(f)$ is finite.

\begin{theorem}\label{thm:Cpconvergencerate}
Let $f\in C^p([0,1])$ for $p\geq 1$ with $V_\lambda(f^{(p)})<\infty$ for some $0<\lambda\leq 1$.
Further assume that $\sup_{x\in[0,1]}|f^{(d)}(x)|\leq A$ for $1\leq d\leq \max(p-1,1)$. For random linear scrambling 
\begin{align}\label{eq:Cpconvergencerate}
\Pr\Big(|\hat{\mu}_{\infty}-\mu|>\frac{C_1}{\sqrt{\eta}}2^{-(p+\lambda) (1-\epsilon) m}+\frac{C_2}{\sqrt{\eta}}2^{-2^{\frac{(1-\epsilon)m}{p}-1}}\Big)< \eta+C_3 2^{-\epsilon m}
\end{align}holds for any $\eta>0$ and $0\leq \epsilon<1$,
where
\begin{align*}
C_1&={\frac{2^{p+\lambda+2}}{\sqrt{p}(p-1)!}} V_\lambda(f^{(p)})\biggl(\sum_{N=p}^\infty N^{p-1}\Bigl(\frac{1}{2}\Bigr)^{\frac{N}{p+\lambda}}\biggr)^{\frac{1}{2}},\\
C_2&=4\sqrt{6}V_\lambda(f^{(p)})+8\sqrt{6}A,\quad\text{and}\\
C_3&=\frac{(p+\lambda)^p}{(p!)^2}\sum_{k=1}^{\infty} \frac{k^p}{2^k}+e-1.
\end{align*}
\end{theorem}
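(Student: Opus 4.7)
The plan is to mirror the proof of Theorem~\ref{thm:convergencerate}: condition on $M$, apply Chebyshev using the pairwise independence of the sign terms $S_L(D)$ from Lemma~\ref{lem:uncorrelated}, and control the bad event that some $L$ with small $\Lnorm$ satisfies $\sum_{\ell\in L}M(\ell,:)=\bszero$. Two essential changes are required relative to the analytic case: (i) a bound on $|B_L|$ tailored to $C^p$ integrands whose $p$'th derivative has finite $\lambda$-variation, replacing~\eqref{eqn:BLbound}, and (ii) a concentration event whose threshold is linear rather than quadratic in $m$, consistent with the polynomial target rate $2^{-(p+\lambda)m}$ rather than a super-polynomial one.

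The main obstacle is the new bound on $|B_L|$. I would revisit the integral representation of $B_L$ developed in the proof of Theorem~\ref{thm:errordecomposition} and integrate by parts up to $p$ times against the Haar-like kernel encoded by $L$, pushing derivatives onto $f$ at each step. The boundary terms, involving $f^{(d)}$ for $d<p$, are controlled using the uniform bound $A$, while the final term is controlled by $V_\lambda(f^{(p)})$. I expect two complementary bounds to emerge: a sharp one of the form $|B_L|\le c_1\,V_\lambda(f^{(p)})|L|^{p-1}/(p-1)!$ (with additional decay in $\Lnorm$ built in through the kernel), valid when $|L|$ is large enough to absorb $p$ rounds of integration by parts, and a crude all-purpose bound of the form $|B_L|\le c_2(V_\lambda(f^{(p)})+A)\,c^{|L|}$ to be invoked only for extremely large $|L|$. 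The delicate calibration is the transition for small $|L|\le p$, where fewer derivatives are available and the sharp derivation runs out of room.

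With those bounds in hand, the Chebyshev step is identical to Theorem~\ref{thm:convergencerate}:
$$\var(\hat{\mu}_{\infty}-\mu\giv M)=\sum_{L\in\lset}\bsone\Bigl\{\sum_{\ell\in L}M(\ell,:)=\bszero\Bigr\}B_L^2\,4^{-\Lnorm}.$$
I would define the concentration event $H$ to exclude all small-$\Lnorm$ collisions, with a threshold linear in $m$ and matched to the target rate $(p+\lambda)(1-\epsilon)m$. Using~\eqref{eqn:probitszero} in a union bound, together with the count of partitions of $\Lnorm\le N$ stratified by $|L|$, the combinatorial sums should collapse to $\Pr(H^c)\le C_3\,2^{-\epsilon m}$; the factor $\sum_{k\ge 1}k^p/2^k$ appearing in $C_3$ is the hallmark of this stratified counting and the $(p+\lambda)^p/(p!)^2$ prefactor records the normalization from the sharp/crude transition.

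Finally I would split the variance bound on $H$ by $|L|$. For $|L|<|L|_0:=2^{(1-\epsilon)m/p-1}$, the sharp $B_L$ bound combined with the Hardy--Ramanujan estimate used in~\eqref{eqn:Bidar} and the geometric factor $4^{-\Lnorm}$ for $\Lnorm>(p+\lambda)(1-\epsilon)m$ contributes the $C_1^2\,4^{-(p+\lambda)(1-\epsilon)m}$ piece to $\e(\var(\hat{\mu}_{\infty}-\mu\giv M)\giv H)$; the sum $\sum_{N\ge p}N^{p-1}(1/2)^{N/(p+\lambda)}$ in $C_1$ appears here as the converging series one gets after substituting the sharp bound and extracting the geometric envelope in $\Lnorm$. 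For $|L|\ge|L|_0$ the inequality $\Lnorm\ge|L|(|L|+1)/2$ lets $4^{-\Lnorm}$ overwhelm the crude $c^{|L|}$ growth of $|B_L|$, yielding the super-exponentially small $C_2^2\,4^{-2^{(1-\epsilon)m/p-1}}$ term. Combining the two contributions via Chebyshev exactly as in~\eqref{eqn:Chebshev} and adding the $C_3\,2^{-\epsilon m}$ from $\Pr(H^c)$ produces~\eqref{eq:Cpconvergencerate}.
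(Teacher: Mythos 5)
There is a genuine gap, and it sits exactly where you flagged the ``delicate calibration'': your proposed bounds on $B_L$ presume that the Walsh coefficients of a $C^p$ function retain the full $2^{-\Lnorm}$ decay with only a polynomial-in-$|L|$ prefactor, and that is false. For $|L_k|>p$ the correct bound (Theorem 14.15 of Dick and Pillichshammer, which is what the paper invokes) is $|\hat f(k)|\le 4V_\lambda(f^{(p)})\,2^{-\Vert L_k\Vert_{1,p}-\lambda L_k(p+1)}$, i.e.\ the decay involves only the $p+1$ largest elements of $L_k$. Taking $L=\{1,2,\dots,q\}$ shows the difference: $\Lnorm$ is quadratic in $q$ while $\Vert L\Vert_{1,p}+\lambda L(p+1)$ is only linear, so no bound of the form $|B_L|\le c\,V_\lambda(f^{(p)})|L|^{p-1}/(p-1)!$ with $B_L=\hat f(k)2^{\Lnorm}$ can hold. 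Consequently your large-$|L|$ argument --- that ``$4^{-\Lnorm}$ overwhelms the crude $c^{|L|}$ growth'' via $\Lnorm\ge|L|(|L|+1)/2$ --- relies on decay the integrand does not supply, and your single concentration event with the linear threshold $\Lnorm>(p+\lambda)(1-\epsilon)m$ is the wrong event for the $|L|>p$ terms: it must instead control $\Vert L\Vert_{1,p}+\lambda L(p+1)$, whose level sets have cardinality growing like $2^{N/(p+\lambda)}$ (this is where the $(p+\lambda)^p(p!)^{-2}\sum_k k^p2^{-k}$ part of $C_3$ actually comes from).

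The paper's proof therefore splits the error decomposition itself (its Lemma on the $C^p$ error), not just the variance sum, into $|L|\le p$ and $|L|>p$ with \emph{different} decay exponents and uniform constant bounds $4V_\lambda(f^{(p)})+8A$ and $4V_\lambda(f^{(p)})$ on the respective $B_L$. The roles of the two terms in the theorem are then the reverse of what you propose: the doubly exponential term $C_2\,2^{-2^{(1-\epsilon)m/p-1}}$ comes from the \emph{small} cardinality sets $|L|\le p$, where the full $2^{-\Lnorm}$ decay is genuinely available and the count of such $L$ with $\Lnorm\le N$ is only polynomial in $N$ (of order $N^p$), so the union bound tolerates the exponentially large threshold $\Lnorm>2^{(1-\epsilon)m/p}$ at cost $(e-1)2^{-\epsilon m}$; the polynomial-rate term $C_1\,2^{-(p+\lambda)(1-\epsilon)m}$ comes from $|L|>p$ with the weaker decay and the linear threshold. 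This also explains why $A$ appears in $C_2$ but not in $C_1$, which your assignment cannot reproduce. Your overall architecture (decomposition, Lemma~\ref{lem:uncorrelated}, Chebyshev conditional on a good event, union bound) is the right one, but without the two-regime decomposition and the modified concentration event the proof does not go through.
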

\begin{proof}
The proof resembles that of Theorem~\ref{thm:convergencerate} and is given in Appendix
\ref{app:proof:thm:Cpconvergencerate}.
\end{proof}

We have the following immediate consequences.

\begin{corollary}
Let $\med(\hat\mu_\infty)$ be the median of
the random variable $\hat\mu_\infty$.  Then under the conditions
of Theorem~\ref{thm:Cpconvergencerate}
$$|\med(\hat{\mu}_{\infty})-\mu|=o(2^{-(p+\lambda)m+\epsilon'm})$$ for any $\epsilon'>0$.
\end{corollary}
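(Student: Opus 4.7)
The plan is to apply Theorem~\ref{thm:Cpconvergencerate} with a carefully chosen pair $(\eta,\epsilon)$ so that the right-hand side tail bound is forced below $1/2$ while the threshold on $|\hat\mu_\infty-\mu|$ is asymptotically smaller than the target rate $2^{-(p+\lambda)m+\epsilon'm}$. Since the median is characterized by tail probabilities of $1/2$, controlling both quantities simultaneously pins $|\med(\hat\mu_\infty)-\mu|$ by the threshold.

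Concretely, given $\epsilon'>0$, I would set $\epsilon=\epsilon'/(2(p+\lambda))$ so that $(p+\lambda)(1-\epsilon)m=(p+\lambda)m-(\epsilon'/2)m$, and fix $\eta=1/4$ (any constant strictly less than $1/2$ works). With these choices the threshold in~\eqref{eq:Cpconvergencerate} becomes
\begin{equation*}
b_m \;:=\; \frac{C_1}{\sqrt{\eta}}\,2^{-(p+\lambda)m+(\epsilon'/2)m}+\frac{C_2}{\sqrt{\eta}}\,2^{-2^{(1-\epsilon)m/p-1}}.
\end{equation*}
The first summand equals $(C_1/\sqrt{\eta})\,2^{-\epsilon'm/2}\cdot 2^{-(p+\lambda)m+\epsilon'm}$, which is $o(2^{-(p+\lambda)m+\epsilon'm})$ since the prefactor decays geometrically in $m$, and the second is doubly exponentially small, so certainly $o(2^{-(p+\lambda)m+\epsilon'm})$. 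Hence $b_m=o(2^{-(p+\lambda)m+\epsilon'm})$.

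Next, for the chosen $\eta$ and $\epsilon$, the probability bound in Theorem~\ref{thm:Cpconvergencerate} is $\eta+C_3 2^{-\epsilon m}=1/4+C_3 2^{-\epsilon m}$, which is strictly less than $1/2$ for all sufficiently large $m$. For such $m$ we have $\Pr(|\hat\mu_\infty-\mu|>b_m)<1/2$. Since the median $\theta=\med(\hat\mu_\infty)$ satisfies both $\Pr(\hat\mu_\infty\ge\theta)\ge 1/2$ and $\Pr(\hat\mu_\infty\le\theta)\ge 1/2$, the inequalities $\theta>\mu+b_m$ or $\theta<\mu-b_m$ would each force $\Pr(|\hat\mu_\infty-\mu|>b_m)\ge 1/2$, contradicting the preceding display. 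Therefore $|\med(\hat\mu_\infty)-\mu|\le b_m=o(2^{-(p+\lambda)m+\epsilon'm})$, as claimed.

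I do not expect any serious obstacle: the argument is a verbatim analogue of the one used for Corollary~\ref{cor:superpolynomialrate} in the smooth case. The only mild care needed is to check that the second, doubly-exponential term is also $o(2^{-(p+\lambda)m+\epsilon'm})$ (trivial) and to spell out the standard median-vs-tail-probability implication that turns a tail bound below $1/2$ into a deterministic bound on $|\med(\hat\mu_\infty)-\mu|$.
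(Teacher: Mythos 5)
Your proposal is correct and follows essentially the same route as the paper: apply Theorem~\ref{thm:Cpconvergencerate} with $\epsilon$ chosen so that $(p+\lambda)\epsilon<\epsilon'$ and with $\eta$ small enough that the tail probability falls below $1/2$, then convert the tail bound into a bound on the median. The only cosmetic difference is that the paper takes $\eta=1/m$ (so the threshold picks up a harmless $\sqrt{m}$ factor) while you fix $\eta=1/4$; both choices work equally well.
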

\begin{proof}
Apply Theorem~\ref{thm:Cpconvergencerate} with $\eta={1}/{m}$ and
$0<\epsilon<{\epsilon'}/{(p+\lambda)}$.
The probability in the right hand side of ~\eqref{eq:Cpconvergencerate} is below $1/2$ and the given bound for $|\hat\mu_\infty-\mu|$ is $o(2^{-(p+\lambda)m+\epsilon'm})$.
\end{proof}

Turning to the sample median, the smoother
$f$ is, the smaller are the probable
errors.  If we want to control the
expected squared error of the sample
median, then we will need $k$ to be large
enough and smoother $f$ will demand
larger $k$ as shown next. We do not
need $k$ to grow without bound.
\begin{corollary}\label{cor:medianofmeansholder}
Suppose that the scrambling has precision $E$ and $\hat{\mu}^{(k)}_{E}$ is the sample median of $2k-1$ independent copies of $\hat{\mu}_E$.
 Under the conditions of Theorem~\ref{thm:Cpconvergencerate},
\begin{compactenum}[(i)]
\item for any $\epsilon>0$, when $m$ is large enough so that $C_3 2^{-\epsilon m}<{1}/{16}$,
$$\Pr\Big(|\hat{\mu}^{(k)}_{E}-\mu|>4C_12^{-(p+\lambda) (1-\epsilon) m}+4C_22^{-2^{\frac{(1-\epsilon)m}{p}-1}}+\omega_f\Bigl(\frac{1}{2^E}\Bigr)\Big)< \Big(\frac{1}{2}\Big)^k$$
and
\item for any $\epsilon'>0$, if
$k=k(m)$ satisfies $\liminf_{m\to\infty}  \epsilon' k\geq 8(p+\lambda)^2$ and $E\geq (p+\lambda)m$, then
$$\e(|\hat{\mu}^{(k)}_{{E}}-\mu|^2)=o(2^{-2(p+\lambda)m+\epsilon'm}).$$
\end{compactenum}
\end{corollary}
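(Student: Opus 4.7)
The plan is to lift Theorem~\ref{thm:Cpconvergencerate}, which controls a single scrambled estimate, to the sample median in exactly the way Theorem~\ref{thm:samplemedian} and Corollary~\ref{cor:concentration} lifted Theorem~\ref{thm:convergencerate} in the analytic case. Part (i) is a direct translation; part (ii) requires carefully choosing the tuning parameters $\eta$ and $\epsilon$ of Theorem~\ref{thm:Cpconvergencerate} against the sample size $k$.

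For part (i), generate $2k-1$ independent scramblings, letting $\hat\mu_{\infty,r}$ denote the infinite-precision version and $\hat\mu_{E,r}$ the precision-$E$ version, $1\le r\le 2k-1$. By Lemma~\ref{lem:precisionerror}, $|\hat\mu_{E,r}-\hat\mu_{\infty,r}|\le\omega_f(2^{-E})$, so $|\hat\mu^{(k)}_E-\hat\mu^{(k)}_\infty|\le\omega_f(2^{-E})$. Apply Theorem~\ref{thm:Cpconvergencerate} to each replicate with $\eta=1/16$ (so that $1/\sqrt\eta=4$ matches the factors $4C_1$ and $4C_2$). Under the hypothesis $C_3\,2^{-\epsilon m}<1/16$,
\[
\Pr\bigl(|\hat\mu_{\infty,r}-\mu|>c_0\bigr)<\eta+C_3\,2^{-\epsilon m}<\tfrac18,\qquad c_0:=4C_1\,2^{-(p+\lambda)(1-\epsilon)m}+4C_2\,2^{-2^{(1-\epsilon)m/p-1}}.
\]
For the sample median to exceed $c_0$, at least $k$ of the $2k-1$ replicates must individually exceed $c_0$. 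A union bound over the $\binom{2k-1}{k}<4^k$ size-$k$ subsets gives $\Pr(|\hat\mu^{(k)}_\infty-\mu|>c_0)<4^k(1/8)^k=(1/2)^k$. The triangle inequality then absorbs $\omega_f(2^{-E})$, producing the claim of (i).

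For part (ii), split the expected squared error as
\begin{align*}
\e\bigl(|\hat\mu^{(k)}_E-\mu|^2\bigr)\le \bigl(c+\omega_f(2^{-E})\bigr)^2
+\omega_f(1)^2\,\Pr\bigl(|\hat\mu^{(k)}_E-\mu|>c+\omega_f(2^{-E})\bigr).
\end{align*}
Here $c$ is the threshold from Theorem~\ref{thm:Cpconvergencerate}, which scales like $1/\sqrt{\eta}\cdot 2^{-(p+\lambda)(1-\epsilon)m}$. Rather than fixing $\eta=1/16$, we now let $\eta=2^{-\alpha m}$ and $\epsilon$ be small constants chosen so that $\alpha+2(p+\lambda)\epsilon\le\epsilon'-o(1)$; this forces $c^2=o(2^{-2(p+\lambda)m+\epsilon'm})$, and under $E\ge(p+\lambda)m$ the precision term $\omega_f(2^{-E})^2$ is dominated by the same rate. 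For the tail contribution, the union bound from (i) with the shrinking $\eta$ gives
$$\Pr\bigl(|\hat\mu^{(k)}_\infty-\mu|>c\bigr)\le \binom{2k-1}{k}\bigl(\eta+C_3\,2^{-\epsilon m}\bigr)^k\le \bigl(8\eta+o(\eta)\bigr)^k,$$
which, after multiplication by $\omega_f(1)^2$, lies in $o(2^{-2(p+\lambda)m+\epsilon'm})$ exactly when $\alpha k\ge 2(p+\lambda)-\epsilon'$. The lower bound $\liminf_{m\to\infty}\epsilon' k\ge 8(p+\lambda)^2$ is precisely the bookkeeping constant required to make the joint choice of $\alpha$, $\epsilon$, and $k$ consistent with both the threshold constraint and the tail constraint simultaneously.

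The main obstacle is this simultaneous tuning in part (ii). Shrinking $\eta$ tightens the tail bound $(8\eta)^k$ but inflates $c$ because of the $1/\sqrt\eta$ prefactor; enlarging $\eta$ does the opposite. Unlike the analytic analogue in Section~\ref{sec:thepopulationmedian}, where super-polynomial decay leaves slack, here every factor in the exponent is visible in the final polynomial rate. Verifying that the stated linear-in-$m$ growth of $k$ (encoded by the constant $8(p+\lambda)^2$) is exactly the threshold at which both inequalities can be satisfied, and that the constants $C_1, C_2, C_3$ and the $\omega_f(1)$ factor do not spoil the asymptotics, is the bulk of the technical work.
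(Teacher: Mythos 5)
Your proposal is correct and follows essentially the same route as the paper: part (i) is the argument of Theorem~\ref{thm:samplemedian} with $\eta=1/16$ and the $\binom{2k-1}{k}<4^k$ union bound, and part (ii) is the paper's good/bad-event split of the MSE with $\eta$ chosen exponentially small in $m$. The only difference is that the paper ties your two parameters together, taking $\eta=2^{-\epsilon m}$ with $\epsilon\in(\epsilon'/(4(p+\lambda)),\epsilon'/(2(p+\lambda)))$, which makes the final bookkeeping you defer immediate: $\liminf \epsilon k\ge \epsilon' k/(4(p+\lambda))\ge 2(p+\lambda)$ kills the tail term, while $2(p+\lambda)\epsilon<\epsilon'$ (plus the boundedness of $f'$ to handle $\omega_f(2^{-E})$) controls the threshold term.
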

\begin{proof}

By the argument in the proof of  Theorem~\ref{thm:samplemedian}, the probability in (i) is bounded by $4^k(\eta+C_3 2^{-\epsilon m})^k$
where $C_3$ is from Theorem~\ref{thm:Cpconvergencerate}.
Claim (i) follows once we choose $\eta={1}/{16}$
(making $\eta+C_32^{-\epsilon m}\le 1/8$)
and apply $|\hat{\mu}^{(k)}_{E}-\mu|\leq |\hat{\mu}^{(k)}_{\infty}-\mu|
+\omega_f(2^{-E})$.

To prove claim (ii), first notice that $f^{(1)}$ is either $\lambda$-Hölder continuous or differentiable, so it must be bounded over $[0,1]$ and  $\omega_f(2^{-E})\leq C2^{-E}$ for some constant $C$. Then $E\geq (p+\lambda)m$ implies
that $\omega_f(2^{-E})=O(2^{-(p+\lambda)m})$.
Now choose
$$\epsilon\in\Bigl(
\frac{\epsilon'}{4(p+\lambda)},
\frac{\epsilon'}{2(p+\lambda)}\Bigr)$$ and $\eta=2^{-\epsilon m}$.
Similar to the way equation (\ref{eqn:MSE}) separates
contributions from large and small errors,
\begin{align*}
    \e(|\hat{\mu}_{E}-\mu|^2)&= O(2^{-2(p+\lambda)(1-\epsilon)m})+4^k(\eta+C_3 2^{-\epsilon m})^k\Vert f^{(1)}\Vert_\infty^2\\
    &=o(2^{-2(p+\lambda)m+\epsilon'm})+2^{-\epsilon m k+O(k)}.
\end{align*}
Since
$$\liminf_{m\to\infty}  \epsilon k\geq\liminf_{m\to\infty} \frac{\epsilon'k}{4(p+\lambda)}\geq 2 (p+\lambda),$$
we see that $2^{-\epsilon m k+O(k)}=o(2^{-2(p+\lambda)m+\epsilon'm})$. This proves the claim.
\end{proof}

We know that when $f\in C^p([0,1])$ and $f^{(p)}$ is $\lambda$-Hölder continuous, the optimal convergence rate is $O(n^{-p-\lambda})$.
See \cite{hein:nova:2002}.
Since a $\lambda$-Hölder continuous function has finite $V_\lambda(f)$,
the above corollary shows that if $k\to\infty$ as $m\to \infty$, then the sample median converges at the rate $o(n^{-p-\lambda+\epsilon})$ for any $\epsilon>0$, so it converges at almost the optimal rate.  The cost of computation
grows proportionally to $nk$.  Taking $k=\Omega(m)$ leads to a cost of $O(n\log(n))$.

\section*{Acknowledgments}

This work was supported by the U.S.\ National Science Foundation
under grant IIS-1837931.
We thank Mark Huber for a discussion about median of means,  Aleksei
Sorokin for a conversation about QMCPy and both Takashi Goda
and Pierre L'Ecuyer for discussions about super-polynomial convergence.
We thank Fred Hickernell
and the QMCPy team for making their
software available.  Finally, we are grateful
to the anonymous reviewers for their very helpful comments.

\bibliographystyle{apalike}
\bibliography{qmc}

\appendix
\section{Proof of Theorem~\ref{thm:errordecomposition}}
\label{app:proof:thm:errordecomposition}

Here we establish some lemmas and then prove Theorem~\ref{thm:errordecomposition}.
First, we introduce and slightly modify
the Walsh function notation for base $2$
from Appendix A of \cite{dick:pill:2010}
who credit \cite{pirs:1995}.
For $k\in\natu$,
we write $k=k_1+2k_2+\dots+2^{q-1} k_q$ with $k_q=1$.
Then the $k$'th Walsh function is
$$\walk(x)=(-1)^{k_1x_1+\dots+k_qx_q}=(-1)^{\vec{k}\cdot\vec{x}}$$
with both $\vec{x}$ and $\vec{k}$ taken to $q=q(k)$ bits.
The slight difference in our notation is that both of our
vectors $\vec{x}$ and $\vec{k}$ are indexed starting from $1$
while \cite{dick:pill:2010} index the bits of $k$ from zero and the bits of $x$ from $1$.
We also put $\mathrm{wal}_0(x)=1$ for all $x$.

Now for integers $k>0$, define $L_k=\{\ell\in [q(k)]\mid k_\ell=1 \}\in\lset$.
This $L_k$ will correspond to a set of row indices of $M$
when we interpret the binary expansion of $k$
as a 0--1 coding
for which rows are included.
Next, let $L_k(j)$ be the $j$-th largest element of $L_k$, $1\leq j\leq |L_k|$, and
for $1\le u\le |L_k|$ let $\Vert L_k\Vert_{1,u}$ be the sum of the largest $u$ elements of $L_k$, namely  $\Vert L_k \Vert_{1,u}=\sum_{j=1}^u L_k(j)$.
Finally, define $\chi_{r,k}$ to be
$$\chi_{r,k}=\int_{0}^1x^r{\text{wal}_{k}(x)}\rd x.$$
\begin{lemma}\label{lem:chiequalzero}
 If $0\leq r<|L_k|$, then $\chi_{r,k}=0$.
\end{lemma}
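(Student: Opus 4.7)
My approach is to interpret $\chi_{r,k}$ probabilistically, writing $\chi_{r,k}=\e[X^r\walk(X)]$ where $X$ is uniformly distributed on $[0,1]$. The binary expansion $X=\sum_{i\ge 1}X_i 2^{-i}$ then has independent $\dunif\{0,1\}$ bits, which will be the source of the cancellation. From the definition, $\walk(X)=\prod_{\ell\in L_k}(-1)^{X_\ell}$, a product of independent Rademacher-type factors indexed precisely by the elements of $L_k$.

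The first step is to expand the polynomial factor in terms of the bits. Truncate to $X_N=\sum_{i=1}^N X_i 2^{-i}$ for some $N\geq \max L_k$, so that $\walk(X)$ depends only on $X_1,\ldots,X_N$. Expanding $X_N^r$ and collapsing repeated bits via the identity $X_i^2=X_i$ yields
\[
X_N^r=\sum_{\substack{T\subseteq[N]\\ |T|\leq r}}c_T\prod_{i\in T}X_i
\]
for some nonnegative coefficients $c_T$. Multiplying by $\walk(X)$ and taking expectations term by term gives $\e[X_N^r\walk(X)]=\sum_T c_T\,\e[\prod_{i\in T}X_i\cdot\prod_{\ell\in L_k}(-1)^{X_\ell}]$.

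The crux is then a one-line pairing argument. For any index set $T$ appearing in the sum, the hypothesis $r<|L_k|$ forces $|T|\leq r<|L_k|$, so $L_k\setminus T$ is nonempty; pick any $\ell_0$ in it. The factor $(-1)^{X_{\ell_0}}$ is independent of every other factor inside $\prod_{i\in T}X_i\cdot\prod_{\ell\in L_k\setminus\{\ell_0\}}(-1)^{X_\ell}$, and $\e[(-1)^{X_{\ell_0}}]=0$, so each such term vanishes. Hence $\e[X_N^r\walk(X)]=0$ for every $N\geq\max L_k$, and bounded convergence (using $|X_N^r|,|X^r|\leq 1$) lets us take $N\to\infty$ to conclude $\chi_{r,k}=0$.

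I do not expect a substantive obstacle. The only point that requires any care is recognizing that the assumption $r<|L_k|$ is used precisely to guarantee that in every monomial $\prod_{i\in T}X_i$ there is at least one ``unpaired'' Rademacher factor in $\walk(X)$ that supplies the zero mean; the limiting argument in $N$ is routine.
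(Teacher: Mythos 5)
Your argument is correct, but it is a genuinely different route from the paper's: the paper disposes of this lemma by citing Lemma A.22 of \cite{dick:pill:2010} (see also their Section 14.3), whereas you give a self-contained probabilistic proof. Your key device --- writing $\chi_{r,k}=\e[X^r\walk(X)]$ with $X\sim\dunif[0,1]$, noting that the binary digits of $X$ are independent, expanding the truncated power $X_N^r$ into square-free monomials in at most $r$ bits, and killing each term by exhibiting an index $\ell_0\in L_k$ not appearing in the monomial with $\e[(-1)^{X_{\ell_0}}]=0$ --- is sound; the hypothesis $r<|L_k|$ enters exactly where it should, and the passage $N\to\infty$ by bounded convergence is routine since the integrand is bounded by $1$ and the dyadic rationals have measure zero. (One could even skip the truncation by expanding $X^r$ as an absolutely convergent series over multi-indices and applying Fubini, since the coefficients $2^{-(i_1+\cdots+i_r)}$ sum to $1$.) What your approach buys is transparency and self-containedness: it makes clear that the vanishing threshold is precisely $|L_k|$ and it reuses the same Rademacher-independence mechanism the paper exploits for $S_L(D)$ in Lemma~\ref{lem:uncorrelated}. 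What the paper's citation buys is brevity and consistency with the reference it must invoke anyway for the companion quantitative bound in Lemma~\ref{lem:chibound}, whose proof runs through the recursion (14.5) of the same source.
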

\begin{proof}
This follows from Lemma A.22 of \cite{dick:pill:2010}; see also Section 14.3 in that reference.
\end{proof}
\begin{lemma}\label{lem:chibound}
For any $u$ such that $1\leq  u\leq |L_k|$
$$|\chi_{r,k}|\leq  \frac{r!}{(r-u+1)!}\Big(\prod_{w=1}^{u}1+4^{-w+1}\Big)2^{-\Vert L_k \Vert_{1,u}-u}.$$
\end{lemma}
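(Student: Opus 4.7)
The strategy is to integrate by parts $u-1$ times, peeling off one of the high-frequency scales of $\walk$ at each step via a carefully centred iterated antiderivative, and then to bound the remaining integral at the scale $L_k(u)$ by a pairing argument. Concretely, set $W_0=\walk$ and inductively define $W_w(x)=\int_0^x W_{w-1}(t)\rd t-c_w$, where the centring constant $c_w$ is chosen so that $\int_0^1 W_w(x)\rd x=0$. Because $\int_0^1\walk(x)\rd x=0$ for $k\ne 0$, all boundary terms in the successive integrations by parts then vanish.

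The quantitative heart of the argument is the amplitude bound
\begin{equation*}
\Vert W_w\Vert_\infty\le\prod_{v=1}^{w}\bigl(1+4^{-v+1}\bigr)\,2^{-\Vert L_k\Vert_{1,w}-w},\qquad 1\le w\le u,
\end{equation*}
which I would prove by induction on $w$. The base case $w=1$ uses that $\walk$ takes the values $\pm 1$ on dyadic sub-intervals of length $2^{-L_k(1)}$, with opposite signs on the two halves of every block of length $2^{-L_k(1)+1}$; hence $\int_0^x\walk\rd t$ is piecewise linear with amplitude $2^{-L_k(1)}$, which is exactly $(1+4^{0})\,2^{-L_k(1)-1}$. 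The inductive step treats $W_w$ as a centred running integral of $W_{w-1}$, and uses that on dyadic blocks at scale $L_k(w)$ the oscillations inherited from the already-absorbed finer scales $L_k(1),\dots,L_k(w-1)$ average out to leading order; the residual mismatch across blocks yields the factor $1+4^{-w+1}$. With this amplitude bound in hand, $u-1$ successive integrations by parts turn the original integral into
\begin{equation*}
\chi_{r,k}=\int_0^1 x^r W_0(x)\rd x=(-1)^{u-1}\frac{r!}{(r-u+1)!}\int_0^1 x^{r-u+1}W_{u-1}(x)\rd x,
\end{equation*}
each step contributing exactly one falling factor $r-v$ and replacing $W_v$ by $W_{v+1}$, with all boundary terms vanishing by construction. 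A final pairing argument at scale $L_k(u)$, applied to $W_{u-1}$ against the bounded factor $x^{r-u+1}$ and mirroring the base case of the amplitude bound, extracts a further $(1+4^{-u+1})\,2^{-L_k(u)-1}$ without producing another power of $r$, giving the claimed inequality.

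\textbf{Main obstacle.} The delicate point is the amplitude estimate for $W_w$, in particular pinning down the constant $1+4^{-w+1}$. A cruder bound of the form $C\cdot 2^{-\Vert L_k\Vert_{1,w}-w}$ for some absolute $C$ is not hard, but tracking the geometrically decaying correction $4^{-w+1}$ requires a careful use of the block structure of Walsh functions across all of the absorbed scales $L_k(1),\dots,L_k(w)$. The bookkeeping parallels that of Chapter~14 of \cite{dick:pill:2010}, translated to the present indexing in which $L_k(1)$ is the finest (largest-indexed) bit of $L_k$ and $L_k(|L_k|)$ the coarsest.
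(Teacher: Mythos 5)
Your route---iterated integration by parts against successive antiderivatives $W_w$ of $\walk$, an $L^\infty$ amplitude bound on $W_w$, and a final pairing step---is genuinely different from the paper's proof, which follows Lemma 14.10 of \cite{dick:pill:2010}: it starts from the recursion (14.5) there, relating $\chi_{r,k}$ to the coefficients $\chi_{r-1,k'}$ in which the top bit of $k$ is either removed or pushed to a higher position, and then inducts on $u$ using the trivial base bound $|\chi_{r-1,k}|\le 1/r$. However, as written your proposal has one technical error and one genuine gap.

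The error is in the centring. The boundary term produced when you integrate $x^{r-w+1}$ against $W_w'=W_{w-1}$ is $[x^{r-w+1}W_w(x)]_0^1=W_w(1)$, so to make it vanish you must take $c_w=\int_0^1W_{w-1}(t)\rd t$, i.e.\ $W_w(x)=-\int_x^1W_{w-1}(t)\rd t$; the condition you impose, $\int_0^1W_w(x)\rd x=0$, is a different normalization and does not kill this term (the two conditions are generically incompatible). With the correct choice the identity $\chi_{r,k}=(-1)^{u-1}\tfrac{r!}{(r-u+1)!}\int_0^1x^{r-u+1}W_{u-1}(x)\rd x$ does hold, so this is fixable, but your amplitude estimate must then be proved for that version of $W_w$.

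The gap is that essentially the entire content of the lemma sits in the amplitude bound $\Vert W_w\Vert_\infty\le\prod_{v=1}^{w}(1+4^{-v+1})2^{-\Vert L_k\Vert_{1,w}-w}$ and in the final pairing step, and neither is proved. The difficulty is structural: writing $\walk(x)=(-1)^{x_q}\mathrm{wal}_{k'}(x)$ with $q=L_k(1)$ and $k'=k-2^{q-1}$, the first antiderivative is a triangle wave at scale $q$ modulated by $\mathrm{wal}_{k'}$, and the triangle wave's Walsh spectrum contains all frequencies at scales $q, q+1, q+2,\dots$; so $W_1$ is not a Walsh function, and $W_{w}$ is contaminated by arbitrarily fine scales in addition to the coarse character $\mathrm{wal}$ at the remaining bits. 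Your inductive step (``oscillations from the absorbed finer scales average out to leading order'') and your final step (extracting $(1+4^{-u+1})2^{-L_k(u)-1}$ from $\int_0^1x^{r-u+1}W_{u-1}\rd x$) both presuppose that $W_{u-1}$ retains a clean sign-alternation structure at scale $L_k(u)$ despite this contamination; that is exactly the nontrivial claim, and you flag it as the unresolved ``main obstacle.'' This fine-scale leakage is precisely what the infinite sum $\sum_{c\ge1}2^{-c}\chi_{r-1,k+2^{c+L_k(1)}}$ in the paper's recursion accounts for, and it is the source of the factors $1+4^{-w+1}$. Carrying out your bookkeeping honestly would amount to re-deriving that recursion, so the cleaner path is the paper's: cite (14.5) of \cite{dick:pill:2010} and run a single induction on $u$.
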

\begin{proof} The following proof uses the same proof strategy as that of Lemma 14.10 of \cite{dick:pill:2010}.

By equation (14.5) of \cite{dick:pill:2010} for $b=2$
\begin{equation}\label{eqn:recursion}
    \chi_{r,k}=-\frac{r}{2^{L_k(1)+1}}\Bigl(\chi_{r-1,k-2^{\kinfty}}-\sum_{c=1}^\infty \frac{1}{2^{c}}\chi_{r-1,k+2^{c+\kinfty}}\Bigr).
\end{equation}
When $u=1$, notice that for any $k$
$$|\chi_{r-1,k}|\leq \int_{0}^1 x^{r-1}\rd x=\frac{1}{r}.$$
Applying the above bound, we get
$$|\chi_{r,k}|\leq \frac{r}{2^{\kinfty+1}}\Bigl(\frac{1}{r}+\sum_{c=1}^\infty \frac{1}{2^{c}r}\Bigr)=\frac{2}{2^{\kinfty+1}}.$$
This proves the base case. Next we prove the bound for $u\geq 2$ by induction on~$u$.

If $|L_k|\geq u$, then $|L_{k-2^{\kinfty}}|=|L_k|-1 \geq u-1$ and $|L_{k+2^{c+\kinfty}}|+1=|L_k|+1\geq u-1$. Hence we can apply the induction hypothesis with $u-1$ to
equation \eqref{eqn:recursion} to get
\begin{align*}
   |\chi_{r,k}|&\leq \frac{r}{2^{\kinfty+1}}\bigg\{\frac{(r-1)!}{(r-u+1)!}\bigg(\prod_{w=1}^{u-1}1+4^{-w+1}\bigg)2^{-\Vert L_{k-2^{\kinfty}} \Vert_{1,u-1}-u+1} \\
   &\phe\,+\sum_{c=1}^\infty \frac{1}{2^{c}} \frac{(r-1)!}{(r-u+1)!}\bigg(\prod_{w=1}^{u-1}1+4^{-w+1}\bigg)2^{-\Vert L_{k+2^{c+\kinfty}} \Vert_{1,u-1}-u+1}\bigg\}\\
   &=\frac{r!}{(r-u+1)!2^u}\bigg(\prod_{w=1}^{u-1}1+4^{-w+1}\bigg) \\
   &\phe\times\Big\{2^{-\Vert L_{k-2^{\kinfty}} \Vert_{1,u-1}-\kinfty}+\sum_{c=1}^\infty \frac{1}{2^{c}}2^{-\Vert L_{k+2^{c+\kinfty}} \Vert_{1,u-1}-\kinfty}\Big\}.
\end{align*}
Now notice that
$$\kinfty-L_k(u)=\sum_{j=1}^{u-1} L_k(j)-L_k(j+1)\geq u-1$$
and then
\begin{align*}
  \Vert L_{k+2^{c+\kinfty}} \Vert_{1,u-1}+\kinfty
  &=\Vert L_{k} \Vert_{1,u}+c+2\kinfty-L_k(u-1)-L_k(u)\\
  &\geq \Vert L_{k} \Vert_{1,u}+c+2u-3.
\end{align*}
We also have $\Vert L_{k-2^{\kinfty}} \Vert_{1,u-1}+\kinfty=\Vert L_{k} \Vert_{1,u}$. Hence
\begin{align*}
   |\chi_{r,k}|&\leq \frac{r!}{(r-u+1)!2^u}\bigg(\prod_{w=1}^{u-1}1+4^{-w+1}\bigg)\Big\{2^{-\Vert L_{k} \Vert_{1,u}}+\sum_{c=1}^\infty \frac{1}{2^{c}}2^{-\Vert L_{k} \Vert_{1,u}-c-2u+3}\Big\}. \\
   &=\frac{r!}{(r-u+1)!2^{\Vert L_{k} \Vert_{1,u}+u}}\bigg(\prod_{w=1}^{u-1}1+4^{-w+1}\bigg)\Big\{1+\sum_{c=1}^\infty \frac{1}{4^{c}}2^{-2u+3}\Big\} \\
   &\leq \frac{r!}{(r-u+1)!}\bigg(\prod_{w=1}^{u}1+4^{-w+1}\bigg)2^{-\Vert L_k \Vert_{1,u}-u}.
\end{align*}
This completes the proof.
\end{proof}

\begin{lemma}\label{lem:Walshcoefficient}
Assume that $f$ is analytic on $[0,1]$ and $|f^{(d)}(1/2)|\leq A\alpha^dd!$ for some constants $A$ and $\alpha<2$ and all $d\ge1$. Then
$$|\hat{f}(k)|\leq 6A(|L_{k}|)!\Bigl(\frac{\alpha/2}{1-\alpha/2}\Bigr)^{|L_{k}|}2^{-\Vert L_k \Vert_{1}}.$$
\end{lemma}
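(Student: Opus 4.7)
The plan is to Taylor-expand $f$ around $1/2$, reduce to Walsh coefficients of monomials, and then apply Lemma~\ref{lem:chibound}. Since $|f^{(d)}(1/2)| \le A \alpha^d d!$ with $\alpha < 2$, the Taylor series $f(x) = \sum_{d \ge 0} c_d(x-1/2)^d$ with $|c_d| \le A\alpha^d$ has radius of convergence at least $1/\alpha > 1/2$, so it converges absolutely and uniformly on $[0,1]$. Interchanging summation and integration gives $\hat{f}(k) = \sum_{d \ge 0} c_d \varphi_d(k)$ where $\varphi_d(k) = \int_0^1 (x-1/2)^d \walk(x) \rd x$.

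Next, I would expand $(x-1/2)^d = \sum_{r=0}^d \binom{d}{r}(-1/2)^{d-r} x^r$ and swap the outer sum over $d$ with the inner sum over $r$. The inner coefficient $\sum_{d \ge r} c_d \binom{d}{r}(-1/2)^{d-r}$ is exactly $f^{(r)}(0)/r! =: a_r$, so $\hat{f}(k) = \sum_{r \ge 0} a_r \chi_{r,k}$. Lemma~\ref{lem:chiequalzero} zeros out the terms with $r < s := |L_k|$, leaving $\hat{f}(k) = \sum_{r \ge s} a_r \chi_{r,k}$. Bounding term by term back through the Taylor expansion at $1/2$ yields $|a_r| \le A\alpha^r(1-\alpha/2)^{-r-1}$ via the binomial series $\sum_{e \ge 0} \binom{e+r}{e}(\alpha/2)^e = (1-\alpha/2)^{-r-1}$.

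Now apply Lemma~\ref{lem:chibound} with $u = s$: $|\chi_{r,k}| \le \frac{r!}{(r-s+1)!} K_s \cdot 2^{-S-s}$, where $S = \Vert L_k\Vert_1$ and $K_s = \prod_{w=1}^s (1+4^{-w+1}) < 3$. Combining the two bounds and substituting $j = r - s + 1$, the resulting sum becomes
\[
|\hat{f}(k)| \le \frac{3A \cdot 2^{-S-s} \gamma_0^{s-1} (s-1)!}{1-\alpha/2}\sum_{j \ge 1} \binom{s+j-1}{s-1} \gamma_0^j,
\]
with $\gamma_0 = \alpha/(1-\alpha/2) = 2\alpha/(2-\alpha)$. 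The classical identity $\sum_{j \ge 0} \binom{s+j-1}{s-1} z^j = (1-z)^{-s}$ evaluates the remaining sum in closed form, and after using $\gamma_0 = 2\gamma$ with $\gamma = \alpha/(2-\alpha)$ together with $(1-\alpha/2)(1-\gamma_0) = 1-3\alpha/2$, one should recover a bound of the form $C \cdot A \cdot s! \gamma^s 2^{-S}$ for an explicit constant $C$.

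The main obstacle I anticipate is that the generating-function identity requires $|\gamma_0| < 1$, equivalently $\alpha < 2/3$. To cover the full range $\alpha < 2$ as claimed, I would split the sum at a cutoff $r_0$: for $r \le r_0$ keep the bound from Lemma~\ref{lem:chibound}, and for $r > r_0$ fall back on the trivial estimate $|\chi_{r,k}| \le 1/(r+1)$. Tuning $r_0$ appropriately, perhaps combined with the global estimate $\Vert f\Vert_\infty \le 2A/(2-\alpha)$ that follows from the Taylor series at $1/2$, and carefully tracking the $2/(2-\alpha)$ factors that arise from $(1-\alpha/2)^{-1}$, should yield the uniform constant $6$ stated in the lemma.
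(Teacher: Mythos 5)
Your reduction to $\hat f(k)=\sum_{r\ge 0} a_r\chi_{r,k}$ with $a_r=f^{(r)}(0)/r!$ is where the argument breaks, and the obstacle you flag is fatal rather than technical. The hypothesis only forces $f$ to be analytic on a disk of radius slightly larger than $1/2$ about $1/2$ (e.g.\ $f(x)=((x-1/2)^2+\delta^2)^{-1}$ with $\delta$ slightly above $1/2$), so the Taylor series of $f$ at $0$ can have radius of convergence below $1$; your own estimate $|a_r|\le A\alpha^r(1-\alpha/2)^{-r-1}$ is essentially sharp, and $|a_r|$ genuinely grows like $\gamma_0^r$ with $\gamma_0=2\alpha/(2-\alpha)>1$ once $\alpha>2/3$. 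Since for fixed $k$ the moments $\chi_{r,k}=\int_0^1x^r\walk(x)\rd x$ decay only like $1/(r+1)$ in $r$ (the mass of $x^r$ concentrates near $x=1$, where $\walk$ has a fixed sign), the rearranged series $\sum_r a_r\chi_{r,k}$ does not merely resist estimation for $\alpha\ge 2/3$: it diverges, and the interchange of summation that produces it is already unjustified (absolute convergence of the double sum requires roughly $\sum_d|c_d|(3/2)^d<\infty$, i.e.\ $\alpha<2/3$ again). Consequently no cutoff $r_0$ can help --- the tail $\sum_{r>r_0}a_r\chi_{r,k}$ is not small under the trivial bound $|\chi_{r,k}|\le 1/(r+1)$, and there is no leftover integral to which $\Vert f\Vert_\infty$ could be applied, because the identity you would be truncating is false for such $f$.

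The paper's proof supplies exactly the missing idea: split the integral at $1/2$ and rescale each half to $[0,1]$. With $y=2x-1$ and $g(y)=f(y/2+1/2)$ one gets $\int_{1/2}^1 f(x)\walk(x)\rd x=\frac{(-1)^{k_1}}{2}\int_0^1 g(y)\,\mathrm{wal}_{k'}(y)\rd y$ with $k'=(k-k_1)/2$, so the expansion point $1/2$ becomes the \emph{endpoint} $0$ of the new interval and, crucially, $|g^{(r)}(0)|=|f^{(r)}(1/2)|/2^r\le Ar!(\alpha/2)^r$ with ratio $\alpha/2<1$ for the whole range $\alpha<2$. From there your machinery (Lemma~\ref{lem:chiequalzero}, Lemma~\ref{lem:chibound} with $u=|L_{k'}|$, and the binomial series, now with argument $\alpha/2$) goes through, and the bit shift $\Vert L_{k'}\Vert_1=\Vert L_k\Vert_1-|L_k|$ restores the factor $2^{-\Vert L_k\Vert_1}$. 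I would keep your computation but run it for $g$ and $k'$ on each half-interval rather than for $f$ and $k$ on $[0,1]$.
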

\begin{proof}
First we split $\hat{f}(k)$ into two parts:
\begin{align}\label{eqn:fhat}
    |\hat{f}(k)|&=\Bigl|\int_{0}^1f(x){\text{wal}_{k}(x)}\rd x\Bigr|\notag\\
&\leq
\Bigl|\int_0^{1/2}f(x){\text{wal}_{k}(x)}\rd x\Bigr|+\Bigl|\int_{1/2}^1f(x){\text{wal}_{k}(x)}\rd x\Bigr|.
\end{align}
Let $y=2x-1$ and $g(y)=f(y/2+1/2)$. For $k=k_1+2k_2+\cdots+2^{q-1} k_q$, let $k'=(k-k_1)/2$. Then
\begin{align*}
    \int_{1/2}^1f(x){\text{wal}_{k}(x)}\rd x&=\int_{1/2}^1f(x)(-1)^{\sum_{\ell=1}^q k_\ell x_\ell}\rd x\\
    &=(-1)^{k_1}\int_{1/2}^1f(x)(-1)^{\sum_{\ell=2}^q k_\ell x_\ell}\rd x\\
    &=\frac{(-1)^{k_1}}{2}\int_{0}^1g(y){\text{wal}_{k'}(y)}dy\\
    &=\frac{(-1)^{k_1}}{2} \sum_{r=0}^\infty\frac{g^{(r)}(0)}{r!} \int_{0}^1y^r {\text{wal}_{k'}(y)}dy\\
    &=\frac{(-1)^{k_1}}{2} \sum_{r=0}^\infty\frac{g^{(r)}(0)}{r!} \chi_{r,k'}.
\end{align*}
Now by Lemma~\ref{lem:chiequalzero}, $\chi_{r,k'}=0$ if $r< |L_{k'}|$.
By Lemma~\ref{lem:chibound} with $u=|L_{k'}|$ and $|g^{(r)}(0)|=|f^{(r)}(1/2)|/2^r\leq Ar!(\alpha/2)^r$,
\begin{align*}
   \Bigl|\frac{(-1)^{k_1}}{2}\sum_{r=0}^\infty\frac{g^{(r)}(0)}{r!} \chi_{r,k'}\Bigr|
   &\leq \sum_{r=|L_{k'}|}^\infty \frac{Ar!(\alpha/2)^r}{(r-|L_{k'}|+1)!}\Big(\prod_{w=1}^{|L_{k'}|}1+4^{-w+1}\Big)2^{-\Vert L_{k'} \Vert_{1}-|L_{k'}|-1} \\
   &\leq 3A2^{-\Vert L_{k'} \Vert_{1}-|L_{k'}|-1}\sum_{r=|L_{k'}|}^\infty \frac{r!(\alpha/2)^r}{(r-|L_{k'}|+1)!},
\end{align*}
where we have used 
$$\prod_{w=1}^{|L_{k'}|}1+4^{-w+1}< 2\exp\biggl(\,\sum_{w=2}^\infty 4^{-w}\biggr)<3.$$
Now 
\begin{align*}
\sum_{r=|L_{k'}|}^\infty
\frac{r!(\alpha/2)^r}{(r-|L_{k'}|+1)!}
&=(|L_{k'}|-1)!\sum_{r=|L_{k'}|}^\infty
{r \choose r-|L_{k'}|+1}(\alpha/2)^r\\
&=(|L_{k'}|-1)!(\alpha/2)^{|L_{k'}|-1}\sum_{r=1}^\infty
{r+|L_{k'}|-1 \choose r}(\alpha/2)^r\\
&\overset{(i)}{=}(|L_{k'}|-1)!(\alpha/2)^{|L_{k'}|-1}\Bigl(\frac{1}{(1-\alpha/2)^{|L_{k'}|}}-1\Bigr)\\
&=(|L_{k'}|)!\Bigl(\frac{\alpha/2}{1-\alpha/2}\Bigr)^{|L_{k'}|}\frac{1-(1-\alpha/2)^{|L_{k'}|}}{|L_{k'}|\alpha/2}\\
&\overset{(ii)}{\leq} (|L_{k'}|)!\Bigl(\frac{\alpha/2}{1-\alpha/2}\Bigr)^{|L_{k'}|}
\end{align*}
where $(i)$ uses the Taylor expansion of $(1-x)^{-n}$ around $x=0$ evaluated at $x=\alpha/2$ and $n=|L_{k'}|$ and $(ii)$ uses $1-(1-x)^{n}\leq nx$ for $x\geq 0$. Therefore
\begin{align*}
   \Bigl|\frac{(-1)^{k_1}}{2}\sum_{r=0}^\infty\frac{g^{(r)}(0)}{r!} \chi_{r,k'}\Bigr|
   \leq 3A(|L_{k'}|)!\Bigl(\frac{\alpha/2}{1-\alpha/2}\Bigr)^{|L_{k'}|}2^{-\Vert L_{k'} \Vert_{1}-|L_{k'}|-1}.
\end{align*}

From the definition of $L_k$, it is easy to see
$|L_{k}|-1\leq |L_{k'}|\leq |L_{k}|$ and
$$\Vert L_{k'} \Vert_{1}=\sum_{\ell\in L_{k'}}\ell= \sum_{\ell\in L_{k}}(\ell-1)=\Lknorm-|L_k|,$$
from which we get
$$\Bigl|\int_{1/2}^1f(x){\text{wal}_{k}(x)}\rd x\Bigr|
\leq 3A(|L_{k}|)!\Bigl(\frac{\alpha/2}{1-\alpha/2}\Bigr)^{|L_{k}|}2^{-\Vert L_k \Vert_{1}}.$$
We can bound $|\int_0^{1/2}f(x){\text{wal}_{k}(x)}\rd x|$ in a similar way.
Now the conclusion follows from equation \eqref{eqn:fhat}.
\end{proof}

\subsubsection*{Proof of Theorem~\ref{thm:errordecomposition}}

By the Walsh function decomposition,
$f(x)=\sum_{k=0}^\infty \hat{f}(k) \text{wal}_{k}(x)$
and so
$$\hat{\mu}_\infty-\mu=\sum_{k=1}^\infty \hat{f}(k) \frac{1}{2^m}\sum_{i=0}^{2^m-1} \text{wal}_{k}(x_i).$$
Now by equation (\ref{eqn:xequalMCiplusD}),
$$\text{wal}_{k}(x_i)=(-1)^{\sum_{\ell\in L_k}x_{i,\ell}}=(-1)^{ \sum_{\ell\in L_k}M(\ell,:) \vec{a}_i +\sum_{\ell\in L_k}D_\ell}.$$
If $\sum_{\ell\in L_k}M(\ell,:)=\bszero$, then
$$\frac{1}{2^m}\sum_{i=0}^{2^m-1} \text{wal}_{k}(x_i)=\frac{1}{2^m}\sum_{i=0}^{2^m-1}(-1)^{\sum_{\ell\in L_k}D_\ell}=(-1)^{\norm{D}}.$$
Otherwise at least one entry of $\sum_{\ell\in L_k}M(\ell,:)$ is nonzero. Because $C$ is nonsingular and $\vec{a}_i=C\vec{i}$,  $\{\vec{a}_i\mid 0\leq i<2^m\}=\{0,1\}^m$, and so
\begin{align*}
    \frac{1}{2^m}\sum_{i=0}^{2^m-1} \text{wal}_{k}(x_i)&=(-1)^{\norm{D}}\frac{1}{2^m}\sum_{i=0}^{2^m-1}(-1)^{\sum_{\ell\in L_k}M(\ell,:)\vec{a}_i }\\
    &=(-1)^{\norm{D}}\prod_{q=1}^m (1+(-1)^{\sum_{\ell\in L_k}M(\ell,q)})\\
    &=0.
\end{align*}
Therefore,
\begin{equation}\label{eqn:proofoftheorem1}
       \sum_{k=1}^{\infty} \hat{f}(k) \frac{1}{2^m}\sum_{i=0}^{2^m-1} \text{wal}_{k}(x_i)=\sum_{k=1}^{\infty}\bsone\biggl\{\sum_{\ell\in L_k}M(\ell,:)=\bszero\biggr\} \hat{f}(k) (-1)^{\norm{D}}.
\end{equation}
The conclusion follows once we define $B_{L_k}=\hat{f}(k)2^{\Lknorm}$ and
notice that $\{L_k\mid k\geq 1\}=\lset$. The bound on $B_{L_k}$ follows directly from Lemma~\ref{lem:Walshcoefficient}. $\Box$

\section{Proof of Lemma~\ref{lem:combinatorics}}\label{app:proof:lem:combinatorics}
Let $q(N)=|\{L\in \lset \mid \Lnorm = N\}|$. Each $L$ with $\Lnorm = N$ corresponds to a partition of $N$ into distinct positive integers. It is known from combinatorics that
\begin{align}\label{eq:fromcomb}
q(N)\sim\frac{1}{4\times3^{1/4}N^{3/4}}\exp\Bigl(\pi\sqrt{\frac{N}{3}}\Bigr)
\end{align}
where $\sim$ means asymptotically equivalent (the ratio of the two sides converges to $1$ as $N\to\infty$).
See note VII.24 in \cite{flaj:sedg:2009}. Because each $q(N)$ is positive and the sum $\sum_{n=1}^N q(n)$ grows to infinity as $N\to\infty$, 
sums of the first $N$ members of
the left hand side of~\eqref{eq:fromcomb}
are asymptotic to the corresponding
sums of the right hand side:
$$\bigl|\{L\in \lset \mid \Lnorm \leq N\}\bigr|=\sum_{n=1}^N q(n)\sim \sum_{n=1}^N \frac{1}{4\times 3^{1/4}n^{3/4}}\exp\Bigl(\pi\sqrt{\frac{n}{3}}\Bigr).$$
Because the function ${x^{-3/4}}\exp(\pi\sqrt{{x}/{3}})$ has positive derivative for $x> 3^{3/2}/(2\pi)\approx 0.82$, we have
\begin{align}\label{eqn:sandwich}
\int_1^{N} \frac{1}{x^{3/4}}\exp\Bigl(\pi\sqrt{\frac{x}{3}}\Bigr)\rd x
&\le \sum_{n=1}^N \frac{1}{n^{3/4}}\exp\Bigl(\pi\sqrt{\frac{n}{3}}\Bigr)\notag\\
&\le \int_1^{N+1} \frac{1}{x^{3/4}}\exp\Bigl(\pi\sqrt{\frac{x}{3}}\Bigr)\rd x
\end{align}
where the first inequality follows
from integrating $x^{-3/4}\exp(\pi\sqrt{x/4})\bsone\{x\ge1\}$
over $[0,N]$.
By the change of variable $y=(\pi^2 {x}/{3})^{1/4}$, we get
$$\int_1^{N} \frac{1}{x^{3/4}}\exp\Bigl(\pi\sqrt{\frac{x}{3}}\Bigr)\rd x
=4\Bigl(\frac{3}{\pi^2}\Bigr)^{1/4}\int_{0}^{(\pi^2 {N}/{3})^{1/4}}e^{y^2}dy+O(1).$$

We will use the Dawson function
\begin{align}\label{eq:defdaw}
\daw(z) = e^{-z^2}\int_0^ze^{y^2}\rd y
\end{align}
for $z\ge0$.
For large $z$ there is an asymptotic expansion
$$
\daw(z)\sim \frac1{2z}+\frac1{2^2z^3}+\frac{1\cdot 3}{2^3z^5}+\frac{1\cdot3\cdot5}{2^4z^7} + \cdots
$$
from~\cite{leth:wens:1991}.
The first order approximation $\daw(z)\sim 1/(2z)$ is enough
for our purposes. Using~\eqref{eq:defdaw} we can write
\begin{align*}\int_{0}^{(\pi^2 \frac{N}{3})^{1/4}}e^{y^2}dy&=\exp\Bigl(\pi \sqrt{\frac{N}{3}}\Bigr)
\daw\Bigl(\Bigl(\pi^2 \frac{N}{3}\Bigr)^{1/4}\Bigr)\\
&\sim
\frac{1}{2(\pi^2 {N}/{3})^{1/4}}\exp\Bigl(\pi \sqrt{\frac{N}{3}}\Bigr).
\end{align*}
So we conclude that
$$\int_1^{N} \frac{1}{x^{3/4}}\exp\Bigl(\pi\sqrt{\frac{x}{3}}\Bigr)\rd x\sim \frac{2\times{3^{1/2}}}{\pi N^{1/4}}\exp\Bigl(\pi \sqrt{\frac{N}{3}}\Bigr).$$
From equation (\ref{eqn:sandwich}), the sum has the same asymptotic growth rate. Hence
$$\bigl|\{L\in \lset \mid \Lnorm \leq N\}\bigr|
\sim \sum_{n=1}^N \frac{1}{4\times3^{1/4}n^{3/4}}\exp\Bigl(\pi\sqrt{\frac{n}{3}}\Bigr)\sim \frac{3^{1/4}}{2\pi N^{1/4}}\exp\Bigl(\pi\sqrt{\frac{N}{3}}\Bigr).$$
In other words,
$$\lim_{N\to\infty} N^{1/4}\exp\Bigl(-\pi\sqrt{\frac{N}{3}}\Bigr) \bigl|\{L\in \lset \mid \Lnorm \leq N\}\bigr|=\frac{3^{1/4}}{2\pi}.$$
Now define $\lambda={3(\log(2))^2}/{\pi^2}$. Let $N=\lambda m^2$ in the above equation and notice that $\pi\sqrt{N/3}={ m\log(2)}$,
$$\lim_{m\to\infty} \lambda^{1/4}\sqrt{m}2^{-m} \bigl|\{L\in \lset \mid \Lnorm \leq \lambda m^{2}\}\bigr|=\frac{3^{1/4}}{2\pi}.$$

A numerical calculation summing the PartitionsQ function of Mathematica from $1$ to $N$ shows that
$$\bigl|\{L\in \lset \mid \Lnorm \leq N\}\bigr|< \frac{0.242}{N^{1/4}}\exp\Bigl(\pi\sqrt{\frac{N}{3}}\Bigr)$$
for $1\leq N\leq 40{,}000$. Because $\lambda (512)^2\approx 38{,}284$, the above inequality shows that
\begin{align*}
\bigl|\{L\in \lset \mid \Lnorm \leq \lambda m^2\}\bigr|< \frac{0.242 \times 2^{m}}{\lambda^{\frac{1}{4}}\sqrt{m}}< \frac{0.4 \times 2^{m}}{\sqrt{m}}
\end{align*}
for $1\leq m\leq 512$. This completes the proof.
$\qed$

\section{Proof of Theorem~\ref{thm:Cpconvergencerate}}
\label{app:proof:thm:Cpconvergencerate}

Our proof strategy is essentially the same as that of Theorem~\ref{thm:convergencerate}. First we establish the counterpart of Theorem~\ref{thm:errordecomposition} when $f$ is only finitely differentiable. Recall that in Appendix A, we defined $L(j)$ to be the $j$th-largest element of $L$ and $\Vert L\Vert_{1,u}$ to be the sum of the largest $u$ elements of $L$.
\begin{lemma}\label{lem:Cperror}
Assume that $f\in C^p([0,1])$ for $p\geq 1$ and $V_\lambda(f^{(p)})<\infty$ for some $0<\lambda\leq 1$. Further assume that $\sup_{x\in[0,1]}|f^{(d)}(x)|\leq A$ for $1\leq d\leq \max(p-1,1)$. Then
\begin{align*}
\hat{\mu}_{\infty}-\mu&=\sum_{L\in \lset}\bsone\{|L|\leq p,\sum_{\ell\in L}M(\ell,:)=\bszero\}  B_{L} (-1)^{\norm{D(L)}} 2^{-\Lnorm}\\
&+\sum_{L\in \lset}\bsone\{|L|>p,\sum_{\ell\in L}M(\ell,:)=\bszero\}  B_{L} (-1)^{\norm{D(L)}} 2^{-\Vert L\Vert_{1,p}-\lambda L(p+1)},
\end{align*}
where $B_L$ is a coefficient depending on $L$ that satisfies
\begin{equation}\label{eqn:CpBLbounds}
    |B_L|\leq
\begin{cases}
    4V_\lambda(f^{(p)})+8A,&  \text{ if } |L|\leq p\\
    4 V_\lambda(f^{(p)}),&\text{ if } |L|>p.
    \end{cases}
\end{equation}
\end{lemma}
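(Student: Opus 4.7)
My plan is to mirror the proof of Theorem~\ref{thm:errordecomposition} and reuse every step that does not depend on smoothness of $f$. Starting from the Walsh expansion
$$\hat{\mu}_\infty-\mu=\sum_{k=1}^\infty \hat{f}(k)\,\frac{1}{2^m}\sum_{i=0}^{2^m-1}\walk(x_i),$$
the analysis of the inner sum from that proof carries over verbatim, since it depends only on $M$, $D$, and the nonsingularity of $C$: the inner sum vanishes unless $\sum_{\ell\in L_k}M(\ell,:)=\bszero$ and otherwise equals $(-1)^{\norm{D(L_k)}}$. Reindexing by $L=L_k$ over $\lset$ and separating $|L|\le p$ from $|L|>p$, the lemma reduces to establishing the Walsh-coefficient bounds
$$|\hat f(k)|\le(4V_\lambda(f^{(p)})+8A)\,2^{-\Lknorm}\quad\text{when}\quad|L_k|\le p,$$
$$|\hat f(k)|\le 4V_\lambda(f^{(p)})\,2^{-\Vert L_k\Vert_{1,p}-\lambda L_k(p+1)}\quad\text{when}\quad|L_k|>p,$$
after which $B_L$ is defined as $\hat f(k)$ times the corresponding positive power of $2$.

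To obtain these bounds I would split $[0,1]$ at $1/2$ and rescale each half to $[0,1]$ as in Lemma~\ref{lem:Walshcoefficient}, but replace the infinite Taylor series there by a finite Taylor polynomial plus integral remainder
$$f(y)=\sum_{r=0}^{p-1}\frac{f^{(r)}(0)}{r!}y^r+\frac{1}{(p-1)!}\int_0^y(y-t)^{p-1}f^{(p)}(t)\rd t.$$
Lemma~\ref{lem:chiequalzero} kills the polynomial contribution for every $r<|L_k|$, so when $|L_k|\le p$ only finitely many polynomial terms survive; Lemma~\ref{lem:chibound} applied with $u=|L_k|$, together with the hypothesis $\sup|f^{(d)}|\le A$ for $1\le d\le\max(p-1,1)$, bounds each of them by a multiple of $A\,2^{-\Lknorm}$, summing to the $8A$ piece of the stated bound. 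When $|L_k|>p$ every polynomial term vanishes and the whole of $\hat f(k)$ comes from the remainder.

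The main obstacle is the Walsh-coefficient bound for the remainder $R_p(y)=\int_0^y(y-t)^{p-1}f^{(p)}(t)\rd t/(p-1)!$ in the $|L_k|>p$ case, where the exponent $\Vert L_k\Vert_{1,p}+\lambda L_k(p+1)$ has to be produced rather than the weaker $\Vert L_k\Vert_{1,p}+p$ one would get from a literal application of Lemma~\ref{lem:chibound} at $u=p$. For this I plan to adapt the inductive proof of Lemma~\ref{lem:chibound} in the spirit of Chapter~14 of \cite{dick:pill:2010}: integration by parts $p$ times moves derivatives onto the Walsh kernel and extracts the factor $2^{-\Vert L_k\Vert_{1,p}}$, one $2^{-L_k(j)}$ for each of the top $p$ elements of $L_k$. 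The remaining integral pairs $f^{(p)}$ with a Walsh function whose residual dyadic scale is $2^{-L_k(p+1)}$; bounding this by the $\lambda$-variation $V_\lambda(f^{(p)})$ rather than the full variation turns a $2^{-L_k(p+1)}$ factor into the $2^{-\lambda L_k(p+1)}$ factor appearing in~\eqref{eqn:CpBLbounds}. Once that Walsh-coefficient bound is in place, the explicit constants $4V_\lambda(f^{(p)})+8A$ and $4V_\lambda(f^{(p)})$ drop out of summing the geometric and $r!/(r-u+1)!$ factors exactly as at the end of the proof of Lemma~\ref{lem:Walshcoefficient}.
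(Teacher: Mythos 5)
Your reduction is exactly the paper's: the Walsh expansion and the evaluation of $\frac1{2^m}\sum_i \walk(x_i)$ carry over verbatim from the proof of Theorem~\ref{thm:errordecomposition} (equation~\eqref{eqn:proofoftheorem1}), so the lemma collapses to the two Walsh-coefficient bounds you state, with $B_{L_k}=\hat f(k)$ times the appropriate power of $2$. Where you diverge is in how those coefficient bounds are obtained. The paper simply cites \citet[Theorem 14.15]{dick:pill:2010}: its first part gives $|\hat f(k)|\le 4V_\lambda(f^{(p)})2^{-\Vert L_k\Vert_{1,p}-\lambda L_k(p+1)}$ for $|L_k|>p$ outright, and its other parts give the $|L_k|\le p$ bound after controlling the lower-order terms by $A$ (with the small wrinkle that for $|L_k|=p$ one bounds $\int_0^1 f^{(p)}$ by $|f^{(p-1)}(1)-f^{(p-1)}(0)|\le 2A$ rather than by $\sup|f^{(p)}|$, which is not assumed finite beyond $V_\lambda$ — this is where the paper's ``$+2$'' and hence the constant $8A$ come from; your sketch should make sure the remainder-term bound covers this case). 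You instead propose to rederive the coefficient decay by adapting the recursion of Lemma~\ref{lem:chibound} to a Taylor-polynomial-plus-integral-remainder splitting. That is a legitimate route — it is essentially how Theorem 14.15 is proved in Dick and Pillichshammer — and you correctly identify that the only genuinely hard step is extracting $2^{-\lambda L_k(p+1)}$ from the remainder via the $\lambda$-variation, which is precisely the content of the cited theorem; but that step is the entire technical substance of the argument and is left here as a plan rather than a proof. Two smaller remarks: the split at $1/2$ is unnecessary in this setting (it was needed in Lemma~\ref{lem:Walshcoefficient} only because the analyticity hypothesis is centered at $1/2$; here the hypotheses are uniform on $[0,1]$ and one can expand at $0$ directly, as the cited theorem does), and if you do split you must track how $V_\lambda$ and the index sets $L_{k'}$ transform under the rescaling, which costs bookkeeping without buying anything.
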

\begin{proof}
If $|L_k|>p$, then by the first part of
  \citet[Theorem 14.15]{dick:pill:2010}, with $b=2$,
$$|\hat{f}(k)|\leq 4 V_\lambda(f^{(p)})2^{-\Vert L\Vert_{1,p}-\lambda L(p+1)}.$$

For $|L_k|=p$ and $|L_k|>p$, the second and third parts, respectively, of   \citet[Theorem 14.15]{dick:pill:2010} apply. We will use the larger upper bound from the third part. Because we assume that $\sup_{x\in[0,1]}|f^{(r)}(x)|\leq A$ for $1\leq r\leq \max(p-1,1)$,
\begin{align*}
    |\hat{f}(k)|&\leq 2^{-\Lknorm}\Big(4 V_\lambda(f^{(p)})2^{-\lambda L_k(p)}+3\sum_{r=|L_k|}^{p-1}\frac{|f^{(r)}(0)|}{(r-|L_k|+1)!}+\Bigl|\int_0^1 f^{(|L_k|)}(x)\rd x\Bigr|\Big)\\
    &\leq 2^{-\Lknorm}\Big(4 V_\lambda(f^{(p)})+3A\sum_{r=1}^{\infty}\frac{1}{r!}\\
    &\phe\ +\min\Bigl(\sup_{x\in[0,1]}|f^{(|L_k|)}(x)|,|f^{(|L_k|-1)}(1)-f^{(|L_k|-1)}(0)|\Bigr)\Big)\\
    &< (4V_\lambda(f^{(p)})+8A) 2^{-\Lknorm}
\end{align*}
because $3(e-1)+{2}<8$. We add $2$ instead of $1$ here to handle the case $|L_k|=p$, where $|f^{(p-1)}(x)|<A$ but $|f^{(p)}(x)|$ might not be smaller than $A$.

The conclusion follows once we use this estimate for $\hat f(k)$ in equation (\ref{eqn:proofoftheorem1}) and define $B_{L_k}=\hat{f}(k)2^{\Lknorm}$ if $|L_k|\leq p$ and $B_{L_k}=\hat{f}(k)2^{\Vert L_k\Vert_{1,p}+\lambda L_k(p+1)}$ if $|L_k|>p$.
\end{proof}

Next we prove the counterpart of Lemma~\ref{lem:concentration}. To shorten some lengthy expressions we use the notation \begin{align}\label{eq:lsetm}
\lset(M)=\Bigl\{L\in\lset\bigm| \sum_{\ell\in L}M(\ell,:)=\bszero\Bigr\}.
\end{align}
\begin{lemma}\label{lem:Cprelaxedver}
In random linear scrambling, when $m\geq 1$, for any $0\leq \epsilon<1$
\begin{align}\label{eq:boundp1}
&\Pr\Bigl(\min\bigl\{\Lnorm\mid L\in \lset(M), |L|\leq p 
\}\leq 2^{\frac{(1-\epsilon)m}{p}}\Bigr)
< (e-1) 2^{-\epsilon m},
\end{align}
and
\begin{multline}\label{eq:boundp2}
\Pr\Bigl(\min\bigl\{\Vert L\Vert_{1,p}+\lambda L(p+1)\bigm| L\in \lset(M), |L|> p
\bigr\}
\leq (1-\epsilon)(p+\lambda)m\Bigr)
< C_{p,\lambda} 2^{-\epsilon m}
\end{multline}
where $C_{p,\lambda}$ is given by equation (\ref{eqn:Cplambda}).
\end{lemma}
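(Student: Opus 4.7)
The plan is to follow the template of Lemma~\ref{lem:concentration}: take a union bound over the relevant sets $L$, use the per-set bound $\Pr(L\in\lset(M))\le 2^{-m}$ from~\eqref{eqn:probitszero}, and exhibit a combinatorial count of the qualifying $L$'s that scales like $2^{(1-\epsilon)m}$, which cancels against the $2^{-m}$ factor to leave a pure $2^{-\epsilon m}$.

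For~\eqref{eq:boundp1}, set $N=2^{(1-\epsilon)m/p}$. Any $L$ in the event satisfies $|L|\le p$ and $\max L\le\Lnorm\le N$, so $L$ is a nonempty subset of $\{1,\dots,\lfloor N\rfloor\}$ of cardinality at most $p$. Counting by size,
\begin{align*}
\bigl|\{L\in\lset:|L|\le p,\ \Lnorm\le N\}\bigr|
\le\sum_{k=1}^{p}\binom{\lfloor N\rfloor}{k}
\le N^{p}\sum_{k=1}^{p}\frac{1}{k!}
\le(e-1)N^{p}.
\end{align*}
Multiplying by $2^{-m}$ and using $N^{p}=2^{(1-\epsilon)m}$ yields the claimed $(e-1)2^{-\epsilon m}$.

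For~\eqref{eq:boundp2}, write $K=(1-\epsilon)(p+\lambda)m$ and parameterize each candidate $L$ by its $(p+1)$-th largest element $t=L(p+1)$. Split $L$ into an upper part $U=\{L(1),\dots,L(p)\}\subset\{t+1,t+2,\dots\}$ and a lower part $L'=\{L(p+2),\dots\}\subset\{1,\dots,t-1\}$. The lower part is an arbitrary subset of $\{1,\dots,t-1\}$, giving at most $2^{t-1}$ possibilities. Because $\Vert L\Vert_{1,p}\ge pt+p(p+1)/2$ and $\Vert L\Vert_{1,p}+\lambda t\le K$, one obtains $t\le(1-\epsilon)m$. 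Shifting by $t$ (writing $L(i)=t+e_i$ with $e_1>\dots>e_p\ge 1$), the sum constraint becomes $\sum_{i=1}^{p}e_i\le K-(p+\lambda)t$. I would bound the number of such $p$-subsets of $\natu$ by $(K-(p+\lambda)t)^{p}/(p!)^{2}$, using the partition-theoretic estimate that partitions of $n$ into $p$ distinct positive parts number at most $\binom{n-1}{p-1}/p!$ (each partition corresponds to $p!$ compositions, and compositions of $n$ into $p$ positive parts number $\binom{n-1}{p-1}$).

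Summing over $t$ and substituting $u=\lfloor(1-\epsilon)m\rfloor-t$, which turns $K-(p+\lambda)t$ into $(p+\lambda)u$ and $2^{t-1}$ into $2^{(1-\epsilon)m-u-1}$, gives
\begin{align*}
\sum_{t=1}^{\lfloor(1-\epsilon)m\rfloor}2^{t-1}\,\frac{(K-(p+\lambda)t)^{p}}{(p!)^{2}}
\le\frac{(p+\lambda)^{p}\,2^{(1-\epsilon)m}}{2\,(p!)^{2}}\sum_{u=0}^{\infty}\frac{u^{p}}{2^{u}}.
\end{align*}
The series on the right converges to a constant depending only on $p$, so multiplying by $2^{-m}$ produces the required bound with $C_{p,\lambda}$ of the form $(p+\lambda)^{p}(p!)^{-2}\sum_{k\ge1}k^{p}/2^{k}$, consistent with the expression for $C_3$ in Corollary~\ref{cor:medianofmeansholder}.

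The main obstacle is producing an upper-part count whose polynomial dependence on $t$ is mild enough that, once weighted by $2^{t-1}$ and summed, all $m$-dependence collapses into a single factor $2^{(1-\epsilon)m}$. The naive bound that counts all $p$-subsets of $\{1,\dots,K-(p+\lambda)t\}$ gives only $(K-(p+\lambda)t)^{p}/p!$, which leaves a residual $\mathrm{poly}(m)$ after summation; the partition-theoretic refinement that exploits the orderedness of the $e_i$'s is what supplies the additional $1/p!$ needed to make the final constant depend only on $p$ and $\lambda$.
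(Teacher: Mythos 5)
Your proof is correct and follows essentially the same route as the paper's: a union bound with $\Pr(L\in\lset(M))\le 2^{-m}$ against a count of the qualifying sets, using $\binom{\lfloor N\rfloor}{k}\le N^k/k!$ summed over cardinalities for the $|L|\le p$ case and, for $|L|>p$, conditioning on $t=L(p+1)$, paying $2^{t-1}$ for the tail below $t$ and the distinct-parts partition bound $\binom{n-1}{p-1}/p!$ for the shifted top $p$ elements, exactly as in the paper. The only blemish is the substitution $u=\lfloor(1-\epsilon)m\rfloor-t$, under which $K-(p+\lambda)t$ equals $(p+\lambda)(u+\delta)$ with $\delta\in[0,1)$ rather than $(p+\lambda)u$ as you wrote; bounding instead by $(p+\lambda)^p(u+1)^p$ and reindexing removes your spurious factor of $1/2$ and lands on exactly the stated $C_{p,\lambda}$.
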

\begin{proof}
We will make frequent use of this quantity:
$$q(N,d)=\bigl|\bigl\{L\in \lset \bigm| |L|=d, \Lnorm = N \bigr\}\bigr|.$$
Now let
$$
\begin{pmatrix}
K(1)\\
K(2)\\
\vdots\\
K(d-1)\\
K(d)
\end{pmatrix}
=\begin{pmatrix}
L(1)\\
L(2)\\
\vdots\\
L(d-1)\\
L(d)
\end{pmatrix}
-\begin{pmatrix}
d-1\\
d-2\\
\vdots\\
1\\
0
\end{pmatrix}.
$$
This provides a bijection between strictly decreasing positive integers $L(j)$ that sum to $N$ and non-increasing positive integers $K(j)$ that sum to $N-d(d-1)/2$.
It follows that $q(N,d)$ equals  the number of ways to partition $N-{d(d-1)}/{2}$ into $d$ positive integers. Hence by \citet[4.2.6]{andrews1984theory}
\begin{equation}\label{eqn:qNd}
    q(N,d)\leq
    p_d(N-d(d-1)/2)=
    \frac{1}{d!}{N-1\choose d-1},
\end{equation}
where $p_M(n)$ is Andrews' notation for the number of partitions of $n$ into $M$ positive integers. Therefore
\begin{align}\label{eq:boundqsum}
\begin{split}
|\{L\in \lset \mid |L|=d, \Lnorm \leq N \}|&=\sum_{n=1}^N q(n,d)\leq \sum_{n=1}^N \frac{1}{d!}{n-1\choose d-1}\\&=\frac{1}{d!}{N\choose d}\le \frac{N^d}{(d!)^2}
\end{split}
\end{align}
where we have used the `upper summation' identity for binomial coefficients
\citep{grahamconcrete} to simplify the sum over $n$.

Now we take the union bound and use  $\Pr(\sum_{\ell \in L} M(\ell,:)=\bszero)\leq 2^{-m}$, to get
\begin{align*}
    &\Pr\Bigl(\min\bigl\{\Lnorm\mid L\in \lset(M), |L|\leq p
    \}\leq 2^{\frac{(1-\epsilon)m}{p}}\Bigr)\\
    &\leq \frac{1}{2^m} \sum_{d=1}^p\frac{2^{\frac{d(1-\epsilon)m}{p}}}{(d!)^2}
        \leq \frac{1}{2^m} \sum_{d=1}^p\frac{2^{m(1-\epsilon)}}{(d!)^2}
        \leq \frac{1}{2^{\epsilon m}} \sum_{d=1}^\infty \frac{1}{d!}=(e-1) 2^{-\epsilon m},
\end{align*}
establishing~\eqref{eq:boundp1}.

For~\eqref{eq:boundp2} we must count the sets $L$ with $\Vert L\Vert_{1,p}+\lambda L(p+1)\leq N$. We make a separate count for each value of $L(p+1)$. Let $L(p+1)=k$. Similar to the bijection above, we can make a one to one correspondence between $L(1),\dots,L(p)$ that are strictly decreasing, larger than  $L(p+1)=k$ and sum to $\Vert L\Vert_{1,p}\le N-\lceil \lambda k\rceil$ and integers $L(1)-k>L(2)-k>\dots>L(p)-k>0$
with a sum at most $N-\lceil\lambda p\rceil-kp$. The smallest relevant $|L|$ is $p+1$. For that we get
\begin{align}\label{eq:boundlambdasum}
    &\bigl|\{L\in \lset \bigm| |L|=p+1, L(p+1)=k, \Vert L\Vert_{1,p}+\lambda L(p+1) \leq N\}\bigr|\nonumber\\
    &\leq \sum_{n=1}^N q(n-kp-\lceil\lambda k\rceil,p)=\sum_{n=1}^{N-kp-\lceil\lambda k\rceil} q(n,p)\leq \frac{(N-kp-\lceil\lambda k\rceil)^p}{(p!)^2}
\end{align}
by the bound from~\eqref{eq:boundqsum}.
Now if we allow $|L|>p$, the largest $|L|$ could be is $p+k-1$ because $L(p+1)=k$ and $L(j)$ are strictly decreasing. Each of the distinct values $k-1$ through $1$ could either appear or not appear among the $L(j)$ for $j>p+1$. Those that do appear must do so in strictly decreasing order. As a result, including cases with $|L|>p$ raises the count by a factor of $2^{k-1}$.
Summing over $k$ we get
\begin{align*}
    &\phe\ \, \bigl|\bigl\{L\in \lset \bigm| |L|>p, \Vert L\Vert_{1,p}+\lambda L(p+1) \leq N\bigr\}\bigr|\\
    &\leq \sum_{k=1}^{N} \one\{kp+\lceil\lambda k\rceil\leq N\} \frac{(N-kp-\lceil\lambda k\rceil)^p}{(p!)^2} 2^{k-1}.
\end{align*}
Letting $K^*=\lfloor {N}/({p+\lambda})\rfloor$ and $k'=K^*-k+1$,
\begin{align*}
   &\phe\ \,\sum_{k=1}^{N} \one\{kp+\lceil\lambda k\rceil\leq N\} (N-kp-\lceil\lambda k\rceil)^p 2^{k-1} \\
   & \leq \sum_{k'=1}^{K^*} (N-(p+\lambda)(K^*-k'+1))^p 2^{K^*-k'}\\
   & \leq 2^{K^*}(p+\lambda)^{p} \sum_{k'=1}^{\infty} k'^p 2^{-k'}.
\end{align*}
The last sum is clearly convergent. Therefore
$$|\{L\in \lset \mid |L|>p, \Vert L\Vert_{1,p}+\lambda L(p+1) \leq N\}|\leq C_{p,\lambda} 2^{\frac{N}{p+\lambda}}$$
where
\begin{equation}\label{eqn:Cplambda}
    C_{p,\lambda}=\frac{(p+\lambda)^p}{(p!)^2}\sum_{k=1}^{\infty} \frac{k^p}{2^k}.
\end{equation}
Applying the union bound and using $\Pr(\sum_{\ell \in L} M(\ell,:)=\bszero)\leq 2^{-m}$,
\begin{align*}
    &\Pr\Bigl(\min\bigl\{\Vert L\Vert_{1,p}+\lambda L(p+1)\bigm| L\in \lset(M), |L|> p
    \bigr\}\leq (1-\epsilon)(p+\lambda)m\Bigr)\\
&\leq \frac{1}{2^m} C_{p,\lambda} 2^{\frac{(1-\epsilon)(p+\lambda)m}{p+\lambda}}=C_{p,\lambda}2^{-\epsilon m}.\qedhere\end{align*}
\end{proof}

\subsection*{Proof of Theorem~\ref{thm:Cpconvergencerate}}
Using $\lset(M)$ from~\eqref{eq:lsetm}
define the event \begin{align*}
H&=\Big\{\min\{\Lnorm\mid L\in \lset(M), |L|\leq p
\}> 2^{\frac{(1-\epsilon)m}{p}}\Big\}\\
&\phe\,\bigcap \Big\{\min\{\Vert L\Vert_{1,p}+\lambda L(p+1)\mid L\in \lset(M), |L|> p
> (1-\epsilon)(p+\lambda)m\Big\}.
\end{align*}
Lemma~\ref{lem:Cprelaxedver} shows that $\Pr(H^c)<(C_{p,\lambda}+e-1)2^{-\epsilon m}$.

Now as in equations (\ref{eqn:sumoverN}) and (\ref{eqn:PH}),
\begin{multline*}
    \e\bigl(\var(\hat{\mu}_{\infty}-\mu\giv M)\mid H\bigr)\leq \frac{2^{-m}}{\Pr(H)}\biggl(\,\sum_{L\in \lset,|L|\leq p}\one\Bigl\{\Lnorm>2^{\frac{(1-\epsilon)m}{p}}\Bigr\}B^2_{L}  4^{-\Lnorm}\\
    +\sum_{L\in \lset,|L|> p}\one\Bigl\{\frac{\Vert L\Vert_{1,p}+\lambda L(p+1)}{p+\lambda}>(1-\epsilon)m\Bigr\}B^2_{L}  4^{-\Vert L\Vert_{1,p}-\lambda L(p+1)} \biggr).
\end{multline*}
Lemma~\ref{lem:Cperror}
provides two uniform bounds on $B_L^2$
depending on whether $|L|\le p$ or $|L|>p$.
We will bound the sum above exclusive
of the $B_L^2$ factors for now and
then multiply in those factors below.
We can easily bound the first sum above by removing the restriction $|L|\leq p$ and using equation (\ref{eqn:Bidar}) to bound the number of $L\in\lset$ with $|L|=N$. This yields
\begin{align*}
    \sum_{L\in \lset,|L|\leq p}\one\{\Lnorm>2^{\frac{(1-\epsilon)m}{p}}\} 4^{-\Lnorm}&=\sum_{N=\bigl\lceil 2^{\frac{(1-\epsilon)m}{p}}\bigr\rceil}^\infty \frac{1}{4^N} |\{L\in \lset\mid |L|\leq p, \Lnorm=N\}|\\
    &\leq \sum_{N=\bigl\lceil 2^{\frac{(1-\epsilon)m}{p}}\bigr\rceil}^\infty \frac{1}{4^N} \frac{\pi \exp\Bigl(\pi\sqrt{\frac{N}{
    3}}\Bigr)}{2\sqrt{3}}\\
    &= \frac{\pi}{2\sqrt{3}} \sum_{N=\bigl\lceil 2^{\frac{(1-\epsilon)m}{p}}\bigr\rceil}^\infty \exp\Bigl(\pi \sqrt{\frac{N}{3}}-\log(4)N\Bigr)\\
    &\overset{(\mathrm{i})}{\leq } \frac{\pi}{2\sqrt{3}} \sum_{N=\bigl\lceil 2^{\frac{(1-\epsilon)m}{p}}\bigr\rceil}^\infty \exp\Bigl(\frac{\pi^2}{12\log(2)}+\log(2)N-\log(4)N\Bigr)\\
    &\leq \frac{\pi}{2\sqrt{3}} \exp\Bigl(\frac{\pi^2}{12\log(2)}\Bigr) \sum_{N=\bigl\lceil 2^{\frac{(1-\epsilon)m}{p}}\bigr\rceil}^\infty \frac{1}{2^N}\\
    &=\frac{\pi}{\sqrt{3}} \exp\Bigl(\frac{\pi^2}{12\log(2)}\Bigr) 2^{-\bigl\lceil 2^{\frac{(1-\epsilon)m}{p}}\bigr\rceil}\\
    &\leq 6 \times 2^{-2^{\frac{(1-\epsilon)m}{p}}}
\end{align*}
where (i) follows from the inequality $a\leq a^2/c+c/4$ with $a=\pi \sqrt{N/3}$ and $c=\pi^2/(3\log(2))$.

To bound the second sum, we consider separate cases for each value of $L(p+1)$ and $\Vert L\Vert_{1,p}$. The argument is similar to the one used in Lemma~\ref{lem:Cprelaxedver}, but not so similar that we could just cite that lemma. First
\begin{multline*}
\sum_{L\in \lset,|L|> p}\one\biggl\{\frac{\Vert L\Vert_{1,p}+\lambda L(p+1)}{p+\lambda}>(1-\epsilon)m\biggr\}  4^{-\Vert L\Vert_{1,p}-\lambda L(p+1)}\\
    =\sum_{k=1}^\infty  \sum_{N=\lceil(p+\lambda)(1-\epsilon)m-\lambda k \rceil}^\infty \frac{1}{4^{N+\lambda k}}\bigl|\bigl\{L\in \lset \bigm| |L|>p, L(p+1)=k, \Vert L\Vert_{1,p}= N\bigr\}\bigr|.
\end{multline*}
The bijection used to derive equation~\eqref{eq:boundlambdasum} shows that
\begin{multline*}\bigl|\{L\in \lset \bigm| |L|=p+1, L(p+1)=k, \Vert L\Vert_{1,p}= N\}\bigr|\\=\bigl|\{L\in \lset \bigm| |L|=p, \Vert L\Vert_{1,p}= N-kp\}\bigr|,
\end{multline*}
and so by equation (\ref{eqn:qNd})
\begin{align*}
&\phe\, \bigl|\{L\in \lset \bigm| |L|=p+1, L(p+1)=k, \Vert L\Vert_{1,p}= N\}\bigr|\\&=q(N-kp,p)\leq \frac{1}{p!}{N-kp-1\choose p-1}.\end{align*}
The same argument used below equation~\eqref{eq:boundlambdasum} shows that allowing $|L|>p$ raises the count by a factor of $2^{k-1}$.
Hence
\begin{align*}
    &\sum_{k=1}^\infty  \sum_{N=\lceil(p+\lambda)(1-\epsilon)m-\lambda k \rceil}^\infty \frac{1}{4^{N+\lambda k}}|\{L\in \lset \mid |L|>p, L(p+1)=k, \Vert L\Vert_{1,p}= N\}|\\
    &\leq \sum_{k=1}^\infty  \sum_{N=\lceil(p+\lambda)(1-\epsilon)m-\lambda k \rceil}^\infty \frac{1}{4^{N+\lambda k}}\frac{1}{p!}{N-kp-1\choose p-1}2^{k-1}\\
    &\overset{(\mathrm{i})}{\leq} \frac{1}{p!}\sum_{k=1}^\infty \frac{2^{k-1}}{4^{(p+\lambda) k}} \sum_{N=\max(\lceil(p+\lambda)(1-\epsilon)m-\lambda k \rceil,(k+1)p)}^\infty \frac{1}{4^{N-kp}}\frac{(N-kp)^{p-1}}{(p-1)!}\\
   & \overset{(\mathrm{ii})}{\leq} \frac{1}{p!(p-1)!}\sum_{k=1}^\infty \frac{2^{k-1}}{4^{(p+\lambda) k}}\sum_{N=\max(\lceil(p+\lambda)((1-\epsilon)m- k) \rceil,p)}^\infty \frac{N^{p-1}}{4^N}
\end{align*}
where (i) uses ${N-kp-1\choose p-1}=0$ if $N<(k+1)p$ and ${N-kp-1\choose p-1}<(N-kp)^{p-1}/(p-1)!$ if $N\geq (k+1)p$ and (ii) shifts the index $N$ by $kp$.
Letting $k'=\lfloor m(1-\epsilon)\rfloor-k$, we see that
\begin{align*}
& \sum_{k=1}^\infty \frac{2^{k-1}}{4^{(p+\lambda) k}}\sum_{N=\max(\lceil(p+\lambda)((1-\epsilon)m- k) \rceil,p)}^\infty \frac{N^{p-1}}{4^N}\\
&\le\frac{2^{\lfloor m(1-\epsilon)\rfloor-1}}{4^{(p+\lambda) \lfloor m(1-\epsilon)\rfloor}}\sum_{k'=-\infty}^{\lfloor m(1-\epsilon)\rfloor-1} \frac{2^{-k'}}{4^{-(p+\lambda)k'}}\sum_{N=\max(\lceil k'(p+\lambda) \rceil,p)}^\infty \frac{N^{p-1}}{4^N}\\
    &\leq \frac{2^{\lfloor m(1-\epsilon)\rfloor-1}}{4^{(p+\lambda) \lfloor m(1-\epsilon)\rfloor}}\sum_{k'=-\infty}^\infty \frac{2^{-k'}}{4^{-(p+\lambda)k'}}\sum_{N=\max(\lceil k'(p+\lambda) \rceil,p)}^\infty \frac{N^{p-1}}{4^N}\\
    &=\frac{2^{\lfloor m(1-\epsilon)\rfloor-1}}{4^{(p+\lambda) \lfloor m(1-\epsilon)\rfloor}}\sum_{N=p}^\infty \frac{N^{p-1}}{4^N} \sum_{k'=-\infty}^{\lfloor \frac{N}{p+\lambda}\rfloor}2^{(2p+2\lambda-1)k'}\\
    &= \frac{2^{\lfloor m(1-\epsilon)\rfloor-1}}{4^{(p+\lambda) \lfloor m(1-\epsilon)\rfloor}}\sum_{N=p}^\infty \frac{N^{p-1}}{4^N}2^{(2p+2\lambda-1)\lfloor\frac{N}{p+\lambda}\rfloor} \frac{2^{2p+2\lambda-1}}{2^{2p+2\lambda-1}-1}\\
     &\leq \frac{2^{ m(1-\epsilon)-1}}{4^{(p+\lambda)  (m(1-\epsilon)-1)}} \sum_{N=p}^\infty N^{p-1}(2^{-\frac{1}{p+\lambda}})^N \times 2
\end{align*}
where the last inequality uses $2^{2p+2\lambda-1}\geq 2$ because $p\geq 1$ and $\lambda>0$.

Therefore,
\begin{multline*}
\sum_{L\in \lset,|L|> p}\one\biggl\{\frac{\Vert L\Vert_{1,p}+\lambda L(p+1)}{p+\lambda}>(1-\epsilon)m\biggr\}  4^{-\Vert L\Vert_{1,p}-\lambda L(p+1)}\\\leq \frac{4^{p+\lambda}2^{ m(1-\epsilon)}}{p!(p-1)!4^{(p+\lambda)  m(1-\epsilon)}} \sum_{N=p}^\infty N^{p-1}\bigl(2^{-\frac{1}{p+\lambda}}\bigr)^N.
\end{multline*}

Using the bounds on $|B_L|$ from Lemma~\ref{lem:Cperror}, we conclude that
\begin{multline}\label{eqn:lengthybound}
  \e\bigl(\var(\hat{\mu}_{\infty}-\mu\giv M)\mid H\bigr)\leq \bigl(4V_\lambda(f^{(p)})+8A\bigr)^2\frac{6}{\Pr(H)2^m} 2^{-2^{\frac{(1-\epsilon)m}{p}}} \\
  +( 4 V_\lambda(f^{(p)}))^2 \frac{4^{p+\lambda}2^{ m(1-\epsilon)}}{p!(p-1)!\Pr(H)2^m}\biggl(\,\sum_{N=p}^\infty N^{p-1}(2^{-\frac{1}{p+\lambda}})^N\biggr) 4^{-(p+\lambda) (1-\epsilon) m}.
\end{multline}

The conclusion follows
using equation (\ref{eqn:Chebshev}) from the proof of Theorem~\ref{thm:convergencerate} with the event $H$ from this proof for which $$\Pr(H^c)<(C_{p,\lambda}+e-1)2^{-\epsilon m}$$
and choosing the constant
\begin{multline*}
c=\frac{\sqrt{6}(4V_\lambda(f^{(p)})+8A)}{\sqrt{\eta}}2^{-2^{\frac{(1-\epsilon)m}{p}-1}}\\+\frac{2^{p+\lambda+2} V_\lambda(f^{(p)})}{\sqrt{p}(p-1)!\sqrt{\eta}} \biggl(\,\sum_{N=p}^\infty N^{p-1}\Bigl(\frac{1}{2}\Bigr)^{\frac{N}{p+\lambda}}\biggr)^{\frac{1}{2}} 2^{-(p+\lambda) (1-\epsilon) m}.\end{multline*}
where we used the inequality $\sqrt{a+b}\leq \sqrt{a}+\sqrt{b}$ and took square roots of the two long expressions in equation~\eqref{eqn:lengthybound} separately to make $c$ look simpler.
\end{document}